\documentclass[fontsize=11pt, headings=normal, headinclude=true, DIV=9, 
paper=a4, pagesize, cleardoublepage=current, BCOR=10mm, fleqn,
numbers=noenddot]{scrartcl}
\pdfoutput=1
\usepackage{lmodern}

\setkomafont{sectioning}{\normalcolor\bfseries}
\recalctypearea 
\setcapindent{0pt}

  \usepackage[english]{babel}  
  \usepackage[numbers,square]{natbib}
  \usepackage{mathrsfs}
  \usepackage{amsmath}
  \usepackage{amssymb}
  \usepackage{amsthm}
  \usepackage{enumitem}
  \usepackage[small,nohug,heads=LaTeX]{diagrams}
  \usepackage[utf8x]{inputenc}    
  \usepackage{lmodern}
  \usepackage[T1]{fontenc}         
  \usepackage{graphicx}            
  \usepackage[english=british]{csquotes} 

\usepackage[pdftex,plainpages=false,pdfpagelabels]
{hyperref}                                
 	\hypersetup{
 		pdftitle={The adiabatic limit of Schrödinger operators on 
fibre bundles},
  		pdfauthor={Jonas lampart and Stefan Teufel},
    	colorlinks,
  		linkcolor=black,
  		filecolor=black,
  		urlcolor=black,	
  		citecolor=black,
  		plainpages=false,
  		hypertexnames=false
}


                          
\setlength{\mathindent}{1.5em}
\newtheorem{thm}{Theorem}[section]
\newtheorem{prop}[thm]{Proposition}
\newtheorem{lem}[thm]{Lemma}
\newtheorem{cor}[thm]{Corollary}

\newtheorem*{thm*}{Theorem}
\newtheorem*{Kor*}{Korollar}

\theoremstyle{definition}
\newtheorem{definition}[thm]{Definition}
\newtheorem{exm}[thm]{Example}
\newtheorem{rem}[thm]{Remark}
\newtheorem{cond}{Condition}

\newcommand{\norm}[1]{\ensuremath{\lVert #1 \rVert}}
\newcommand{\abs}[1]{\ensuremath{\lvert #1 \rvert}}
\newcommand{\eps}{\varepsilon}

\newcommand{\ud}{\mathrm{d}}
\newcommand{\ue}{\mathrm{e}}
\newcommand{\ui}{\mathrm{i}}
\newcommand{\R}{\mathbb{R}}
\newcommand{\C}{\mathbb{C}}
\newcommand{\N}{\mathbb{N}}
\newcommand{\eff}{\mathrm{eff}}

\DeclareMathOperator{\grad}{grad}
\DeclareMathOperator{\divg}{div}
\DeclareMathOperator{\dist}{dist}

\DeclareMathOperator{\tr}{tr}
\DeclareMathOperator{\supp}{supp}

\newlist{teile}{enumerate}{2}
\setlist[teile,1]{label=\arabic*),fullwidth}
\setlist[teile,2]{label=(\alph*), fullwidth}


\begin{document}
\title{The adiabatic limit of Schr\"odinger operators on fibre bundles}
\author{Jonas Lampart \thanks{PSL Research University \& CEREMADE (UMR CNRS 7534), Universit\'e  de Paris-Dauphine, Place du Marchal de Lattre de Tassigny, 75775 Paris Cedex 16, France. \texttt{lampart@ceremade.dauphine.fr}},
Stefan Teufel \thanks{Fachbereich Mathematik, Universit\"at T\"ubingen, Auf der Morgenstelle 10, 72076 T\"ubingen, Germany.
\texttt{stefan.teufel@uni-tuebingen.de}}}

\maketitle
\begin{abstract}
We consider Schrödinger operators $H=-\Delta_{g_\eps} + V$ on a fibre bundle $M\stackrel{\pi}{\to}B$ with compact fibres and a metric $g_\eps$ that blows up directions perpendicular to the fibres by a factor ${\eps^{-1}\gg 1}$. We show that for an eigenvalue $\lambda$ of the fibre-wise part of $H$, satisfying a local gap condition, and every $N\in \N$ there exists a subspace of $L^2(M)$
that is invariant under $H$ up to errors of order $\eps^{N+1}$. The dynamical and spectral features of $H$ on this subspace can be described by an effective operator on the fibre-wise $\lambda$-eigenspace bundle $\mathcal{E}\to B$, giving detailed asymptotics for $H$.
\end{abstract}
\tableofcontents
\section{Introduction}
The adiabatic limit of a fibre bundle, in which lengths in the fibres are of size $\eps \ll 1$ compared to those in the base, is the stage of many interesting results on the Laplacian of a Riemannian manifold.
For instance the study of the Schrödinger equation~\cite{deO, deOVe} or the heat equation~\cite{Kreko, Wi} in thin tubes, with Dirichlet boundary conditions, reveals a variety of interesting effects. These are also found in many works on the relation of spectrum and geometry in these special spaces (see~\cite{BMT, BGRSweakly, CEKtop, CDFK, DEbound, FrSo,  Ga, GoJa, Gru, KoVu, KrLu, LiLu2} and references therein). 
A related question is the confinement of a system to a submanifold of its configuration space by a scaled family of potentials~\cite{daC, FrHe, Mar, Mit, WaTeConst} that effectively force the system into a thin tubular neighbourhood of the submanifold of the type studied in the works above. Tubular neighbourhoods of graphs and boundary conditions other than Dirichlet are discussed in detail in~\cite{Gr, Po}. The discreteness of the entire spectrum on fibre bundles with closed fibres was investigated in~\cite{Bai, BMP, Bor, Kle}.
Kordyukov considers foliated manifolds~\cite{Kord} and other authors study the Hodge Laplacian~\cite{LoKo,Lo, MaMe} or Dirac operators~\cite{BiCh, Dai, Goe} in the adiabatic limit. 

The aim of this paper is to study the structure behind many of the specific results above and the various approximation techniques used to prove them.
We will develop a general method, inspired by ideas that were originally introduced in the analysis of magnetic Schr\"odinger operators~\cite{HeSj1, HeSj2, PSTpeierl} and the Born-Oppenheimer approximation~\cite{MaSo, PSTspace}.
These will be cast into a new form suited to our geometric context.
We identify natural conditions on the geometry and the operator in question for the validity of such approximations and refine them by deriving expansions to arbitrary powers of $\eps$. Our results will lead to generalizations of many specific results in the literature, although here we will focus on the general reasoning behind these results, rather than trying to emulate them in full technical detail.
We show how the general approximation scheme we derive here can be used to expand and unify the large literature on thin tubes around submanifolds, often called \enquote{quantum waveguides}, in another work in collaboration with Haag~\cite{HLT}. 
There we also examine several new examples.
The strength of our method also allows for the study of nodal sets of eigenfunctions and their limits. 
The limiting behaviour of these sets is studied by the first author in~\cite{LaNod, Lam}. This addresses various questions on the nodal count and the relation of the nodal set to the boundary, studied also in~\cite{FrKr, GrJe, JerDiam, Jer,  KreTu}.
For small, simple eigenvalues conditions are found under which the nodal set of an eigenfunction must intersect the boundary and for $B=S^1$ this set shown to be isotopic to a disjoint union of fibres.

Better understanding of the underlying adiabatic structure should prove fruitful also for those problems we do not explicitly treat here. In particular a generalisation of our method to the Hodge Laplacian on differential forms is rather natural.
We will discuss the related literature in more detail at the end of Section~\ref{sect:thm}, after stating our main results.

Let $M\stackrel{\pi}{\to}B$ be a fibre bundle of smooth manifolds with boundary. We assume the fibre $F$ to be compact, with or without boundary, and the base $B$ to be complete, so in particular $\partial B=\varnothing$, but in general not compact.
Denote by $TF:=\ker \pi_*$ the vertical subbundle of $TM$. Let $g$ and $g_B$ be Riemannian metrics on $M$ and $B$ such that $\pi_*$ induces an isometry $TM/TF\to TB$. Then $g$ is called a Riemannian submersion and can be written in the form
\begin{equation*}
g=g_F+\pi^*g_B\,,
\end{equation*}
where $g_F$ vanishes on the horizontal subbundle $NF:=TF^\perp$. We will require that $(M,g)\to (B,g_B)$ be a fibre bundle of bounded geometry (see Section~\ref{sect:geom} for a precise definition).
The family of metrics 
\begin{equation*}
 g_\eps:=g_F+ \eps^{-2} \pi^*g_B
\end{equation*}
for $\eps\ll 1$ is called the adiabatic limit of $(M,g)$. In this limit we consider a Schrödinger operator of the form
\begin{equation}\label{eq:H_intro}
 H:=-\Delta_{g_\eps} + V + \eps H_1\,,
\end{equation}
where $V$ is a potential and $H_1$ is a second order differential operator, that may for instance model a perturbation of the metric $g_\eps$. Such perturbations arise naturally if we think of $M$ as being embedded in a, suitably rescaled, thin tubular neighbourhood of $B$, which is embedded as a submanifold in $\R^k$ or some Riemannian manifold (see~\cite{HLT}).

With Dirichlet conditions on $\partial M$ this operator is self-adjoint on
$D(H)\subset \mathscr{H}:=L^2(M,g)$
(the precise technical conditions on $H$ will be stated in Section~\ref{sect:proj}). The Laplacian of $g_\eps$ decomposes with respect to horizontal and vertical directions as
\begin{equation*}
\Delta_{g_\eps}=\Delta_F + \eps^2\Delta_h\,,
\end{equation*}
where the fibre-wise action of $\Delta_F$ is that of the Laplacian of the vertical metric $g_F$ and
\begin{equation*}
 \Delta_h= \tr_{NF} \nabla^2 - \eta\,,
\end{equation*}
with the Levi-Cività connection $\nabla$ and the mean curvature vector $\eta$ of the fibres of $(M,g)$ (for $\eps=1$). Hence we can write
\begin{equation*}
 H=-\eps^2 \Delta_h + \eps H_1 + H_F\,,
\end{equation*}
with the fibre-wise operator
\begin{equation*}
 H_F:=-\Delta_F + V\,.
\end{equation*}
Consider the restriction $H_{F}(x)$ of this operator to the fibre $F_x$ with Dirichlet conditions on $\partial F_x$, i.e.~on the domain $W^{2,2}(F_x)\cap W^{1,2}_0(F_x)$ (since we work exclusively in $L^2$ we will drop the corresponding superscript from now on and write $W^{k,2}=W^k$ etc.).
This operator is self-adjoint and its spectrum consists of real eigenvalues of finite multiplicity accumulating at infinity. An \textit{eigenband} of $H_F$ is a function ${\lambda{:}\,B\to \R}$ with $\lambda(x)\in \sigma(H_F(x))$ for every $x\in B$. For any such eigenband and $x\in B$ let $P_0(x)$ be the orthogonal projection to $\ker(H_F(x) -\lambda(x))$ in $L^2(F_x)$.
This projection $P_0$ is the starting point for our analysis. We will adopt two complementary points of view of this operator and similar objects. The first is to view $P_0$ as an operator on functions defined on $M$, whose image consists exactly of those functions whose restrictions to any fibre $F_x$ are $\lambda(x)$-eigenfunctions of $H_F(x)$. The other view is that such fibre-wise operators are sections of certain vector bundles of infinite rank over $B$ induced by the fibre bundle structure, as we will now explain.

Let $\mathscr{H}_F$ be the vector bundle over $B$, with fibre $L^2(F)$, defined by the transition functions
\begin{equation*}
T_\Phi{:}\, U \times L^2(F) \to U \times  L^2(F)\,,\qquad f\mapsto f\circ \Phi\,,
\end{equation*}
where $\Phi{:}\, U \times F \to U \times F$, $\Phi(x,y)=(x,\phi_x(y))$ is a transition function between different trivialisations of $\pi^{-1}(U)\subset M$. The fibre of this vector bundle is defined by the topological vector space $L^2(F)$, which is isomorphic to the Hilbert space $L^2(F_x, g_{F_x})$ for every $x$. Let $D(H_F)$ be the vector bundle with fibre $W^2(F)\cap W^1_0(F)$ constructed in the same way. The latter implements Dirichlet conditions for $H_F$ on $\partial M$.

These bundles are hermitian vector bundles with the natural fibre-wise pairings induced by the metric $g_{F}$. The spaces of continuous fibre-wise maps between vector bundles naturally have a bundle structure, and we can immediately observe that
\begin{equation*}
 H_F \in L^\infty\big(\mathscr{L}(D(H_F), \mathscr{H}_F)\big)\qquad \text{and} \qquad 
 P_0\in L^\infty(\mathscr{L}\big(\mathscr{H}_F)\big)
\end{equation*}
are bounded sections of these bundles. Now if $P_0$ is a continuous section of this bundle, $\mathrm{rank}\, P_0=\tr P_0$ is continuous, whence it is constant and ${\mathcal{E}:=P_0 \mathscr{H}_F}\subset \mathscr{H}_F$ is a subbundle of finite rank. Its fibre over $x$ is exactly $\ker(H_F(x) -\lambda(x))$. 

Since the space of $L^2$-sections $L^2(\mathscr{H}_F)$ is isomorphic to $\mathscr{H}$ (cf.~\cite[Appendix B]{Lam}), we can also view $P_0$ as a bounded linear map on $\mathscr{H}$. The image of this map then consists exactly of the $L^2$-sections of the $\lambda$-eigenspace bundle $L^2(\mathcal{E})\cong P_0 \mathscr{H}$. This gives a precise meaning to $P_0$ as an operator on functions on $M$ and we will now use both of these views alongside each other, without distinguishing them by the notation.

We will consider eigenbands $\lambda$ that have a \textit{spectral gap}:
\begin{cond}\label{cond:gap}
 There exist $\delta>0$ and bounded continuous functions $f_-,f_+\in \mathscr{C}_b(B)$ with $\dist\left(f_{\pm}(x),\sigma(H_F(x))\right)\geq \delta$ such that 
\begin{equation*}
 \forall x\in B: [f_-(x),f_+(x)]\cap \sigma(H_F(x))=\lambda(x)\,.
\end{equation*}
\end{cond}
If $F$ is connected,  this condition  is always satisfied for the ground state band $\lambda_0(x):=\min \sigma(H_F(x))$ (see Proposition~\ref{prop:gap}). We note that all previous works in our list of references with the exception of~\cite{WaTeConst} were solely concerned with the ground state band and energies very close to its global minimum. Moreover, it is well known that in regions of $B$ near crossings of different eigenbands the adiabatic approximation breaks down (see e.g. Fermanian-Kammerer  and G\'erard~\cite{FeGe}). Hence, the gap condition is not a purely technical restriction, but a necessary ingredient for adiabatic decoupling. 

Later we will prove that Condition~\ref{cond:gap} implies continuity of $P_0$. In~\cite{Lam} it is shown that $\mathcal{E}$ has a smooth structure such that $\Gamma(\mathcal{E})\subset \mathscr{C}^\infty(M, \C)$, so we can think of any smooth section of $\mathcal{E}$ as a smooth function $\phi$ on $M$ satisfying $H_F(x)\phi=\lambda(x)\phi$ on every fibre $F_x$. Since this condition is independent of $\eps$ we can define an $\eps$-dependent family of sections as  a product $\phi\pi^*\psi$, where  only $\psi \in  \mathscr{C}^\infty(B)$ depends on $\eps$. Any section of $\mathcal{E}$ may be written as a sum of such products involving a finite number of generators $\phi$ (of $\Gamma(\mathcal{E})$ over $\mathscr{C}^\infty(B)$).

The adiabatic approximation with respect to such an eigenband $\lambda$ consists in projecting $H$ with $P_0$. This approximation is good if we are concerned with states in (or close to) the image of $P_0$ and this space is approximately invariant under $H$. Since
\begin{equation*}
 (H- P_0 H P_0 )P_0=[H,P_0]P_0
\end{equation*}
this basically means that the commutator $[H,P_0]$ needs to be small. We can see, at least heuristically, that $P_0 \mathscr{H}$ is invariant under $H$ up to errors of order $\eps$.
Let $\phi\pi^*\psi$ be a section of $\mathcal{E}$  with $\phi$ independent of $\eps$ as described above.
Now if $X$ is a horizontal vector, then $\eps X$ has length $\mathcal{O}(1)$ in $(M,g_\eps)$  but $\eps X \phi$ is of order $\eps$. Hence we have
\begin{equation*}
 [\eps X, P_0]\phi \pi^*\psi = \eps (1-P_0)X(\phi \pi^*\psi)=(\pi^*\psi) \eps(1-P_0)X\phi=\mathcal{O}(\eps)\,.
\end{equation*}
The commutator $[H,P_0]$ can be expressed by such derivatives, so it is also of order $\eps$. This is an instance of the more general principle that horizontal derivatives of quantities associated with the fibres, which are independent of $\eps$, should be small on $(M,g_\eps)$ as $\eps\to 0$. We will use this intuition to construct a projection $P_\eps$ with $[P_\eps, H]=\mathcal{O}(\eps^{N+1})$ by 
eliminating commutators order by order in $\eps$.

The image of $P_\eps$ is then almost-invariant under $H$. Spaces with this property were constructed for different problems in~\cite{HeSj1, HeSj2, MaSo,PSTpeierl, PSTspace}, using pseudo-differential calculus. Our method is based on similar ideas, but we develop a new technical framework to implement them, for two reasons. First, the required calculus of operator-valued pseudo-differential operators on manifolds, that must allow for complete symbol expansions, is not well established to date. Second, the presence of the boundary poses an additional difficulty, that is more easily controlled using special classes of differential operators (see section~\ref{sect:proj}).

\section{Main results}\label{sect:thm}
Before going into the details of the construction we state our main results and derive some corollaries. All of the statements here require that $(M,g)\to (B,g_B)$ is of bounded geometry (see Condition~\ref{cond:geom}) and that $H$, given by~\eqref{eq:H_intro} with domain $D(H)=W^2(M,g)\cap W^1_0(M,g)$, satisfies Condition~\ref{cond:H}. If not mentioned otherwise, they hold for any eigenband $\lambda$ with a spectral gap (Condition~\ref{cond:gap}), so if $H_F$ has multiple such bands the theorems can be applied to the same operator in a variety of ways, yielding for example results for different energies. If all of the eigenvalues of $H_F$ belong to such bands, we can in principle apply the construction to every one of them and obtain a total decomposition of $H$ into operators $H_j$, labeled by the different bands. 
\begin{thm}\label{thm:proj}
 For every $\Lambda>0$ and $N\in \N$ there exists an orthogonal projection $P_\eps \in \mathscr{L}(\mathscr{H})\cap \mathscr{L}(D(H))$ that satisfies
\begin{equation*}
 \norm{\left[H, P_\eps\right] \varrho(H)}_{\mathscr{L}\left(\mathscr{H}\right)} = \mathcal{O}(\eps^{N+1})
\end{equation*}
for every Borel function $\varrho{:}\,\mathbb{R}\to [0,1]$ with support in $(-\infty, \Lambda]$.
Furthermore $P_\eps - P_0 =\mathcal{O}(\eps)$ in $\mathscr{L}(D(H))$ and there exists a unitary $U_\eps$ on $\mathscr{H}$ that maps $P_0\mathscr{H}=L^2(\mathcal{E})$ to the image of $P_\eps$.
\end{thm}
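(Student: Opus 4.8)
The plan is to run the three-part scheme of space-adiabatic perturbation theory in this bundle setting: build a formal approximate projection order by order in $\eps$, round it off to an exact orthogonal projection with a Riesz integral, and intertwine it with $P_0$ by the Sz.-Nagy formula.

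\emph{Step 1 (formal construction).} I would look for a finite sum $P^{(N)}=\sum_{n=0}^N\eps^n P_n$, where each $P_n$ is a fibre-wise operator — more precisely a section over $B$ of an operator acting along the fibres whose coefficients are themselves horizontal differential operators — such that simultaneously $(P^{(N)})^*=P^{(N)}$, $(P^{(N)})^2=P^{(N)}+\mathcal O(\eps^{N+1})$ and $[H,P^{(N)}]=\mathcal O(\eps^{N+1})$. Writing $H=H_F+\eps H_1-\eps^2\Delta_h$ with $H_F$ independent of $\eps$ and expanding these three relations in powers of $\eps$ gives at each order $n$ an equation for $P_n$ of the familiar form: its diagonal block $P_0P_nP_0+(1-P_0)P_n(1-P_0)$ is forced by the idempotency and symmetry relations in terms of $P_0,\dots,P_{n-1}$, while its off-diagonal block solves a commutator equation $[H_F-\lambda,\,\cdot\,]P_n^{\mathrm{od}}=B_n$ with $B_n$ off-diagonal and built from $P_0,\dots,P_{n-1}$, $H_1$ and $\Delta_h$. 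The spectral gap (Condition~\ref{cond:gap}) makes $[H_F-\lambda,\,\cdot\,]$ boundedly invertible on the off-diagonal block, the inverse being given fibre-wise by a Riesz-type contour integral $\tfrac{1}{2\pi\ui}\oint(H_F-z)^{-1}B_n(H_F-z)^{-1}\,\ud z$ along a loop separating $\lambda(x)$ from the rest of $\sigma(H_F(x))$; the bounded-geometry hypothesis guarantees that this loop and the associated resolvent estimates can be chosen uniformly in $x\in B$. One then checks, by tracking how many horizontal derivatives enter $P_n$ and using elliptic regularity along the compact fibres, that every $P_n$ is a bounded operator on $\mathscr H$ and on $D(H)$, uniformly in $\eps$, which yields $P^{(N)}-P_0=\mathcal O(\eps)$ in $\mathscr L(D(H))$ and $[H,P^{(N)}]=\mathcal O(\eps^{N+1})$ as a map $D(H)\to\mathscr H$. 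This is the genuinely delicate part of the proof: one must show that the recursion never leaves the class of operators with uniform bounds on $\mathscr H$ and $D(H)$, the danger being that the fibre-resolvent inversion producing the off-diagonal blocks, compounded with the horizontal derivatives generated at each commutator, escapes uniform control over the noncompact base — ruling this out is exactly what the bounded-geometry assumption and uniform fibre-wise elliptic estimates are for.

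\emph{Step 2 (rounding and the commutator estimate).} Since $P^{(N)}$ is self-adjoint with $\norm{(P^{(N)})^2-P^{(N)}}=\mathcal O(\eps^{N+1})$, for $\eps$ small its spectrum lies in two small intervals around $0$ and $1$, so
\[
  P_\eps:=\frac{1}{2\pi\ui}\oint_{\abs{z-1}=1/2}(z-P^{(N)})^{-1}\,\ud z
\]
is an exact orthogonal projection with $P_\eps-P^{(N)}=\mathcal O(\eps^{N+1})$ both in $\mathscr L(\mathscr H)$ and in $\mathscr L(D(H))$; hence $P_\eps\in\mathscr L(\mathscr H)\cap\mathscr L(D(H))$ and $P_\eps-P_0=\mathcal O(\eps)$ in $\mathscr L(D(H))$. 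For the commutator bound write $[H,P_\eps]\varrho(H)=[H,P^{(N)}]\varrho(H)+[H,P_\eps-P^{(N)}]\varrho(H)$. The first term is $\mathcal O(\eps^{N+1})$ because $[H,P^{(N)}]$ is a differential operator of bounded order and, for $\varrho$ supported in $(-\infty,\Lambda]$, $\varrho(H)$ maps $\mathscr H$ into $D(H^k)$ with norm bounded uniformly in $\eps$ — this is precisely why the energy cutoff is indispensable. The second term is $\mathcal O(\eps^{N+1})$ as well: one factors $[H,\,\cdot\,]$ through $(H\pm\ui)$ and combines the $\mathscr L(D(H))$-smallness of $P_\eps-P^{(N)}$ with the uniform bound $\varrho(H)\in\mathscr L(\mathscr H,D(H))$.

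\emph{Step 3 (intertwining unitary).} For $\eps$ small enough that $\norm{P_\eps-P_0}_{\mathscr L(\mathscr H)}<1$, the Sz.-Nagy formula
\[
  U_\eps:=\bigl(P_\eps P_0+(1-P_\eps)(1-P_0)\bigr)\bigl(1-(P_\eps-P_0)^2\bigr)^{-1/2}
\]
defines a unitary on $\mathscr H$ satisfying $U_\eps P_0=P_\eps U_\eps$, and therefore maps $P_0\mathscr H=L^2(\mathcal E)$ onto the range of $P_\eps$. With Step~1 in hand, Steps~2 and~3 are the routine perturbative wrap-up.
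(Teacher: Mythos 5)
Your overall architecture (recursive construction of an almost-projection, Riesz rounding, Sz.-Nagy intertwiner) is the same as the paper's, and Steps~1 and~3 match what is done there. But Step~2 contains a genuine gap, and it is exactly the point where the paper has to work hardest.

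You first describe the $P_n$, correctly, as fibre-wise operators whose coefficients are horizontal differential operators, and then assert that ``every $P_n$ is a bounded operator on $\mathscr H$ and on $D(H)$, uniformly in $\eps$.'' These two statements are incompatible: a horizontal differential operator of order $\geq 1$ is not bounded on $L^2$, even with the factors of $\eps$ attached. In the paper the corrections live in the classes $\mathcal{A}_H^{2^k,0}$, i.e.\ $P_k$ is a differential operator of order $2^k$ that is bounded only as a map $W^{2^k}_\eps\to\mathscr H$, and the defect $(P^N)^2-P^N$ is small only in the norm $\mathscr L(W^{2^{N+1}}_\eps,\mathscr H)$ — equivalently, only when tested against states of bounded energy. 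Consequently your Riesz integral $\oint(z-P^{(N)})^{-1}\ud z$ is not even defined as a bounded operator on $\mathscr H$: $P^{(N)}$ is an unbounded operator whose spectrum you cannot localise near $\{0,1\}$ by the argument you give. The missing idea is the insertion of smooth energy cutoffs \emph{before} rounding: the paper replaces $\tilde P:=P^N-P_0$ by
\begin{equation*}
P^\chi:=P_0+\tilde P\chi_1(H)+\chi_1(H)\tilde P\bigl(1-\chi_1(H)\bigr)\,,
\end{equation*}
where $\chi_1$ is a regular cut-off equal to one on $[\inf\sigma(H)-1,\Lambda+1]$; elliptic regularity ($D(H^{2^{N-1}})\subset W^{2^N}_\eps$ uniformly in $\eps$) makes $\tilde P\chi_1(H)$ bounded, and self-adjointness is arranged by hand. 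Only then does one have a bounded self-adjoint $P^\chi$ with $\|(P^\chi)^2-P^\chi\|_{\mathscr L(\mathscr H)}=\mathcal O(\eps)$ (note: only $\mathcal O(\eps)$, not $\mathcal O(\eps^{N+1})$, which is enough for the Riesz integral but forces the final commutator estimate to be threaded through the resolvent $R^\chi(z)=(P^\chi-z)^{-1}$ together with a second cutoff $\chi_2$ and a Helffer--Sj\"ostrand-type lemma controlling $[R^\chi(z),\chi_2(H)]$). Your estimate of $[H,P^{(N)}]\varrho(H)$ via $\varrho(H)\in\mathscr L(\mathscr H,D(H^k))$ is fine and is how the paper bounds the corresponding term, but the decomposition $[H,P_\eps]=[H,P^{(N)}]+[H,P_\eps-P^{(N)}]$ you propose is not available, because $P_\eps-P^{(N)}$ is not $\mathcal O(\eps^{N+1})$ in $\mathscr L(D(H))$ without the cutoffs — that expansion only holds after composing with $\chi(H)$ (the paper's lemma~\ref{lem:expansion}).
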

The step-by-step construction of this projection will make up a large part of Section~\ref{sect:proj}. Once this is achieved, the unitary may be defined by the Sz.-Nagy formula
\begin{equation}\label{eq:Udef}
 U_\eps:=\big( P_\eps P_0 + (1-P_\eps)(1-P_0)\big)\big(1-(P_0-P_\eps)^2\big)^{-1/2}\,,
\end{equation}
since $1-(P_0-P_\eps)^2$ is positive for $\eps$ small enough.

Though this theorem seems rather technical, it is convenient for the derivation of statements on the dynamical and spectral properties of $H$. Firstly one can show that the image of $P_\eps$ is almost-invariant under $\ue^{-\ui Ht}$ using standard time-dependent perturbation theory (see e.g. Wachsmuth and Teufel~\cite[Section 3.1]{WaTeConst}).
\begin{cor}\label{cor:inv}
For every $N\in \N$ and $\Lambda>0$ there exist constants $C$ and $\eps_0>0$ such that 
if $P_\eps$ and $\varrho$ are as in Theorem~\ref{thm:proj} we have
\begin{equation*}
\norm{[\ue^{-\ui Ht}, P_\eps]\varrho(H)}_{\mathscr{L}(\mathscr{H})}\leq C \eps^{N+1}\abs{t}
\end{equation*}
for all $\eps\in(0, \eps_0)$ and $t\in \R$
\end{cor}
It is of high interest to provide an effective description of $H$ on $L^2(\mathcal{E})$, for which this dynamical invariance is a necessary requirement. Such a description is provided by the effective operator
\begin{equation*}
H_\mathrm{eff}:=U_\eps^* P_\eps H P_\eps U_\eps\,.
\end{equation*}
Whenever $\mathcal{E}$ is a trivial line-bundle, in particular for the ground state band $\lambda_0$, this is an operator in $L^2(B)$. As such it provides a description of $H$ on $M$ by an operator on the lower dimensional space $B$.
Since $[P_\eps, H]=[P_0, H] + \mathcal{O}(\eps)=\mathcal{O}(\eps)$, one easily checks that $H_\mathrm{eff}$ is self-adjoint on $D_\mathrm{eff}=U_\eps^*P_\eps D(H)\subset L^2(\mathcal{E})$ using the Kato-Rellich theorem.
Because of the invariance of $P_\eps \mathscr{H}$, the solutions to the Schrödinger equation with initial data in the image of $P_\eps$ can be approximated using
this operator.
\begin{cor}
Let $\Lambda>0$, $N\in \N$ and $P_\eps$ be the corresponding projection of Theorem~\ref{thm:proj}. There exist constants $C$ and $\eps_0>0$ such that
\begin{equation*}
 \big\lVert\big(\ue^{-\ui Ht}-U_\eps \ue^{-\ui H_\mathrm{eff} t}U_\eps^*\big) P_\eps 1_{(-\infty,\Lambda]}(H)\big\rVert_{\mathscr{L}(\mathscr{H})}\leq C \eps^{N+1} \abs{t}
\end{equation*}
for every $\eps\in(0, \eps_0)$ and $t\in \R$.
\end{cor}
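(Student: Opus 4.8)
The plan is to reduce this final corollary to the two preceding results by a standard Duhamel (variation of constants) estimate. Write $\varrho = 1_{(-\infty,\Lambda]}$ for brevity, so that $\varrho(H)$ commutes with $\ue^{-\ui Ht}$ and, by the Kato--Rellich argument sketched above, $H_\mathrm{eff}$ is self-adjoint on $D_\mathrm{eff}$ and hence $\ue^{-\ui H_\mathrm{eff}t}$ is a well-defined unitary on $L^2(\mathcal{E})$. The key algebraic point is that $U_\eps$ intertwines $H_\mathrm{eff}$ with $P_\eps H P_\eps$: since $U_\eps$ maps $L^2(\mathcal{E}) = P_0\mathscr{H}$ onto $P_\eps\mathscr{H}$ and $U_\eps^* = U_\eps^{-1}$ on these spaces, we have $U_\eps \ue^{-\ui H_\mathrm{eff}t}U_\eps^* = \ue^{-\ui (P_\eps H P_\eps)t}P_\eps$ on the range of $P_\eps$. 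So it suffices to control $\big(\ue^{-\ui Ht} - \ue^{-\ui (P_\eps H P_\eps)t}\big)P_\eps\varrho(H)$.

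First I would introduce the interpolating family $W(s) := \ue^{-\ui H(t-s)}\ue^{-\ui(P_\eps H P_\eps)s}P_\eps\varrho(H)$ for $s\in[0,t]$, so that $W(0) = \ue^{-\ui Ht}P_\eps\varrho(H)$ and $W(t) = \ue^{-\ui(P_\eps H P_\eps)t}P_\eps\varrho(H)$. Differentiating,
\begin{equation*}
 \frac{\ud}{\ud s}W(s) = \ui\,\ue^{-\ui H(t-s)}\big(H P_\eps - P_\eps H P_\eps\big)\ue^{-\ui(P_\eps H P_\eps)s}P_\eps\varrho(H)\,,
\end{equation*}
and one rewrites $HP_\eps - P_\eps H P_\eps = (1-P_\eps)HP_\eps = [H,P_\eps]P_\eps$. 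Integrating from $0$ to $t$ and taking norms, using that $\ue^{-\ui H(t-s)}$ is unitary and that $\ue^{-\ui(P_\eps H P_\eps)s}$ commutes with $P_\eps$, gives
\begin{equation*}
 \big\lVert\big(\ue^{-\ui Ht} - \ue^{-\ui(P_\eps H P_\eps)t}\big)P_\eps\varrho(H)\big\rVert \leq \int_0^t \big\lVert[H,P_\eps]P_\eps\,\ue^{-\ui(P_\eps H P_\eps)s}P_\eps\varrho(H)\big\rVert\,\ud s\,.
\end{equation*}
The issue is now that the $\varrho(H)$ sits on the wrong side of the propagator $\ue^{-\ui(P_\eps H P_\eps)s}$, so it cannot directly be combined with $[H,P_\eps]\varrho(H) = \mathcal{O}(\eps^{N+1})$ from theorem~\ref{thm:proj}.

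To fix this I would insert $\varrho(H)$ on the left as well: since $W(0)$ already has a $\varrho(H)$ factor and $\ue^{-\ui Ht}$ commutes with $\varrho(H)$, one may harmlessly multiply the whole identity by $\varrho(H)$ on the left, or alternatively observe that in the integrand $[H,P_\eps]P_\eps = \varrho(H)[H,P_\eps]P_\eps + (1-\varrho(H))[H,P_\eps]P_\eps$; the second term is handled by noting $(1-\varrho(H))\ue^{-\ui(P_\eps H P_\eps)s}P_\eps\varrho(H)$ is itself $\mathcal{O}(\eps^{N+1})$ because $[H,P_\eps]\varrho(H)=\mathcal{O}(\eps^{N+1})$ forces $P_\eps\varrho(H)$ and $\varrho(P_\eps H P_\eps)P_\eps\varrho(H)$ to agree up to that order — equivalently, one commutes $\varrho(H)$ past $P_\eps$ at the cost of $\mathcal{O}(\eps^{N+1})$ and then through $\ue^{-\ui(P_\eps H P_\eps)s}$ for free. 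Cleanest is to estimate $\lVert\varrho(H)[H,P_\eps]P_\eps\rVert = \lVert([H,P_\eps]\varrho(H))^*\rVert = \mathcal{O}(\eps^{N+1})$ using self-adjointness of $H$ and orthogonality of $P_\eps$ (so $[H,P_\eps]^* = -[H,P_\eps]$ on suitable domains, hence $\varrho(H)[H,P_\eps] = -([H,P_\eps]\varrho(H))^*$), and then bound $\lVert(1-\varrho(H))[H,P_\eps]P_\eps\ue^{-\ui(P_\eps H P_\eps)s}P_\eps\varrho(H)\rVert$ by commuting the leading $\varrho(H)$-cutoff through. In either case the $s$-integrand is $\mathcal{O}(\eps^{N+1})$ uniformly in $s\in[0,t]$, the integral contributes the factor $\abs{t}$, and we obtain the claimed bound $C\eps^{N+1}\abs{t}$ for $\eps<\eps_0$; the constants come from theorem~\ref{thm:proj} and the uniform bound on $\lVert P_\eps - P_0\rVert$ needed to make $1-(P_0-P_\eps)^2$ invertible. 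The main obstacle is purely bookkeeping: arranging the spectral cutoff $\varrho(H)$ to sit on the correct side of $[H,P_\eps]$ at every stage, which is why I would route everything through the adjoint identity $\varrho(H)[H,P_\eps] = -([H,P_\eps]\varrho(H))^*$ rather than trying to commute cutoffs through the effective propagator.
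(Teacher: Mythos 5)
Your overall strategy (Duhamel, then Theorem~\ref{thm:proj} for the integrand and Corollary~\ref{cor:inv} for the leftover commutator) is the same as the paper's, but your particular arrangement of the Duhamel formula creates a problem that your proposed fixes do not resolve. With your interpolating family the integrand is $[H,P_\eps]P_\eps\,\ue^{-\ui(P_\eps HP_\eps)s}P_\eps\varrho(H)$, so the sharp cutoff $\varrho(H)=1_{(-\infty,\Lambda]}(H)$ is separated from the commutator by the \emph{effective} propagator, which is not a function of $H$. Neither remedy you offer closes this gap. Multiplying the identity by $\varrho(H)$ on the left is not harmless: the corollary bounds $\lVert(\ue^{-\ui Ht}-U_\eps\ue^{-\ui H_\eff t}U_\eps^*)P_\eps\varrho(H)\rVert$ with no cutoff on the left, so a bound on $\varrho(H)(\cdots)$ does not yield a bound on $(\cdots)$. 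And the claim that $[H,P_\eps]\varrho(H)=\mathcal{O}(\eps^{N+1})$ ``forces'' $P_\eps\varrho(H)$ and $\varrho(P_\eps HP_\eps)P_\eps\varrho(H)$ to agree to that order, or lets you commute $\varrho(H)$ past $P_\eps$ at that cost, is unjustified for a sharp indicator: the functional-calculus lemma~\ref{lem:chi} that controls such commutators requires \emph{regular} (smooth, compactly supported) cut-offs and the Helffer--Sj\"ostrand formula, and sharp spectral projections of two operators that are $\mathcal{O}(\eps^{N+1})$-close on a spectral subspace can still differ by $\mathcal{O}(1)$ when there is spectrum at the cut-off energy $\Lambda$. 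Your adjoint identity does give $\lVert\varrho(H)[H,P_\eps]\rVert=\mathcal{O}(\eps^{N+1})$, but it only disposes of the piece of the integrand carrying $\varrho(H)$ on the left; the remaining piece $(1-\varrho(H))[H,P_\eps]P_\eps\ue^{-\ui(P_\eps HP_\eps)s}P_\eps\varrho(H)$ is precisely the term you cannot control without commuting the sharp cutoff through the effective propagator.

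The paper avoids this entirely by ordering the propagators the other way. It first writes $(\ue^{-\ui Ht}-U_\eps\ue^{-\ui H_\eff t}U_\eps^*)P_\eps=P_\eps(\ue^{-\ui Ht}-U_\eps\ue^{-\ui H_\eff t}U_\eps^*)+[\ue^{-\ui Ht},P_\eps]$, using that $U_\eps H_\eff U_\eps^*=P_\eps HP_\eps$ commutes with $P_\eps$; the commutator term composed with $\varrho(H)$ is $\mathcal{O}(\eps^{N+1}\abs{t})$ by Corollary~\ref{cor:inv}. Duhamel is then applied in the form where the effective propagator stands on the \emph{left} and the full propagator $\ue^{-\ui Hs}$ on the \emph{right}, so the integrand becomes $U_\eps\ue^{-\ui H_\eff(t-s)}U_\eps^*\bigl(-P_\eps[H,P_\eps]\bigr)\ue^{-\ui Hs}\varrho(H)$; since $\ue^{-\ui Hs}$ commutes with $\varrho(H)$, the sharp cutoff slides directly next to $P_\eps[H,P_\eps]$ and Theorem~\ref{thm:proj} applies verbatim. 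To repair your version, reverse the interpolating family so that $\ue^{-\ui Hs}$ sits adjacent to $\varrho(H)$, and then use $[H,P_\eps]P_\eps\varrho(H)=(1-P_\eps)[H,P_\eps]\varrho(H)=\mathcal{O}(\eps^{N+1})$ --- which is essentially the paper's proof.
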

\begin{proof}
Application of Duhamel's formula gives
\begin{equation*}
 \ue^{-\ui Ht}-U_\eps \ue^{-\ui H_\mathrm{eff} t}U_\eps^*
= -\ui\int_0^t U_\eps \ue^{-\ui H_\mathrm{eff} (t-s)}U_\eps^*
(H-U_\eps H_\mathrm{eff} U_\eps^*)
\ue^{-\ui Hs}\,\ud s\,
\end{equation*}
and since $U_\eps H_\eff U^*_\eps =P_\eps H P_\eps$ commutes with $P_\eps$ we have
\begin{align*}
\big(\ue^{-\ui Ht}&-U_\eps \ue^{-\ui H_\mathrm{eff} t}U_\eps^*\big) P_\eps 1_{(-\infty,\Lambda]}(H)\\
&=\Big(P_\eps \big(\ue^{-\ui Ht}-U_\eps \ue^{-\ui H_\mathrm{eff} t}U_\eps^*\big)
+ [\ue^{-\ui Ht},P_\eps]\Big) 1_{(-\infty,\Lambda]}\\
&=\begin{aligned}[t]
\Big(-\ui P_\eps\int_0^t U_\eps \ue^{-\ui H_\mathrm{eff} (t-s)}U_\eps^*(H-P_\eps H P_\eps&)\ue^{-\ui Hs}\,\ud s\\
+&[\ue^{-\ui Ht},P_\eps]\Big)1_{(-\infty,\Lambda]}
\end{aligned}\\
&=\begin{aligned}[t]
-\ui \int_0^t U_\eps \ue^{-\ui H_\mathrm{eff} (t-s)}U_\eps^* 
\underbrace{(P_\eps H- P_\eps H P_\eps)}_{=-P_\eps[H,P_\eps]}&1_{(-\infty,\Lambda]}\ue^{-\ui Hs}\,\ud s \\
&+[\ue^{-\ui Ht},P_\eps]1_{(-\infty,\Lambda]}.
\end{aligned}
\end{align*}
This is of order $\eps^{N+1}\abs{t}$, since the integrand of the first term is of order $\eps^{N+1}$ by Theorem~\ref{thm:proj} with $\varrho=1_{(-\infty,\Lambda]}$, and the second term is of this order by Corollary~\ref{cor:inv}.
\end{proof}
Clearly such techniques can also be used to derive properties of the heat semigroup. A first result on the spectrum of $H$ is also obtained in a straightforward manner.
\begin{cor}\label{cor:spectrum}
For arbitrary but fixed $\Lambda>0$ and $N\in \N$ let $H_\mathrm{eff}$ be the corresponding effective operator. 
Then for every $\delta>0$ there exist constants C and $\eps_0>0$ such that for every ${\mu\in \sigma(H_\mathrm{eff})}$ with ${\mu\leq \Lambda-\delta}$:
\begin{equation*}
 \mathrm{dist}(\mu, \sigma(H))\leq C \eps^{N+1}
\end{equation*}
for every $\eps\in(0, \eps_0)$.
\end{cor}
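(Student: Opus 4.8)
The plan is to use the standard argument that almost-invariance of the subspace $P_\eps\mathscr{H}$ transfers approximate eigenvalues of $H_\mathrm{eff}$ to the spectrum of $H$. Fix $\mu\in\sigma(H_\mathrm{eff})$ with $\mu\le\Lambda-\delta$. Since $H_\mathrm{eff}$ is self-adjoint, Weyl's criterion gives a normalised $\psi\in D_\mathrm{eff}=U_\eps^*P_\eps D(H)$ with $\norm{(H_\mathrm{eff}-\mu)\psi}\le\eta$ for any prescribed $\eta>0$; we will in the end choose $\eta=\eps^{N+1}$. Set $\Psi:=U_\eps\psi\in P_\eps D(H)\subset D(H)$, which is again normalised because $U_\eps$ is unitary. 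The goal is to show $\norm{(H-\mu)\Psi}\le C\eps^{N+1}$, which by the spectral theorem for the self-adjoint operator $H$ immediately yields $\dist(\mu,\sigma(H))\le C\eps^{N+1}$.

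The key step is the estimate
\begin{equation*}
(H-\mu)\Psi = (H-\mu)P_\eps\Psi = P_\eps(H-\mu)\Psi + [H,P_\eps]\Psi,
\end{equation*}
where I used $P_\eps\Psi=\Psi$. For the second term, I need $\Psi$ to lie in the spectral subspace where theorem~\ref{thm:proj} applies: since $\mu\le\Lambda-\delta$ and $(H_\mathrm{eff}-\mu)\psi$ is small, $\Psi=U_\eps\psi$ is (up to an error of order $\eps^{N+1}$) in the range of $1_{(-\infty,\Lambda-\delta/2]}(H)$; more precisely, writing $\Psi=1_{(-\infty,\Lambda]}(H)\Psi + 1_{(\Lambda,\infty)}(H)\Psi$ and using that $P_\eps H P_\eps=U_\eps H_\mathrm{eff} U_\eps^*$ has spectral data close to that of $H_\mathrm{eff}$ on $P_\eps\mathscr{H}$, one controls the high-energy tail of $\Psi$ by $\eta/\delta$. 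Then $[H,P_\eps]1_{(-\infty,\Lambda]}(H)=\mathcal{O}(\eps^{N+1})$ by theorem~\ref{thm:proj} (with $\varrho=1_{(-\infty,\Lambda]}$), so $\norm{[H,P_\eps]\Psi}\le C\eps^{N+1}+C\eta/\delta$. For the first term, $P_\eps(H-\mu)\Psi = P_\eps H P_\eps\Psi - \mu\Psi = U_\eps(H_\mathrm{eff}-\mu)\psi$, whose norm is exactly $\norm{(H_\mathrm{eff}-\mu)\psi}\le\eta$. Combining, $\norm{(H-\mu)\Psi}\le\eta+C\eps^{N+1}+C\eta/\delta$; choosing $\eta=\eps^{N+1}$ gives the claim with a constant depending on $\delta$.

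I expect the main obstacle to be the bookkeeping around the high-energy cutoff: theorem~\ref{thm:proj} only controls $[H,P_\eps]\varrho(H)$ for $\varrho$ supported in $(-\infty,\Lambda]$, so one must genuinely argue that an approximate eigenvector of $H_\mathrm{eff}$ at energy $\mu\le\Lambda-\delta$ cannot have significant $H$-spectral weight above $\Lambda$. This uses that $H_\mathrm{eff}=U_\eps^*P_\eps H P_\eps U_\eps$ so that $U_\eps\psi$ is an approximate eigenvector of the fibred operator $P_\eps H P_\eps$ (which acts as $H$ on $P_\eps\mathscr{H}$ and as $0$ on the complement), and the spectral projections of $P_\eps H P_\eps$ and $H$ differ on $P_\eps\mathscr{H}$ only through further commutators with $P_\eps$, again of order $\eps^{N+1}$ on the relevant energy window; the role of $\delta$ is precisely to keep $\mu$ a fixed distance below $\Lambda$ so this cutoff argument closes. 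Everything else is a routine application of Weyl's criterion and the spectral theorem.
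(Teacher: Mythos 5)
Your proposal is correct and follows essentially the same route as the paper: transfer a Weyl (quasi-mode) sequence for $H_\mathrm{eff}$ to $H$ via $U_\eps$, split $(H-\mu)U_\eps\psi$ into $U_\eps(H_\mathrm{eff}-\mu)\psi$ plus the off-diagonal term $P_\eps^\perp HP_\eps U_\eps\psi=[H,P_\eps]U_\eps\psi$, and control the latter by inserting the energy cutoff from theorem~\ref{thm:proj}. The cutoff bookkeeping you flag as the main obstacle is resolved in the paper exactly along the lines you anticipate, but slightly more cleanly: the Weyl sequence is chosen from the outset in the range of $1_{(-\infty,\Lambda-\delta/2]}(H_\mathrm{eff})$, and a \emph{regular} (smooth) cutoff $\chi$ together with lemma~\ref{lem:chi} converts $\chi(P_\eps HP_\eps)$ into $\chi(H)P_\eps+\mathcal{O}(\eps^{N+1})$ — sharp indicator functions such as your $1_{(\Lambda,\infty)}(H)$ cannot be compared directly between $H$ and $P_\eps HP_\eps$, and a naive $\eta/\delta$ tail bound in the graph norm would be circular, so this smoothing step is genuinely needed.
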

\begin{proof}
 Let $(\psi_k)_{k\in\mathbb{N}}\subset L^2(\mathcal{E})$ be a Weyl sequence for $\mu$ (i.e.~$\norm{\psi_k}=1$ for every $k\in\mathbb{N}$ and $\lim_{k\to \infty} \norm{\left(H_\mathrm{eff}-\mu\right)\psi_k}=0$). We can even choose the $\psi_k$ in the image of $1_{(-\infty,D]}(H_\mathrm{eff})$, with $D=\Lambda -\delta/2$, because $\mu$ is in the spectrum of $H_\eff$ restricted to this space.
Then because $\psi_k\in P_0 \mathscr{H}$ we have
\begin{align}
 \lVert (&H-\mu)U_\eps \psi_k \rVert_\mathscr{H}\notag\\
& =\norm{\left(H-\mu\right)P_\eps U_\eps 1_{(-\infty,D]}(H_{\mathrm{eff}}) \psi_k}\notag\\
&\leq \norm{U_\eps\left(H_\mathrm{eff}-\mu\right)\psi_k}
+\lVert P_\eps^\perp HP_\eps U_\eps 1_{(-\infty,D]}(H_\mathrm{eff}) \psi_k\rVert\,.\label{eq:quasimode0}
\end{align}
Let $\chi\in \mathscr{C}_0^\infty\big((-\infty, \Lambda], [0,1]\big)$ be equal to one on a set containing $\sigma(H_\mathrm{eff})\cap(-\infty, D]$ (and sufficiently regular, see Definition~\ref{def:cutoff}). Then by the functional calculus (cf.~Lemma~\ref{lem:chi})
\begin{align*}
 U_\eps 1_{(-\infty,D]}(H_\mathrm{eff})&=\chi(P_\eps H P_\eps)U_\eps 1_{(-\infty,D]}(H_\mathrm{eff})\\
 &=P_\eps \chi(H)P_\eps U_\eps 1_{(-\infty,D]}(H_\mathrm{eff})+ \mathcal{O}(\eps^{N+1})\\
&=\chi(H)P_\eps U_\eps 1_{(-\infty,D]}(H_\mathrm{eff})+ \mathcal{O}(\eps^{N+1})
\,.
\end{align*}
Hence Theorem~\ref{thm:proj} with $\varrho=\chi$ gives a bound on the second term
\begin{align*}
P_\eps^\perp H P_\eps U_\eps 1_{(-\infty,D]}(H_\mathrm{eff})
&= [H,P_\eps]\chi(H) P_\eps U_\eps 1_{(-\infty,D]}(H_\mathrm{eff})+ \mathcal{O}(\eps^{N+1})
\\&=\mathcal{O}(\eps^{N+1})\,.
\end{align*}
For the first term we can then simply choose $k$ large enough for it to be smaller than the second one. This shows that for $\varphi=U_\eps \psi_k$
\begin{equation*}
 \lVert (H-\mu)\varphi \rVert_\mathscr{H} \leq C \eps^{N+1}\,.
\end{equation*}
So either $(H-\mu)\varphi=0$ and $\mu$ is an eigenvalue of $H$, or the vector $\norm{(H-\mu)\varphi}^{-1}{(H-\mu)\varphi}$ is normalised and
\begin{align*}
\mathrm{dist}(\mu, \sigma(H))^{-1}
&=\lVert\left(H-\mu\right)^{-1}\rVert_{\mathscr{L}(\mathscr{H})}\\
&\geq \frac{1}{\lVert (H-\mu)\varphi \rVert} \lVert (H-\mu)^{-1}(H-\mu) \varphi \rVert_\mathscr{H}\\
&\geq\frac{1}{C \eps^{N+1}}\,.
\end{align*}
\end{proof}
In this proof we used the functions $U_\eps \psi_k$ as quasi-modes for $H$. If 
we have $\mu\in \sigma(H)$ with a Weyl sequence $(\varphi_k)_{k\in\N}$ the 
natural choice of quasi-modes for $H_\mathrm{eff}$ is $U_\eps^*P_\eps 
\varphi_k$. If the norm of this sequence is bounded below, we can easily 
reproduce the proof to obtain $\dist(\mu, 
\sigma(H_\mathrm{eff}))=\mathcal{O}(\eps^{N+1})$. Of course if $\mu$ is 
associated with a different eigenband $\tilde \lambda$ than the one we used for 
the construction of $P_\eps$ this will not be the case. If however $\Lambda$ is 
small, the only contribution should be that of the ground state band 
$\lambda_0(x)$. 
To be more precise, 
let $\lambda_1(x):= \min \big(\sigma(H_F(x))\setminus\lbrace\lambda_0\rbrace \big)$ be the second eigenvalue of $H_F(x)$ and put $\Lambda_1:= \inf_{x\in B}\lambda_1(x)$. Let 
$P_\eps$ be the 
super-adiabatic projection constructed for $\lambda_0$ and some $\Lambda$ and $N$.
Then the quadratic form of $H$ on $P_\eps^\perp \mathscr{H}$ satisfies, for every normalised $\psi\in P_\eps^\perp D(H)$,
\begin{align}
 \langle \psi, P_\eps^\perp H P_\eps^\perp \psi\rangle&=
 \langle P_0^\perp \psi,H P_0^\perp \psi \rangle + \mathcal{O}(\eps)\label{eq:lowerbound}\\
 &=\langle P_0^\perp \psi, (-\eps^2 \Delta_h + \eps H_1) P_0^\perp \psi \rangle
 + \underbrace{\langle P_0^\perp \psi, H_F P_0^\perp \psi \rangle}_{\geq \Lambda_1} + \mathcal{O}(\eps)\,.
 \notag
\end{align}
Now if $-\eps^2 \Delta_h + \eps H_1$ is positive (or bounded below by a term of order $\eps$), then the operator $P_\eps^\perp H P_\eps^\perp$ is bounded below by $\Lambda_1 - \mathcal{O}(\eps)$ and thus has no spectrum below this threshold. In a sense, this implies that eigenvalues of $H$ below $\Lambda_1$ must be associated with the effective operator of the ground state band $\lambda_0$. This can be formalised in the following way:
\begin{prop}\label{prop:ground}
Assume $F$ is connected and that $-\eps^2\Delta_h + \eps H_1$ is bounded below by $-C\eps$.
Let $H_\mathrm{eff}$ be the effective operator for $\lambda_0$ and some $\Lambda>0$ and $N\in \N$. Then for every regular cut-off $\chi \in \mathscr{C}^\infty_0\big((-\infty, \Lambda_1), [0,1]\big)$ (cf.~Definition~\ref{def:cutoff}) we have
\begin{equation*}
\norm{H \chi(H) - U_\eps H_\mathrm{eff}  \chi(H_\mathrm{eff})U_\eps^*}=\mathcal{O}(\eps^{N+1})\,.
\end{equation*}
In particular for every $\delta>0$
\begin{equation*}
\dist\big( \sigma(H) \cap(-\infty, \Lambda_1 -\delta], \sigma(H_\mathrm{eff}) \cap(-\infty, \Lambda_1 -\delta]\big)=\mathcal{O}(\eps^{N+1})\,,
\end{equation*}
where $\dist$ denotes the Hausdorff distance between compact subsets of $\R$.
\end{prop}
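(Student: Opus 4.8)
The plan is to reduce everything to the functional calculus of $H$ and to the coercivity estimate~\eqref{eq:lowerbound}. The governing observation is that, $\supp\chi\subset(-\infty,\Lambda_1)$ being compact, both $\chi$ and $g(t):=t\chi(t)$ (smooth and compactly supported below $\Lambda_1$, hence bounded, so that $H\chi(H)=g(H)$ is a bounded operator and $g$ is admissible for Lemma~\ref{lem:chi}) are supported away from the spectrum of $P_\eps^\perp HP_\eps^\perp$ on $P_\eps^\perp\mathscr{H}$: by~\eqref{eq:lowerbound} together with the hypothesis $-\eps^2\Delta_h+\eps H_1\geq -C\eps$, this restriction is bounded below by $\Lambda_1-C'\eps$, so for $\eps$ small its spectrum lies in $[\Lambda_1-C'\eps,\infty)$, where $\chi$ and $g$ vanish. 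Throughout we assume WLOG $\Lambda\geq\Lambda_1$ (otherwise replace $H_\eff$ by the effective operator for $\max(\Lambda,\Lambda_1)$), so that all functions we apply are supported in $(-\infty,\Lambda]$ and Theorem~\ref{thm:proj} and Lemma~\ref{lem:chi} apply. The connectedness of $F$ is used only to guarantee, via Proposition~\ref{prop:gap}, that $\lambda_0$ satisfies Condition~\ref{cond:gap}, so that the super-adiabatic projection $P_\eps$ for $\lambda_0$ exists at all.

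For the first claim I decompose $g(H)=P_\eps g(H)P_\eps+P_\eps^\perp g(H)P_\eps^\perp+P_\eps^\perp g(H)P_\eps+P_\eps g(H)P_\eps^\perp$ and treat the four blocks separately. The two off-diagonal blocks are $P_\eps^\perp g(H)P_\eps=P_\eps^\perp[g(H),P_\eps]$ and $P_\eps g(H)P_\eps^\perp=-P_\eps[g(H),P_\eps]$, and $[g(H),P_\eps]=\mathcal{O}(\eps^{N+1})$ by Lemma~\ref{lem:chi} (the functional-calculus estimate already used in the proof of Corollary~\ref{cor:spectrum}), since $g$ is supported in $(-\infty,\Lambda]$. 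The lower-right block is $P_\eps^\perp g(H)P_\eps^\perp=P_\eps^\perp g(P_\eps^\perp HP_\eps^\perp)P_\eps^\perp+\mathcal{O}(\eps^{N+1})$ by the same lemma (whose hypotheses hold equally for $P_\eps^\perp$), and $g(P_\eps^\perp HP_\eps^\perp)$ vanishes on $P_\eps^\perp\mathscr{H}$ by the spectral observation above, so this block is $\mathcal{O}(\eps^{N+1})$ too. Finally $P_\eps g(H)P_\eps=P_\eps g(P_\eps HP_\eps)P_\eps+\mathcal{O}(\eps^{N+1})$ by Lemma~\ref{lem:chi}; and since $P_\eps HP_\eps$ leaves $P_\eps\mathscr{H}$ and $P_\eps^\perp\mathscr{H}$ invariant, acts as $0$ on the latter (so $g(P_\eps HP_\eps)=0$ there because $g(0)=0$) and, on the former, is unitarily equivalent via $U_\eps$ to $H_\eff$, we get $P_\eps g(P_\eps HP_\eps)P_\eps=g(P_\eps HP_\eps)=U_\eps g(H_\eff)U_\eps^*=U_\eps H_\eff\chi(H_\eff)U_\eps^*$. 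Summing the four contributions gives $\norm{H\chi(H)-U_\eps H_\eff\chi(H_\eff)U_\eps^*}=\mathcal{O}(\eps^{N+1})$.

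For the Hausdorff-distance statement, one inclusion is Corollary~\ref{cor:spectrum}: every $\mu\in\sigma(H_\eff)$ with $\mu\leq\Lambda_1-\delta\ (\leq\Lambda-\delta)$ lies within $\mathcal{O}(\eps^{N+1})$ of $\sigma(H)$. For the converse, given $\mu\in\sigma(H)$ with $\mu\leq\Lambda_1-\delta$, I take a Weyl sequence $(\varphi_k)$ for $\mu$ inside $1_{(-\infty,\Lambda_1-\delta/2]}(H)\mathscr{H}\cap D(H)$ and use $U_\eps^*P_\eps\varphi_k\in D_\eff$ as quasi-modes for $H_\eff$, exactly as in the proof of Corollary~\ref{cor:spectrum}: from $(H_\eff-\mu)U_\eps^*P_\eps\varphi_k=U_\eps^*(P_\eps HP_\eps-\mu)P_\eps\varphi_k$ and $P_\eps HP_\eps^\perp\varphi_k=-P_\eps[H,P_\eps]\varphi_k=\mathcal{O}(\eps^{N+1})$ (Theorem~\ref{thm:proj} with $\varrho=1_{(-\infty,\Lambda_1-\delta/2]}$) one obtains $\norm{(H_\eff-\mu)U_\eps^*P_\eps\varphi_k}\leq\norm{(H-\mu)\varphi_k}+\mathcal{O}(\eps^{N+1})$. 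The only substantial point is that $\norm{U_\eps^*P_\eps\varphi_k}=\norm{P_\eps\varphi_k}$ stays bounded below; this is where the gap enters quantitatively: combining $\langle\varphi_k,H\varphi_k\rangle\to\mu$, the uniform semiboundedness $H\geq-C_0$ (which follows from $-\eps^2\Delta_h+\eps H_1\geq-C\eps$ and $H_F\geq\inf\sigma(H_F)>-\infty$), the cross-term bound $\abs{\langle\varphi_k,(P_\eps HP_\eps^\perp+P_\eps^\perp HP_\eps)\varphi_k\rangle}=\mathcal{O}(\eps^{N+1})$, and the lower bound~\eqref{eq:lowerbound} for $\langle\varphi_k,P_\eps^\perp HP_\eps^\perp\varphi_k\rangle$, one gets
\begin{equation*}
\norm{P_\eps^\perp\varphi_k}^2\leq\frac{\mu+C_0+o(1)}{\Lambda_1+C_0-\mathcal{O}(\eps)}\,,
\end{equation*}
which for $\mu\leq\Lambda_1-\delta$ stays bounded away from $1$. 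Hence $\dist(\mu,\sigma(H_\eff))=\mathcal{O}(\eps^{N+1})$, and combining the two inclusions — with the usual care near the endpoint $\Lambda_1-\delta$, for instance by running the quasi-mode estimates with $\delta/2$ in place of $\delta$ — gives the Hausdorff bound.

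I expect the main obstacle to be packaging the functional calculus correctly: turning the localized commutator estimate $\norm{[H,P_\eps]\varrho(H)}=\mathcal{O}(\eps^{N+1})$ of Theorem~\ref{thm:proj} into the operator bounds $[g(H),P_\eps]=\mathcal{O}(\eps^{N+1})$ and $P_\eps g(H)P_\eps=P_\eps g(P_\eps HP_\eps)P_\eps+\mathcal{O}(\eps^{N+1})$ requires a Helffer--Sjöstrand almost-analytic-extension argument with resolvent bounds that carefully exploit both that $g$ is supported below $\Lambda$ and that $\sigma(P_\eps^\perp HP_\eps^\perp)$ on $P_\eps^\perp\mathscr{H}$ is far away. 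This is precisely the content of Lemma~\ref{lem:chi}, which is already established and used verbatim in Corollary~\ref{cor:spectrum}, so in the present proposition it is only a matter of combining that lemma with the coercivity estimate~\eqref{eq:lowerbound}; the one genuinely proposition-specific subtlety is the lower bound on $\norm{P_\eps\varphi_k}$ in the reverse spectral inclusion, which is exactly what forces both standing hypotheses ($F$ connected, and $-\eps^2\Delta_h+\eps H_1$ bounded below by $-C\eps$).
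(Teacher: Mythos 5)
Your argument is correct and follows essentially the same route as the paper: both proofs reduce the operator-norm statement to Lemma~\ref{lem:chi} applied with $T=P_\eps$ and $T=P_\eps^\perp$ together with the coercivity bound~\eqref{eq:lowerbound} (which kills the $P_\eps^\perp$-block), and both obtain the spectral statement by the two-sided quasi-mode argument of Corollary~\ref{cor:spectrum}, with the lower bound on $\norm{P_\eps\varphi_k}$ being exactly the point where connectedness of $F$ and the bound $-\eps^2\Delta_h+\eps H_1\geq-C\eps$ enter. One technical point needs patching: you apply Lemma~\ref{lem:chi} to $g(t)=t\chi(t)$, but $g$ is not a regular cut-off in the sense of Definition~\ref{def:cutoff} (it is not $[0,1]$-valued, and if $g$ changes sign the powers $g^s$ appearing in the hypotheses are not even defined). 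This is repairable — e.g.\ shift $H$ by a constant so that $g\geq 0$ on $\supp\chi$ and normalise, at the cost of also having to compare $\chi(H)$ with $U_\eps\chi(H_\eff)U_\eps^*$ — but the paper sidesteps it entirely by applying the functional-calculus lemma only to $\chi$ itself, writing $H\chi(H)=\big(P_\eps HP_\eps+P_\eps^\perp HP_\eps^\perp+(1-2P_\eps)[H,P_\eps]\big)\chi(H)$ and carrying the unbounded factor of $H$ through $\mathscr{L}(\mathscr{H},D(H))$ estimates of the form $\norm{P_\eps^\perp\chi(H)}_{\mathscr{L}(\mathscr{H},D(H))}=\mathcal{O}(\eps^{N+1})$. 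Incidentally, that last estimate also gives the lower bound $\norm{P_\eps\varphi_k}\geq 1-\mathcal{O}(\eps^{N+1})$ directly (since $\varphi_k=\chi(H)\varphi_k$ for a suitable $\chi$ equal to one near $\mu$), which is slicker than, though equivalent in effect to, your quadratic-form computation.
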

The projection $P_\eps$ has an explicit recursive construction that allows for an expansion of the effective operator. This expansion involves differential operators of increasing order, so uniform estimates can only hold after cutting off high energies. Up to order $\eps^3$ this expansion consists of the adiabatic operator
\begin{equation*}
 H_\mathrm{a}:=P_0HP_0= -\eps^2 P_0\Delta_h P_0 + \lambda +  \eps P_0 H_1 P_0
\end{equation*}
and the first super-adiabatic correction
\begin{equation*}
 \mathcal{M}:= P_0 [H,P_0]R_F(\lambda)[H,P_0]P_0\,,
\end{equation*}
where $R_F(\lambda)$ denotes the reduced resolvent $(H_F-\lambda)^{-1}P_0^\perp$ of the eigenband $\lambda$.
The expansion is accurate to order $\eps^3$ in the sense that
\begin{equation*}
 \norm{H_\mathrm{eff}\chi^2(H_\mathrm{eff}) - \chi(H_\mathrm{eff})\left(H_\mathrm{a} + \mathcal{M}\right) \chi(H_\mathrm{eff})}_{\mathscr{L}(\mathscr{H})}=\mathcal{O}(\eps^3)\,,
\end{equation*}
for a sufficiently regular function $\chi\in \mathscr{C}^\infty_0\big((-\infty, \Lambda], [0,1]\big)$.
A proof of this statement can be found in~\cite[Section 2.2.1]{Lam}. Here we will not give the details of this proof, but the form of the expansion will become clear from the construction of $P_\eps$.

We may observe here that $\mathcal{M}$ is a fourth-order differential operator, which is the reason why energy cut-offs were needed in the statements of all the previous theorems. This also suggests that at small energies the super-adiabatic corrections might be of less importance and the adiabatic operator $H_\mathrm{a}$ already provides a good description. This is true under some additional assumptions on $H_1$ (see Section~\ref{sect:low}, Condition~\ref{cond:low}). Our main results on the spectrum at energies of order $\eps^\alpha$ above $\Lambda_0:=\inf_{x\in B} \lambda_0(x)$ are (the precise statements are given in Section~\ref{sect:low}):
\begin{enumerate}
 \item 
 $\dist\big(\sigma(H)\cap (-\infty, \Lambda_0 + C\eps^\alpha],\sigma(H_\mathrm{a})\cap (-\infty,  \Lambda_0 + C\eps^\alpha]\big)=\mathcal{O}(\eps^{2+\alpha/2})$.
 \item If $H_\mathrm{a}$ has $K+1$ eigenvalues $\mu_0< \mu_1\leq \dots\leq \mu_K 
< C\eps^\alpha$ below the essential spectrum, then the bottom of $\sigma(H)$ 
also consists of eigenvalues $\lambda_0<\lambda_1\leq  \dots\leq \lambda_K$ and 
for $0\leq j\leq K$:
 \begin{equation*}
  \abs{\mu_j - \lambda_j}=\mathcal{O}(\eps^{2+\alpha})\,.
 \end{equation*}
\end{enumerate}
%
%
The result on low lying eigenvalues is relevant also if $B$ is not compact, as it has been stressed in the literature~\cite{BGRSweakly, CEKtop, CDFK, DEbound, GoJa, Gru, LiLu2} that such eigenvalues exist in interesting applications. In Section~\ref{sect:low} we will also derive some results on the approximation of eigenfunctions. These are relevant for the application to nodal sets~\cite{LaNod} where it is shown that, for certain low lying eigenvalues, the behaviour of the nodal set is essentially determined by the nodal set of the eigenfunction of $H_\mathrm{a}$ with the corresponding eigenvalue.

A large portion of the literature on the adiabatic limit of Schrödinger operators is concerned with quantum waveguides. There, one starts with some sort of tubular neighbourhood of an embedded submanifold of $\R^m$. This leads to a fibre bundle $M$ with base diffeomorphic to that submanifold, as we describe in detail in~\cite{HLT} (see also~\cite[Chapter 3]{Lam}). 

The most commonly treated case is $B=\R$ or $B=I\subset \R$ an interval~\cite{BMT,BGRSweakly,CDFK,deO,deOVe,DEbound,FrSo,Ga,GoJa}. The fibre of such a tube is a compact domain whose dimension is the codimension of $B$. Topologically $M$ is the product of a finite or infinite interval and a compact domain. This seemingly simple situation already allows for several different effects that depend on the codimension of $B$ and manifest themselves in the metric of $M$, both in the choice of horizontal bundle $NF$ (see Example~\ref{ex:vfields}) and as corrections to the submersion metric $g_\eps$ that we treat in form of the perturbation $H_1$.
The authors of~\cite{BGRSweakly, FrSo} treat a tubular neighbourhood of varying width of the $x$-axis in $\R^2$. In~\cite{CDFK, DEbound, GoJa} the emphasis is put on the effect of \enquote{bending}, i.e.~the exterior curvature of the submanifold. The similar case with an embedding into a complete, non-compact surface $\Sigma$, in which also the curvature of $\Sigma$ plays a role, was treated by Krej{\v{c}}i{\v{r}}{\'\i}k~\cite{Krestrip}. Additionally the effect of \enquote{twisting}, which is present when $NF$ is not tangent to $\partial M$, is discussed in~\cite{BMT, deO, deOVe, Ga}. 

The results for bases of higher dimensions are far less detailed. In~\cite{CEKtop} and earlier works Carron, Exner and Krej{\v{c}}i{\v{r}}{\'\i}k study embeddings of surfaces into $\R^3$, while Lin and Lu~\cite{LiLu2} consider special submanifolds of $\R^k$ of arbitrary dimension and codimension. The induced metrics on the base are assumed to be geodesically complete and asymptotically flat, and $M$ is taken as a neighbourhood of zero in $NB$, whose fibre is a metric ball of fixed radius. Wittich~\cite{Wi} treats tubular neighbourhoods of compact manifolds in a Riemannian manifold $(A,g_A)$ whose fibres are geodesic balls in the normal directions. The emphasis of these works are the effects of extrinsic and intrinsic curvature on the spectrum and the resolvent of the Laplacian.

More general manifolds have been considered with metrics that are of a simpler structure than those arising from embeddings. In this context one is usually concerned with closed fibres. Baider~\cite{Bai} works with warped products, Kleine~\cite{Kle} treats more general metrics on manifolds of the form $\R_+ \times F$ and the authors of~\cite{BMP, Bor} study Riemannian submersions with some additional assumptions on the mean curvature vector of the fibres. The works~\cite{Bai, Bor, BMP, Kle} derive conditions for the Laplacian on a non-compact $M$ to have discrete spectrum. These will not be satisfied under our technical assumptions (Section~\ref{sect:geom}), since they require non-uniform behaviour of the geometry. But although our uniform estimates do not hold for these spaces, the effective operators can be formally calculated and give good intuition both for understanding such results and for possible choices of trial functions to prove them.

In~\cite{FrHe, WaTeConst} localisation to submanifolds is achieved using potentials rather than boundary conditions. The localisation is used to reformulate the problem on the normal bundle $NB$, so the structure is very similar to our problem, with $M=NB$ and a potential $V\neq 0$ of a form that gives localised eigenfunctions of $H_F=-\Delta_F +V$. Froese and Herbst~\cite{FrHe}  assume that $B$ is a compact, complete submanifold of $\R^m$, while in~\cite{WaTeConst} the base and the ambient space in which it is embedded are, apart from technical assumptions, basically arbitrary complete Riemannian manifolds. The leading order of the metric on $NB$ arising in this situation is the Sasaki metric, which is a Riemannian submersion with totally geodesic fibres.

Our approach considerably generalises the geometries that have been considered in the literature. On the one hand we consider very general fibre bundles without reference to an embedding, while on the other hand we include the flexibility needed to deal with the complicated metrics such an embedding may induce. This shows that a large class of problems have the sufficient structure for adiabatic techniques to be applicable. Our results also complement the previously studied quantum waveguides by allowing for generic deformations of the fibres, as opposed to scaling and twisting only. For example, one may think of deforming a disk into an elliptic cross-section along the waveguide. The concept of a \enquote{quantum waveguide} may also be generalised to hypersurfaces that are boundaries of such tubular neighbourhoods. 
In~\cite{HLT} several such examples are analysed in detail.
By our present work these problems are reduced to the calculation of the induced metric as well as the effective operator. Similarly, the treatment of submanifolds of Riemannian manifolds is possible using the techniques of Wittich~\cite{Wi} (see also~\cite[Chapter 3]{Lam}).

An effective operator is derived in~\cite{BMT, deO, deOVe, FrSo, Wi} in the sense of resolvent convergence. In the works~\cite{BMT, deO, Wi} this is convergence of $\eps^{-2}(H-\lambda_0)$ to (the leading order of) $\eps^{-2}(H_\mathrm{a}-\lambda_0)$ for the ground state band $\lambda_0$. The validity of these results hence depends on the fact that the limiting object is independent of $\eps$. This means that the typical energy scale of $H_\mathrm{a}$ must be $\eps^2$, which is generally only the case in $\lambda_0\equiv \text{const}$ (see also the discussion of small energies in Section~\ref{sect:low}). De Oliveira and Verri~\cite{deOVe} treat the situation where $\lambda_0$ has a unique, non-degenerate minimum and this scaling is of order $\eps$. We see an advantage of our approach in the fact that a priori we do not place any restrictions on the behaviour of $\lambda$, and that we can treat also bands different from the ground state. Since our statements are all asymptotic in nature we can naturally 
establish 
approximations beyond the leading order determined by the resolvent limit. So far such refinements were given only by Duclos and Exner~\cite{DEbound}, for a special case where $[\Delta_h, P_0]=0$ and the error is purely due to $\eps H_1$, and in~\cite{WaTeConst} for simple eigenbands and with errors of order $\eps^3$.

\section{Adiabatic theory on fibre bundles}
\subsection{Riemannian submersions of bounded geometry}\label{sect:geom}
In this section we spell out the conditions we pose on the geometry and establish their key consequences. All of our manifolds will satisfy some form of bounded geometry, adapted to their respective structures. The following definition of bounded geometry for manifolds with boundary (or $\partial$-bounded geometry) was introduced by Schick~\cite{Schi}.
\begin{definition}\label{def:delta-bg}
 A Riemannian manifold $(M,g)$ with boundary $\partial M$ is a \textit{$\partial$-manifold of bounded geometry} if the following hold:
 \begin{itemize}
\item \textit{Normal collar}: Let $\nu$ be the inward pointing unit normal of $\partial M$. There exists $r_c>0$ such that the map 
\begin{equation*}
b{:}\,\partial M \times [0,r_c) \to M\,, \qquad (p,t)\mapsto \exp_p(t \nu)
\end{equation*}
is a diffeomorphism to its image.
\item \textit{Injectivity radius of the boundary}: The injectivity radius of $\partial M$ with the induced metric is positive, $r_i(\partial M, g\vert_{\partial M})>0$.
\item \textit{Injectivity radius in the interior}: There is $r_i>0$ such that for $p\in M$ with $\dist(p,\partial M) > r_c/3$ the exponential map restricted to $B(r_i, 0)\subset T_p M$ is a diffeomorphism onto its range.
\item \textit{Curvature bounds}: The curvature tensor of $M$ and the  second fundamental form $S$ of $\partial M$ are $\mathscr{C}^\infty$-bounded tensors on $M$ and $\partial M$ respectively. That is, for every $k\in\N$ their covariant derivatives of order $k$ have $g$-norms bounded by a constant $C(k)$.
\end{itemize}
\end{definition}
If the boundary of $M$ is empty, so are all the conditions on it and the definition reduces to the usual one as given by Eichhorn~\cite{EichBG}. A (vector-) bundle of bounded geometry is usually defined by requiring bounds on trivialisations or transition functions. We adapt this idea here to define more general fibre bundles of bounded geometry.
\begin{definition}
\label{def:unitriv}
Let $(B,g_B)$ be a geodesically complete manifold of bounded geometry. A Riemannian submersion $F\to (M,g) \stackrel{\pi}{\to}(B,g_B)$ is \textit{uniformly locally trivial} if there exists a metric $g_0$ on $F$ such that for every $x\in B$ and metric ball $B(r,x)$ of radius $r<r_i(B)$ there is a trivialisation
\begin{equation*}
 \Phi{:}\,\big(\pi^{-1}(B(r,x)),g\big)\to \big(B(r,x)\times F,g_B\times g_0\big)\,,
\end{equation*}
with $\Phi_{*}$ and $\Phi^*$ bounded with all their covariant derivatives, uniformly in $x$ and $r$.
\end{definition}
We say that such a fibre bundle is of bounded geometry if $(F,g_0)$ is of $\partial$-bounded geometry, which will always be the case since we consider compact fibres. Of course a fibre bundle is also a manifold and the two conflicting notions of bounded geometry are reconciled by:
\begin{lem}
 Let $(M,g)\stackrel{\pi}{\to}(B,g_B)$ be uniformly locally trivial fibre bundle whose fibre $(F,g_0)$ is of bounded geometry. Then the total space is a manifold of bounded geometry in the sense of Definition~\ref{def:delta-bg}.
\end{lem}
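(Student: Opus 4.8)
The plan is to verify bounded geometry of $(M,g)$ through the standard reformulation of Definition~\ref{def:delta-bg}: a complete $\partial$-manifold is of bounded geometry if and only if, for some fixed $r_0>0$, it admits a family of charts $\{\kappa_p\}_{p\in M}$ whose images cover $M$, modelled on balls of radius $r_0$ in $\R^{\dim M}$ away from $\partial M$ and on half-balls of radius $r_0$ taking $\{x_{\dim M}=0\}$ onto a piece of $\partial M$ near the boundary, and such that in each $\kappa_p$ the metric coefficients $\kappa_p^*g$ are uniformly elliptic and bounded in $\mathscr{C}^k$ for every $k$, with constants independent of $p$ (see~\cite{Schi}; for $\partial M=\varnothing$ this is the usual characterisation, cf.~\cite{EichBG}). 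I would build such an atlas for $M$ from the uniform trivialisations of Definition~\ref{def:unitriv} together with bounded-geometry atlases of the base and of the fibre. Note in passing that $M$ is complete: $\pi$ is $1$-Lipschitz, so a Cauchy sequence projects to one in $B$, and inside a trivialisation over a small ball the metric is uniformly comparable to a product $g_B\times g_0$, in which the fibre component of the sequence is Cauchy in the compact $(F,g_0)$.

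\textbf{Step 1 (the model product).} Since $F$ is compact, $(F,g_0)$ is trivially a $\partial$-manifold of bounded geometry, and $(B,g_B)$ is a complete manifold of bounded geometry by hypothesis. Hence $B\times F$ with the product metric $g_B\times g_0$ is a complete $\partial$-manifold of bounded geometry with boundary $B\times\partial F$: a uniform atlas consists of the products $\alpha\times\beta$ of a bounded-geometry chart $\alpha$ of $B$ with a bounded-geometry chart $\beta$ of $F$ (a boundary chart of $F$ when its centre lies close to $\partial F$), restricted to a ball of fixed radius inside the product of their domains; the pulled-back metric is block diagonal, and each block is uniformly controlled.

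\textbf{Step 2 (transfer to $M$).} Fix $r:=r_i(B)/2$. Given $p\in M$, put $x:=\pi(p)$ and take the trivialisation $\Phi\colon(\pi^{-1}(B(r,x)),g)\to(B(r,x)\times F,g_B\times g_0)$ of Definition~\ref{def:unitriv}; since $\Phi$ is a bundle trivialisation, $\Phi(p)=(x,q)$ for some $q\in F$. Pick a bounded-geometry chart $\alpha$ of $B$ centred at $x$ and a bounded-geometry chart $\beta$ of $F$ centred at $q$, of boundary type when $q$ is close to $\partial F$ and interior otherwise, and set $\kappa_p:=\Phi^{-1}\circ(\alpha\times\beta)$, a chart of $M$ with $\kappa_p(0)=p$. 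As $\Phi$ trivialises the bundle of $\partial$-manifolds $M\to B$, it carries $\pi^{-1}(B(r,x))\cap\partial M$ onto $B(r,x)\times\partial F$; hence $\kappa_p$ is an interior chart exactly when $q$ is far from $\partial F$ and a boundary chart otherwise, with model radius bounded below independently of $p$ (only $r_i(B)$ and the bounded-geometry data of $F$ enter). It remains to control $\kappa_p^*g=(\alpha\times\beta)^*(\Phi^{-1})^*g$. By Definition~\ref{def:unitriv}, $\Phi$ has differential and inverse differential bounded together with all their covariant derivatives, uniformly in $x$ and $r$; equivalently $(\Phi^{-1})^*g$ is uniformly equivalent to $g_B\times g_0$ with uniformly bounded covariant derivatives of every order. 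Composing with the fixed model charts of Step 1 --- in which $g_B\times g_0$ is already uniformly elliptic and $\mathscr{C}^k$-bounded --- this makes $\kappa_p^*g$ uniformly elliptic and $\mathscr{C}^k$-bounded as well. Therefore $\{\kappa_p\}_{p\in M}$ is a uniform bounded-geometry atlas, and the reformulated criterion yields that $(M,g)$ is a $\partial$-manifold of bounded geometry.

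\textbf{Main obstacle.} The technical heart is the last claim of Step 2: that pulling a uniformly-controlled metric back by a diffeomorphism whose full jet and inverse jet are uniformly bounded again produces a uniformly-controlled metric. This is a chain-rule / Koszul-formula bookkeeping argument, but it must be carried out with the covariant, frame-independent norms used in Definition~\ref{def:unitriv} so that uniformity in $p$ is manifest, and only translated into coordinate estimates inside the fixed model charts of Step 1; keeping the two Levi-Civita connections apart is the one place where care is needed. If instead one checked Schick's four conditions for $M$ directly, the difficulty would migrate to the normal-collar width and the interior and boundary injectivity radii, which are not local and cannot be read off a single trivialisation; one would then need, beyond the quasi-isometry on each $\pi^{-1}(B(r,x))$, arguments excluding short geodesic loops and early conjugate points --- which is exactly the content packaged into the atlas characterisation used above.
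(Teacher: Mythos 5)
Your overall strategy --- transporting the product structure of $B\times F$ through the uniformly bounded trivialisations --- is the same as the paper's, and the parts of the argument you actually carry out (the curvature and second-fundamental-form bounds, which follow from the coordinate expression of $\kappa_p^*g$, and the completeness remark) are fine. The paper itself only sketches the proof, deferring it to~\cite{Lam}, but it is explicit that the curvature bounds are the easy part and that \emph{the bounds on the injectivity radii are proved by a technique reminiscent of Cheeger's lemma}. That is precisely the part your proposal does not supply.

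The gap sits in your ``reformulated criterion''. The equivalence proved in~\cite{Schi} (and in~\cite{EichBG} for $\partial M=\varnothing$) is between definition~\ref{def:delta-bg} and uniform control of the metric in \emph{specific} charts --- geodesic normal coordinates in the interior and normal-collar coordinates composed with boundary Gaussian coordinates near $\partial M$ --- whose very construction presupposes the injectivity-radius and collar-width bounds. What you need is the stronger statement that an \emph{arbitrary} uniform atlas of fixed model radius with uniformly elliptic, $\mathscr{C}^k$-bounded coefficients implies the three injectivity-type conditions (normal collar width, $r_i(\partial M)$, interior $r_i$). That implication is true, but it is not a restatement of Schick's theorem: its proof extracts a uniform lower volume bound for small balls from the uniform ellipticity of the charts and then invokes a Cheeger--Gromov-type argument to exclude short geodesic loops (and, near the boundary, early returns or crossings of the normal geodesics). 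In other words, the content you describe in your ``Main obstacle'' paragraph as being ``packaged into the atlas characterisation'' is not available off the shelf in the cited form; it is the substance of the lemma, and it is exactly the Cheeger-lemma step the paper points to. To close the proposal you would either have to prove the arbitrary-atlas characterisation for $\partial$-manifolds (volume lower bound plus Cheeger's lemma, plus the focal-point/collar analogue at the boundary), or convert your charts $\kappa_p$ into Schick's normal coordinates with uniform control --- which again requires the injectivity bounds first.
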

The proof of this statement can be found in~\cite{Lam}. The curvature bounds are a straightforward consequence of the definition, while the bounds on injectivity radii can be proved by a technique reminiscent of Cheeger's lemma~\cite{Chee}.

When dealing with the rescaled family $(M,g_\eps)$ it is crucial to note that the estimates on geometric quantities required in Definition~\ref{def:delta-bg} can only become better as $\eps$ decreases. The curvatures for example are completely fixed when they only concern the vertical directions, while they converge to zero (in $g$-norm) if they are associated with at least one horizontal direction. We thus have:
\begin{lem}\label{lem:unifbg}
 Let $M\stackrel{\pi}{\to} B$ be uniformly locally trivial and $F$ compact. Then $(M,g_\eps)=(M,g_F + \eps^{-2}\pi^*g_B)$ satisfies Definition~\ref{def:delta-bg} with the same constants $\lbrace r_c, r_i(\partial M), r_i, C(k): k\in \mathbb{N} \rbrace$ as $(M,g)$.
\end{lem}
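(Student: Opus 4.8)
\emph{Proof strategy.}
The whole point is that $g_\eps$ differs from $g$ only by \emph{stretching} the horizontal directions by the factor $\eps^{-1}\ge 1$, while leaving the fibres $(F_x,g_{F_x})$ and the $g$-orthogonal splitting $TM=TF\oplus NF$ untouched (the $g_F$- and $\pi^*g_B$-blocks stay $g_\eps$-orthogonal for every $\eps$). Hence every quantity entering definition~\ref{def:delta-bg} is \emph{monotone in $\eps$}: it either stays fixed (vertical data) or becomes less severe as $\eps$ decreases, so its value at any $\eps\le1$ is dominated by its value at $\eps=1$, which is the corresponding datum of $(M,g)$. First I would record the $\eps$-independent structural facts: $TF=\ker\pi_*$, $NF=TF^{\perp_{g_\eps}}=TF^{\perp_g}$, the fibre metrics $g_{F_x}$, the boundary $\partial M=\pi^{-1}(\partial F)$ and its inward unit normal $\nu$, which is a fixed $g_F$-unit \emph{vertical} field; and the fact that $g_\eps\ge g$ for $\eps\le1$, so $\dist_{g_\eps}\ge\dist_g$, $B_{g_\eps}(r,p)\subseteq B_g(r,p)$, and $\vol_{g_\eps}=\eps^{-\dim B}\vol_g$.

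For the \emph{curvature bounds} I would use O'Neill's formulae for the Riemannian submersion $F\to(M,g_\eps)\to(B,\eps^{-2}g_B)$, which express $R^{g_\eps}$ and $\nabla^{g_\eps}$ through $\eps$-independent data: the fibre curvatures, the O'Neill tensors $A$, $T$ of the submersion, and the curvature of $(B,g_B)$. One checks e.g.\ that $A^{g_\eps}=A^g$ (it is built from the $\eps$-independent vertical projection and Lie bracket) and that $T^{g_\eps}$ and the base curvature differ from their $g$-counterparts only through \emph{horizontal} derivatives. The decisive bookkeeping is that in a $g_\eps$-orthonormal frame a unit horizontal vector equals $\eps$ times a $g$-unit one, so that every raising of a horizontal index by $g_\eps^{-1}=\eps^2 g^{-1}$ is compensated and every term of $\norm{\nabla^{g_\eps,k}R^{g_\eps}}_{g_\eps}$ carries a \emph{non-negative} power of $\eps$; the same applies to the difference tensor $\nabla^{g_\eps}-\nabla^g$, whose building block is $(\eps^{-2}-1)\pi^*g_B$ of uniformly bounded $g_\eps$-norm, and to its iterates. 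Thus $\norm{\nabla^{g_\eps,k}R^{g_\eps}}_{g_\eps}\le\norm{\nabla^{g,k}R^{g}}_{g}\le C(k)$, and running the identical argument on $\partial F\to(\partial M,g_\eps\vert_{\partial M})\to(B,\eps^{-2}g_B)$ bounds the covariant derivatives of the second fundamental form $S$ of $\partial M$ by the same constants $C(k)$.

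For the three radius conditions I would use that $(M,g_\eps)$ is precisely the submersion with the \emph{same} compact fibre $(F,g_0)$ over $(B,\eps^{-2}g_B)$, a rescaling of $(B,g_B)$ by $\eps^{-1}\ge1$: this multiplies injectivity radii by $\eps^{-1}$ and multiplies $\norm{\nabla^kR}$ by $\eps^{2+k}$, so $(B,\eps^{-2}g_B)$ is again geodesically complete of bounded geometry with constants no worse than those of $(B,g_B)$. Since $B_{\eps^{-2}g_B}(r,x)=B_{g_B}(\eps r,x)$ with $\eps r<r_i(B,g_B)$ whenever $r<r_i(B,\eps^{-2}g_B)$, the very trivialisations $\Phi$ of definition~\ref{def:unitriv} still serve for $(M,g_\eps)$, and the rescaling of the base factor cancels between source and target so that $\Phi_*,\Phi^*$ retain bounds uniform in $\eps$ (with $g_\eps$-covariant derivatives, by the bookkeeping above). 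Hence the hypotheses of the preceding lemma hold for $F\to(M,g_\eps)\to(B,\eps^{-2}g_B)$; inspecting its proof — the normal collar and $r_i(\partial M,g_\eps\vert_{\partial M})$ are read off from these uniform trivialisations together with the fixed collar of $(F,g_0)$, and the interior $r_i$ from a Cheeger-type estimate~\cite{Chee} in terms of the (monotone) curvature bound and the $g_\eps$-volume lower bound of balls, which only improves — shows that $(M,g_\eps)$ satisfies definition~\ref{def:delta-bg} with the constants $r_c$, $r_i(\partial M)$, $r_i$, $C(k)$ of $(M,g)$.

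The main obstacle is the $\eps$-bookkeeping of the second step: one must verify that no raising or lowering of a horizontal index by $g_\eps$ ever produces a surviving negative power of $\eps$, uniformly over all $k$, so that $\norm{\nabla^{g_\eps,k}R^{g_\eps}}_{g_\eps}$ is genuinely non-increasing in $\eps$. A subtler point is the normal collar: the $g_\eps$-normal geodesics issuing from $\partial M$ are \emph{not} the $g$-ones — they bend away from the fibres, an $\eps$-dependent effect governed by the second fundamental form of the fibres in $(M,g_\eps)$ — so the invariance of $r_c$ cannot be argued merely from $\nu$ being $\eps$-independent, but must be extracted from the uniform local model above, checking that the $F$-factor (and hence its collar) is unaffected by the rescaling of the base directions.
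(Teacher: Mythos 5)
Your strategy is exactly the one the paper relies on — the paper in fact states lemma~\ref{lem:unifbg} without a formal proof, justifying it only by the preceding observation that every quantity entering definition~\ref{def:delta-bg} is either fixed (purely vertical data) or improves as $\eps$ decreases (anything carrying a horizontal index), which is precisely the monotone $\eps$-bookkeeping you carry out via O'Neill's formulae and the rescaled base $(B,\eps^{-2}g_B)$. One small correction: when the horizontal distribution is not tangent to $\partial M$ (the \enquote{twisting} case), the $g_\eps$-unit normal of $\partial M$ is \emph{not} a fixed vertical field but acquires an $\mathcal{O}(\eps^2)$ horizontal component, so the normal-collar condition really must be extracted from the uniform local model, as you do in your final paragraph, rather than from an $\eps$-independent $\nu$.
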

From now on we will always assume the following:
\begin{cond}\label{cond:geom}
\leavevmode
\begin{itemize}
\item $F$ is compact,
\item $(B,g_B)$ is of bounded geometry and geodesically complete,
\item $M\stackrel{\pi}{\to}B$ is uniformly locally trivial.
\end{itemize}
\end{cond}
We will call tensors $\mathscr{C}^\infty$-bounded on $M$ if all their $g$-covariant derivatives have bounded $g$-norm. Because of the uniformity of the trivialisations this is equivalent to having bounded derivatives locally on $U\times F$. 

Since $(B,g_B)$ is of bounded geometry and complete, we can choose ${r<r_i(B)}$ 
and an open cover $\mathfrak{U}=\lbrace U_\nu: \nu \in \N \rbrace$ of $B$ by 
geodesic balls of radius $r$ in such a way that any one of these balls 
intersects at most $N(\mathfrak{U})$ others. For every such ball we have 
geodesic coordinates, an orthonormal frame of $\mathscr{C}^\infty$-bounded 
vector fields $\lbrace X_i^\nu: 1\leq i\leq d \rbrace$ on $U_\nu$, obtained by 
radial parallel transport of the basis defining the coordinates, and a 
trivialisation $\Phi_\nu$ of $\pi^{-1}(U_\nu)$, bounded as required by 
Definition~\ref{def:unitriv}. We may also choose a partition of unity $\lbrace 
\chi_\nu: \nu \in \N \rbrace$ subordinate to $\mathfrak{U}$ in such a way that 
all of these objects are $\mathscr{C}^\infty$-bounded uniformly in $\nu$. We fix 
this data related to the cover and will later refer to it simply as 
$\mathfrak{U}$\label{not:U} (for the details of these constructions 
see~\cite{EichBG}).

We can use this cover to define ($L^2$-) Sobolev spaces on $M$ adapted to the bundle structure and the scaled metric $g_\eps$. For this, first define Sobolev spaces on $(F,g_0)$ using a fixed finite cover and, for $x\in U_\nu$, let $\rho^2_{\nu,x}$ be the density $({\Phi_\nu}_*\mathrm{vol}_{F_x})/\mathrm{vol}_{g_0}$ on $F$.
\begin{definition}\label{def:W}
For $\psi\in\mathscr{C}^\infty (F_x)$ and $k\in \mathbb{N}$ put
\begin{equation*}
\norm{\psi}_{W^k_\nu(F_x)}:=\norm{(\Phi_{\nu*}\psi)\rho_{\nu,x}}_{W^{k}(F,g_0)}\,.
\end{equation*}
Denote multiindices by $\alpha \in \mathbb{N}^d$ and define the norm
\begin{equation}\label{eq:normdef}
\norm{\psi}_{W^{k}_\eps(M)}^2:=\sum_\nu \sum_{\abs{\alpha}\leq k} \int_{U_\nu} 
\Big\lVert\eps^{\abs{\alpha}}\prod_{i\leq d}(\Phi_\nu^* X_i^\nu)^{\alpha_i}\chi_\nu \psi
\Big\rVert^2_{W^{k-\abs{\alpha}}_\nu(F_x)}
\mathrm{vol}_{g_B}(\ud x)
\end{equation}
and the \textit{Sobolev space $W^k_{\eps}$} as the completion of $\mathscr{C}_0^\infty (M)$ under this norm. Define $W^{k}_{0,\eps}(M)$ as the closure of $\mathscr{C}_0^\infty (M\setminus\partial M)$ in $W_\eps^k(M)$.
\end{definition}
Sobolev spaces on manifolds of $\partial$-bounded geometry were introduced by 
Schick \cite{Schi2} using normal coordinates of the metric. The virtue of our 
definition is that the same coordinate maps are used for every $\eps$ and that 
the different scaling of vertical and horizontal directions is incorporated in 
an explicit way. However since $(M, g_\eps)$ is of bounded geometry uniformly in 
$\eps$ these definitions are essentially equivalent, apart from a factor 
$\eps^d$ relating the volume measures of $g_B$ and $\eps^{-2}g_B$. That is, 
there is a constant $C(k,\mathfrak{U})>0$ such that
\begin{equation*}
 C^{-1}\norm{\psi}_{W^k_\eps(M)}\leq \eps^{d} \norm{\psi}_{W^k(M,g_\eps)} \leq C\norm{\psi}_{W^k_\eps(M)}\,.
 \end{equation*}
In particular we have $W^0_\eps(M)=L^2(M,g)$, with $\eps$-independent and equivalent norms.

An important consequence of Lemma~\ref{lem:unifbg} and the work of Schick is that the Laplacians $\Delta_{g_\eps}$, with Dirichlet conditions on the boundary, satisfy elliptic inequalities on the spaces $W^k_\eps$ in a uniform way.
\begin{thm}[\cite{Schi2}]\label{thm:ellipt}
For every $k\in \N$ there is a constant $C>0$ such that for every $\psi\in W^2_\eps(M) \cap W^1_{0,\eps}(M)$ with $\Delta_{g_\eps}\psi \in W^k_\eps(M)$ we have $\psi \in W^{k+2}_\eps(M)$ and
 \begin{equation*}
\norm{\psi}_{W^{k+2}_\eps}^2 \leq C ( \norm{\Delta_{g_\eps} \psi}^2_{W^k_\eps} + \norm{\psi}_\mathscr{H}^2)\,.  
 \end{equation*}
\end{thm}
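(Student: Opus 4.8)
The plan is to obtain this as a consequence of the elliptic regularity theory for operators of $\partial$-bounded geometry developed by Schick in~\cite{Schi2}, applied to the family $(M,g_\eps)$; the whole content is that all constants entering Schick's estimate can be taken independent of $\eps$. First I would pass from the scaled spaces $W^k_\eps(M)$ to the metric Sobolev spaces $W^k(M,g_\eps)$ via the norm equivalence recorded above, $C^{-1}\norm{\psi}_{W^k_\eps}\leq \eps^d\norm{\psi}_{W^k(M,g_\eps)}\leq C\norm{\psi}_{W^k_\eps}$. In particular the two scales define the same subspaces of distributions, so the regularity claim $\psi\in W^{k+2}_\eps(M)$ follows once it is known for $W^{k+2}(M,g_\eps)$, and it suffices to prove $\norm{\psi}^2_{W^{k+2}(M,g_\eps)}\leq C\big(\norm{\Delta_{g_\eps}\psi}^2_{W^k(M,g_\eps)}+\norm{\psi}^2_{L^2(M,g_\eps)}\big)$ with $C$ uniform in $\eps$. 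Multiplying this through by $\eps^{2d}$ and using the equivalence in both directions, the left side dominates a fixed multiple of $\norm{\psi}^2_{W^{k+2}_\eps}$, the first term on the right is dominated by $\norm{\Delta_{g_\eps}\psi}^2_{W^k_\eps}$, and the last term equals $\eps^{2d}\norm{\psi}^2_{W^0(M,g_\eps)}$, which, since $W^0_\eps(M)=L^2(M,g)=\mathscr{H}$ with $\eps$-independent norm, is bounded by a multiple of $\norm{\psi}^2_{\mathscr{H}}$. Collecting these gives the stated inequality.

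For the estimate on $(M,g_\eps)$ I would invoke Schick's theorem~\cite{Schi2}: on a $\partial$-manifold of bounded geometry a uniformly elliptic second-order operator whose coefficients are $\mathscr{C}^\infty$-bounded in normal coordinates satisfies an elliptic estimate of exactly this form, with Dirichlet boundary conditions handled through the normal collar, and — crucially — with a constant depending only on $k$, on the bounded-geometry data $\lbrace r_c, r_i(\partial M), r_i, C(j)\rbrace$ of definition~\ref{def:delta-bg}, on the ellipticity constant, and on the $\mathscr{C}^\infty$-bounds of the coefficients. By lemma~\ref{lem:unifbg} the bounded-geometry data are the same for every $\eps$. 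For $\Delta_{g_\eps}$ one then checks that in $g_\eps$-normal coordinates the principal symbol is $\abs{\xi}^2$, so the ellipticity constant is $1$, and that the metric coefficients and all their derivatives are controlled by the curvature bounds of $g_\eps$, again uniform in $\eps$ by lemma~\ref{lem:unifbg}. Hence Schick's constant can be chosen independent of $\eps$, which is what is needed. (Alternatively one can argue directly on the cover $\mathfrak{U}$: trivialise $\pi^{-1}(U_\nu)\cong U_\nu\times F$, write $\Delta_{g_\eps}=\Delta_F+\eps^2\Delta_h$ in the product coordinates, apply standard interior and boundary elliptic estimates for this uniformly elliptic operator on $U_\nu\times F$, and reassemble using the uniformly finite overlap $N(\mathfrak{U})$; the $\eps^{\abs{\alpha}}$ weights in the definition of $\norm{\,\cdot\,}_{W^k_\eps}$ are precisely matched by the $\eps^2$ multiplying $\Delta_h$, so no power of $\eps$ is lost.)

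The one genuinely delicate point is the uniformity in $\eps$ of the constant in Schick's estimate; everything else is bookkeeping with the volume factor $\eps^d$ relating $g_B$ and $\eps^{-2}g_B$. That uniformity rests on lemma~\ref{lem:unifbg}: as $\eps\to 0$ the geometry of $(M,g_\eps)$ degenerates only in the harmless direction — curvatures involving at least one horizontal direction shrink in $g_\eps$-norm, while the purely vertical curvatures, the collar radius and the injectivity radii are left unchanged — so the hypotheses of~\cite{Schi2} are met with a single choice of all constants for the entire family.
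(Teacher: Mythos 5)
Your proposal is correct and takes essentially the same route as the paper, which states this theorem as a citation of Schick~\cite{Schi2} and justifies the uniformity in $\eps$ exactly as you do: lemma~\ref{lem:unifbg} guarantees that the $\partial$-bounded-geometry data of $(M,g_\eps)$ are $\eps$-independent, and the stated equivalence $C^{-1}\norm{\psi}_{W^k_\eps}\leq \eps^{d}\norm{\psi}_{W^k(M,g_\eps)}\leq C\norm{\psi}_{W^k_\eps}$ transfers Schick's estimate to the spaces $W^k_\eps$ with the $\eps^{2d}$ factors cancelling. Your write-up merely fills in the bookkeeping the paper leaves implicit, so there is nothing to correct.
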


\subsection{Adiabatic and super-adiabatic projections}\label{sect:proj}
In this section we will give an explicit construction of the projections $P_\eps$ of Theorem~\ref{thm:proj}. Again denote by $H$ the operator
\begin{equation*}
 H=-\Delta_{g_\eps} + V + \eps H_1\,,
\end{equation*}
with domain $D(H)=W^2_\eps(M)\cap W^1_{0,\eps}(M)$. More precisely this is an $\eps$-dependent family, since $-\Delta_{g_\eps}$ explicitly depends on $\eps$ and so may $V$ and $H_1$, although we do not make this explicit in the notation.  We assume these satisfy:
\begin{cond}\label{cond:H}
\leavevmode
\begin{itemize}
\item The potential $V\in \mathscr{C}^\infty_b$ is smooth and bounded with all its derivatives, uniformly in $\eps$.
\item $H_1$ is a smooth differential operator of second order and symmetric on 
$D(H)$. It is bounded independently of $\eps$ as a map  from $W^{m+2}_\eps$ to 
$W^{m}_\eps$, for every $m \in \N$.
\end{itemize}
\end{cond}
Under these conditions $H$ is self-adjoint on $D(H)$ and bounded below uniformly in $\eps$ by the Kato-Rellich theorem (see Reed and Simon~\cite[Theorem X.12]{ReSi2}). From now on $H$ will always denote this self-adjoint operator, while expressions like $\Delta_{g_\eps}$ or $H_1$ may also stand for a differential operator without reference to a specific domain. We also denote by $\lambda$ an eigenband of $H_F$ satisfying Condition~\ref{cond:gap} and by $P_0$ the corresponding fibre-wise spectral projection.

A first step in the construction of $P_\eps$ consists in proving that $[H,P_0]=\mathcal{O}(\eps)$ in a suitable sense. Since $P_0$ commutes with $H_F$, and $\eps H_1$ is itself of order $\eps$, this amounts to proving that
\begin{equation*}
 [-\eps^2\Delta_h, P_0]=\mathcal{O}(\eps)\,.
\end{equation*}
Since $P_0$ is fibre-wise and $\Delta_h$ is local we can examine this operator over an open set $U\in \mathfrak{U}$ (defined in Section~\ref{sect:geom}) using a local expression for $\Delta_h$. For a vector field $X\in  \Gamma(TB)$ let $X^*$ denote the unique horizontal vector field, a section of $NF\subset TM$, satisfying $\pi_*X^*=X$. We call this the \textit{horizontal lift} of $X$. Given the orthonormal frame of vector fields $(X_i)_{i\leq d}$ over $U$, $(X_i^*)_{i\leq d}$ is an orthonormal frame of $NF\vert_{\pi^{-1}(U)}$ and we can express $\Delta_h$ as
\begin{equation*}
 \Delta_h\vert_{\pi^{-1}(U)}= \sum_{i=1}^d X_i^* X_i^* - \nabla_{X_i^*} X_i^*  - g_B(\pi_*\eta, X_i)X_i^*\,,
\end{equation*}
where again $\eta$ denotes the mean curvature vector of the fibres.
Thus for our purposes it is sufficient to control commutators of the form $[X^*, P_0]$ for vector fields $X$ on $B$ of bounded length. One might think of calculating such an object by first calculating $[X^*, H_F]$ and then using functional calculus. We must warn here however, that due to the presence of the boundary this commutator is ill-defined. Since if $X^*$ is not tangent to the boundary, its application destroys the Dirichlet condition. For this reason we need to use vector fields that are adapted to the boundary. These are naturally obtained from local trivialisations, taking for $X\in \Gamma(TU)$ the field $\Phi^* X$ on $\pi^{-1}(U)$, which is tangent to the boundary of $M$ (because $\Phi$ also provides a trivialisation $\partial M \cap \pi^{-1}(U) \cong U\times \partial F$). Since this projects to $X$ we have that
\begin{equation*}
 X^*- \Phi^*X=Y \in \ker \pi_*
\end{equation*}
is a vertical vector field. By the boundedness of the geometry of $M$, both $X^*$ and $\Phi^*X$ are $\mathscr{C}^\infty$-bounded if $X$ is, and then so is their difference $Y$.
Now the basic idea is to calculate $[\Phi^*X, P_0]$ using functional calculus and to control $[Y, P_0]$ using the fact that $P_0$ is a spectral projection of $H_F$.
\begin{exm}\label{ex:vfields}
To illustrate the objects we have just discussed we calculate them in a simple example.
Let $h\in \mathscr{C}_b^\infty(\R)$ be a positive function and let $M=\R\times [0, 1 + h]\subset \R^2$, with $B=\R$ and $F= [0,1]$. Let $g_\eps=\eps^{-2}\ud x^2 + \ud y^2$ be the restriction of the rescaled metric on $\R^2$ and $H=-\Delta_{g_\eps}= -\eps^2\partial_x^2 - \partial_y^2$ on $D(H)$. The horizontal lift of $\partial_x \in \Gamma(T\R)$ is trivial $\partial_x^*=\partial_x$, so on $\mathscr{C}^\infty(M)$ we have $[\partial_x, \partial_y^2]=0$. 
A global trivialisation of $M$ is given by 
\begin{equation*}
\Phi{:}\,M\to \R \times [0,1]\,;\qquad (x,y)\mapsto (x,z)=\big(x,y/(1+h(x))\big)\,.
\end{equation*}
For $f\in \mathscr{C}^\infty(M)$ one easily calculates 
\begin{equation*}
\Phi^*\partial_x f= \partial_x f\big(x, (1+ h(x))z\big) = \partial_x f + h'z\partial_y f=\partial_xf + h'y/(1+h)  \partial_y f\,, 
\end{equation*}
so we can identify $Y:=\partial_x^* - \Phi^*\partial_x=-\log(1+h)'y\partial_y$. Clearly $\Phi^*\partial_x$ is tangent to $\partial M$, so for any $f\in \mathscr{C}^\infty(M)$ that vanishes on $\partial M$, $\Phi^*\partial_x f$ is also zero on $\partial M$. On such functions we thus have
\begin{equation*}
 [\Phi^*\partial_x, \partial_y^2]=-[\Phi^*\partial_x, H_F]=[\log(1+h)'y\partial_y, \partial_y^2]= -2\log(1+h)'\partial_y^2\,. 
\end{equation*}
We can observe here that $[\Phi^*\partial_x, H_F]$ is bounded relatively to $H_F$, which will hold in general.
\end{exm}
Since we want to use functional calculus to calculate $[\Phi^*X, P_0]$, control of the resolvent is crucial.
The commutator estimate of the following lemma relies on the fact that $[\Phi^*X_i, Y]$ is a vertical field, so that $[\Phi^*X_i, H_F]$ is bounded by $H_F$ just as in Example~\ref{ex:vfields}.
\begin{lem}\label{lem:res}
Let $U\in \mathfrak{U}$, take $z\in \C$ with $\dist\big(z, \sigma(H_F\vert_U)\big)\geq \delta>0$ and set
\begin{equation*}
R_F(z):=(H_F-z)^{-1}\,.
\end{equation*}
Let $(X_i)_{i\leq d}$ be the orthonormal frame corresponding to $U$ (cf. page~\pageref{not:U}). Then
\begin{equation*}
 [\Phi^* X_i, R_F(z)]\in L^\infty\big(\mathscr{L}(\mathscr{H}_F, D(H_F))\big\vert_U\big)
\end{equation*}
is bounded uniformly in $U$ and $i$. 
\end{lem}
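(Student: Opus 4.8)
The plan is to use the second resolvent identity to reduce the commutator of $\Phi^*X_i$ with $R_F(z)$ to a commutator with $H_F$, and then to control the latter using the decomposition $\Phi^*X_i = X_i^* - Y$ discussed just above the lemma. Formally one has
\[
 [\Phi^*X_i, R_F(z)] = -R_F(z)\,[\Phi^*X_i, H_F]\,R_F(z)\,,
\]
so the task becomes twofold: first, to make sense of $[\Phi^*X_i, H_F]$ as a densely defined operator and show it is $H_F$-bounded (equivalently, bounded as a map $D(H_F)\to\mathscr H_F$); second, to absorb one factor of $R_F(z)$ into that bound and keep the remaining factor to land in $D(H_F)$, using $\dist(z,\sigma(H_F\vert_U))\ge\delta$ to control $\|R_F(z)\|_{\mathscr L(\mathscr H_F)}\le\delta^{-1}$ and $\|R_F(z)\|_{\mathscr L(\mathscr H_F,D(H_F))}$ via the elliptic estimate of Theorem~\ref{thm:ellipt}.

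First I would write $H_F=-\Delta_F+V$ and compute $[\Phi^*X_i,H_F]=-[\Phi^*X_i,\Delta_F]+[\Phi^*X_i,V]$ in a local trivialisation. The potential term is harmless: $[\Phi^*X_i,V]=(\Phi^*X_i V)$ is multiplication by a $\mathscr C^\infty$-bounded function by condition~\ref{cond:H}. For the Laplacian term, the key structural point — already flagged in the text and in example~\ref{ex:vfields} — is that $\Phi^*X_i$ is tangent to $\partial M$, so applying it preserves the Dirichlet condition, and that $\Phi^*X_i = X_i^* - Y$ with $Y$ vertical; since $[X_i^*,\Delta_F]$ and $[Y,\Delta_F]$ are, in a local frame, at worst second-order vertical differential operators whose coefficients are $\mathscr C^\infty$-bounded (using bounded geometry of $(M,g)$ and the uniform bounds on $\Phi_*,\Phi^*$ from definition~\ref{def:unitriv}), the commutator $[\Phi^*X_i,\Delta_F]$ is a vertical differential operator of order $\le 2$ with bounded coefficients. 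Hence on a fibre $F_x$ it is bounded relative to $\Delta_F$, and therefore relative to $H_F$: there are constants $a,b$, uniform in $U$, $i$ and $x$, with $\|[\Phi^*X_i,H_F]\psi\|_{\mathscr H_F}\le a\|H_F\psi\|_{\mathscr H_F}+b\|\psi\|_{\mathscr H_F}$ for $\psi\in D(H_F)$. Equivalently $[\Phi^*X_i,H_F]\in L^\infty(\mathscr L(D(H_F),\mathscr H_F)\vert_U)$ uniformly.

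Given this, I would justify the resolvent identity rigorously: for $\psi\in D(H_F)$ with sufficient regularity (e.g. $\psi\in W^4$ on the fibre, a dense set), both sides of $[\Phi^*X_i,R_F(z)]\psi = -R_F(z)[\Phi^*X_i,H_F]R_F(z)\psi'$ (with $\psi' = (H_F-z)\psi$) make sense and agree by a direct manipulation, then extend by density. The right-hand side is then estimated as a map $\mathscr H_F\to D(H_F)$: the inner $R_F(z)$ maps $\mathscr H_F\to D(H_F)$ with norm controlled by Theorem~\ref{thm:ellipt} (note $\|R_F(z)\|\le\delta^{-1}$ feeds into the $\|\psi\|_{\mathscr H}$ term there, so the $W^2$-bound is $\mathcal O(1+\delta^{-1})$ uniformly); the relative-boundedness bound above turns this into an $\mathscr H_F$-bound with the constant $a\,\mathcal O(1)+b\,\delta^{-1}$; and the outer $R_F(z)$ maps back into $D(H_F)$, again by Theorem~\ref{thm:ellipt}, giving the claimed membership in $L^\infty(\mathscr L(\mathscr H_F,D(H_F))\vert_U)$ with a bound uniform in $U$ and $i$ (uniformity coming from the uniform bounded-geometry constants of lemma~\ref{lem:unifbg}, the uniform trivialisation bounds of definition~\ref{def:unitriv}, and $\dist(z,\sigma(H_F\vert_U))\ge\delta$).

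The main obstacle I expect is the boundary: one must be careful that all commutators are taken between operators that genuinely preserve the Dirichlet domain, which is exactly why the proof works with $\Phi^*X_i$ rather than the geometrically natural horizontal lift $X_i^*$ (the latter would break the boundary condition, as stressed in the text). Making the relative-boundedness constants genuinely uniform in $U\in\mathfrak U$ — rather than merely finite on each chart — is the other point requiring care, and it rests entirely on the uniform-in-$x$ bounds built into condition~\ref{cond:geom} and the elliptic estimate of Theorem~\ref{thm:ellipt}, which is stated with a single constant valid for all of $M$. Everything else is a routine, if slightly tedious, bookkeeping of local coordinate expressions.
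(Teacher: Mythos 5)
Your proposal is correct and follows essentially the same route as the paper: the second resolvent identity (realised in the paper as a Lie derivative of the resolvent in the trivialisation $U\times F$), the observation that $[\Phi^*X_i,H_F]$ is a vertical differential operator of order at most two with uniformly bounded coefficients (which the paper makes explicit via a parallel vertical frame and the second fundamental form), and the uniform resolvent bound from the gap $\delta$. The only cosmetic difference is that the $W^2$-bound on $R_F(z)$ needed here is the fibre-wise elliptic estimate on $(F,g_0)$ rather than Theorem~\ref{thm:ellipt} for $\Delta_{g_\eps}$ on $M$, but this holds uniformly for the same bounded-geometry reasons.
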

\begin{proof}
We use the trivialisation $\Phi$ to perform the calculations on $U\times F$. For this purpose, endow this set with the metric $\tilde g=\Phi_*g_F+g_B$, induced by $\Phi$ and choosing the canonical lift to the product as the horizontal direction. Then the map $W{:}\,L^2(U\times F, \tilde g)\to L^2 (\pi^{-1}(U),g)$ given by $f\mapsto f\circ \Phi$ is unitary. 
By definition of the vector bundles $D(H_F)$ and $\mathscr{H}_F$ and the bounds on $\Phi$, $W$ also induces isomorphisms
\begin{align*}
 &L^\infty\Big(\mathscr{L}(\mathscr{H}_F, D(H_F))\vert_U\Big) \to L^\infty\left(U,\mathscr{L}(L^2(F),W^2(F)\cap W^1_0(F))\right)\\
 &L^\infty\Big(\mathscr{L}(D(H_F),\mathscr{H}_F)\vert_U\Big) \to L^\infty\left(U,\mathscr{L}(W^2(F)\cap W^1_0(F),L^2(F))\right)\,,
\end{align*}
by conjugation.
We have $W X_i W^*=\Phi^*X_i$ and $W^* \Delta_F W= \Delta_{g_{F}}$,
where the latter is defined as the operator-valued function $x\mapsto \Delta_{g_{F_x}}$ on $U$. Thus
\begin{equation*}
 \left[\Phi^*X_i,R_F(z)\right]=W[X_i,(W^* H_F W - z)^{-1}]W^*\,, 
\end{equation*}
with $W^* H_F W =-\Delta_{g_{F}} + \Phi_*V$. Denote ${R(x,z):=(-\Delta_{g_{F_x}}+\Phi_*V -z)^{-1}}$. The commutator $[X_i, R]$ equals the Lie-derivative $\mathcal{L}_{X_i} R$, so we need to show that $R(x,z)$ depends differentiably on $x\in U$. We have
\begin{align}
\mathcal{L}_{X_i} R(x,z)
&= - R(x,z)\big(\mathcal{L}_{X_i} W^* H_F W \big)R(x,z)\label{eq:commR}\,,
\end{align}
which means that it is enough to show differentiability of $W^* H_F W$.
In order to see that $\mathcal{L}_{X_i} W^* H_F W$ defines a bounded operator $W^2(F)\cap W^1_0(F)\to L^2(F)$, let $x_0 \in U$ and $\phi_{X_i}$ be the flow of $X_i$ on $U\times F$. Then for $0\leq t < T$, the expression $W^* H_F W \circ \phi_X^{t*}\big\vert_{\lbrace x_0 \rbrace \times F}$ makes sense as a one-parameter family of operators $W^2(F)\cap W^1_0(F)\to L^2(F)$ since the domain is invariant under $\phi_{X_i}$. 

Now let $\gamma(t)$ be the integral curve of $X_i$ starting at $x_0$. Since $F$ is compact we can check differentiability locally, so  take an open set $U_F \subset F$ equipped with an orthonormal frame of vector fields $(Y_j)_{j\leq n}$ with respect to $g_{F_{x_0}}$. We extend these to $\gamma \times U_F$ by parallel transport with respect to the Levi-Cività connection $\tilde \nabla$ of $ \tilde g$ and claim that this gives an orthonormal frame of vertical fields. In fact, orthonormality is clear since parallel transport is an isometry. To check that they remain vertical, we calculate 
their component in the direction of any $X_k$, $k\in \lbrace 1, \dots, d \rbrace$. The equation
\begin{equation*}
X_i\tilde g(Y_j,X_k)=\tilde g(\underbrace{\tilde\nabla_{X_i} Y_j}_{=0}, X_k) + \tilde g(Y_j,\tilde\nabla_{X_i} X_k)=\tilde g(Y_j,\tilde\nabla_{X_i} X_k)
\end{equation*}
means that this component satisfies a first-order differential equation, with the initial value given by zero. But $\tilde\nabla_{X_i} X_k$ is horizontal for the metric $\tilde g$, as one easily checks using the Koszul formula. Hence the unique solution to the equation with the given initial value is zero, which means that the fields $Y_j(t)$ are vertical for every $t$. Thus we have
\begin{equation*}
 \Delta_{g_{F}}\vert_{\gamma \times U_F}= \sum_{j=1}^n Y_j\circ Y_j - \nabla_{Y_j} Y_j\,,
\end{equation*}
where $\nabla$ is the Levi-Cività connection of $(F,g_{F})$. Then the Lie derivative equals
\begin{align}
 \mathcal{L}_{X_i}\Delta_{g_{F}}\vert_{\lbrace x_0 \rbrace \times U_F}
 &=\sum_{j=1}^n [X_i,Y_j]Y_j + Y_j [X_i,Y_j] - [X_i,\nabla_{Y_j} Y_j]\notag\\
&=-\sum_{j=1}^n  \big(\tilde\nabla_{Y_j}X_i\big) Y_j + Y_j \big(\tilde\nabla_{Y_j}X_i\big) + [X_i,\nabla_{Y_j} Y_j]\,.\label{eq:L_X H_F}
\end{align}
Now $[X_i, Y_j]$ is a vertical field and by~\eqref{eq:L_X H_F} its coefficients with respect to the basis $(Y_k)_{k\leq n}$ are given by the second fundamental form of ${F\hookrightarrow \lbrace{x_0}\rbrace \times F}$. 
Hence this defines a second order differential operator $W^2(F)\to L^2(F)$, with norm bounded uniformly in $i$ and $U$ by the global bounds on $\Phi$ and $X_i$. The derivative of $V$ is of course just given by $X_i \Phi_*V$, which is bounded for the same reasons and $V\in \mathscr{C}^\infty_b(M)$. 
Finally, by the standard estimate
\begin{equation*}
\norm{R(x,z)}^2_{\mathscr{L}(L^2,W^2)} \leq  2+ (1+2\abs{z}^2)\delta^{-2}\,,
\end{equation*}
the composition~\eqref{eq:commR} defines a uniformly bounded operator $L^2(F)\to W^2(F)$, with image in $W^2(F)\cap W^1_0(F)$. The bounds on $\Phi$ also assure that this still holds after applying the unitary $W$.
\end{proof}
\begin{lem}\label{lem:P_0}
 $\mathcal{E}:=P_0 \mathscr{H}_F$ is a finite rank subbundle of $\mathscr{H}_F$. Moreover, for any $U\in \mathfrak{U}$ and corresponding vector field $X_i$, $i\in\lbrace{1,\dots,  d}\rbrace$
 \begin{equation*}
 [\Phi^* X_i, P_0 ]\in L^\infty\big(\mathscr{L}(\mathscr{H}_F, D(H_F))\big\vert_U\big)
\end{equation*}
is bounded uniformly in $U$ and $i$. In particular $\lambda \in \mathscr{C}^1_b(B)$. 
\end{lem}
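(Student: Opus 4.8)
The plan is to derive all three claims of Lemma~\ref{lem:P_0} from Lemma~\ref{lem:res} together with the Riesz-projection representation of $P_0$. First I would fix $U\in\mathfrak U$ and a complex contour $\Gamma\subset\C$ separating $\lambda$ from the rest of $\sigma(H_F\vert_U)$; by Condition~\ref{cond:gap} such a contour exists with $\dist(\Gamma,\sigma(H_F\vert_U))\geq\delta/2$, of length bounded uniformly in $U$, lying in the annular region between the curves $f_\pm$. Then on each fibre
\begin{equation*}
 P_0=\frac{1}{2\pi\ui}\oint_\Gamma R_F(z)\,\ud z\,,
\end{equation*}
and this identity holds as an equality of sections over $U$. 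Continuity of $P_0$ (hence the fact that $\mathcal E=P_0\mathscr H_F$ is a finite-rank subbundle, as already anticipated in the introduction) follows because $z\mapsto R_F(z)$ is norm-continuous in $x$ — this is contained in the proof of Lemma~\ref{lem:res}, where $\mathcal L_{X_i}W^*H_FW$ was shown to be a bounded operator $W^2\cap W^1_0\to L^2$, so that $x\mapsto W^*H_F W$ and thus $x\mapsto R(x,z)$ are even Lipschitz in $\mathscr L(L^2,W^2)$ along integral curves of the $X_i$, uniformly on $U$.

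Next, for the commutator bound I would simply differentiate the Riesz integral. Since $\Phi^*X_i$ is a fixed ($\eps$-independent, $\mathscr C^\infty$-bounded) vector field on $\pi^{-1}(U)$ tangent to $\partial M$, it preserves the domain $D(H_F)$ fibre-wise, and one may interchange $[\Phi^*X_i,\,\cdot\,]=\mathcal L_{X_i}$ with the contour integral:
\begin{equation*}
 [\Phi^*X_i,P_0]=\frac{1}{2\pi\ui}\oint_\Gamma [\Phi^*X_i,R_F(z)]\,\ud z\,.
\end{equation*}
By Lemma~\ref{lem:res} the integrand lies in $L^\infty(\mathscr L(\mathscr H_F,D(H_F))\vert_U)$ with norm bounded uniformly in $U$, $i$, and $z\in\Gamma$; integrating over the (uniformly bounded-length) contour gives $[\Phi^*X_i,P_0]\in L^\infty(\mathscr L(\mathscr H_F,D(H_F))\vert_U)$ with a bound uniform in $U$ and $i$. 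The interchange of $\mathcal L_{X_i}$ and $\oint_\Gamma$ is justified exactly as the analogous step~\eqref{eq:commR} in Lemma~\ref{lem:res}: the family $z\mapsto R_F(z)$ is $\mathscr C^1$ along the flow of $X_i$ with derivative controlled by $\norm{\mathcal L_{X_i}W^*H_FW}_{\mathscr L(W^2\cap W^1_0,L^2)}$.

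Finally, for $\lambda\in\mathscr C^1_b(B)$ I would write $\lambda=\tr(H_F P_0)/\tr P_0$ (the rank being locally constant) — equivalently $\lambda(x)=\tr(H_F(x)P_0(x))/\mathrm{rank}\,P_0$ — and differentiate along $X_i$. One gets
\begin{equation*}
 X_i\lambda=\frac{1}{\mathrm{rank}\,P_0}\tr\!\big([\Phi^*X_i,H_F]P_0+H_F[\Phi^*X_i,P_0]\big)\,,
\end{equation*}
and on the range of $P_0$ the operator $H_F$ acts as multiplication by the bounded function $\lambda$, while $[\Phi^*X_i,H_F]P_0$ is bounded (this boundedness relative to $H_F$ was observed in Example~\ref{ex:vfields} and reproved inside Lemma~\ref{lem:res}); hence $X_i\lambda$ is bounded uniformly, and since the $X_i$ form an orthonormal frame this gives $\ud\lambda\in L^\infty$, i.e.~$\lambda\in\mathscr C^1_b(B)$. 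Continuity of these derivatives follows from the established continuity of $x\mapsto R_F(z)$, hence of $P_0$ and of the trace expressions.

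The main obstacle I anticipate is purely the boundary bookkeeping in justifying the interchange of the Lie derivative with the Riesz integral: one must make sure that all objects are read in the $\Phi$-trivialisation (so that the Dirichlet domain is genuinely $\eps$- and $x$-independent and invariant under the flow), that $\Phi^*X_i$ rather than the horizontal lift $X_i^*$ is used throughout, and that the remainder vertical field $Y=X_i^*-\Phi^*X_i$ plays no role here — it will only enter later when passing from $[\Phi^*X_i,P_0]$ to the commutator $[X_i^*,P_0]$ relevant for $[\Delta_h,P_0]$. Once the representation $P_0=\tfrac{1}{2\pi\ui}\oint_\Gamma R_F(z)\,\ud z$ is in place, everything else is a uniform bound obtained by integrating the estimate of Lemma~\ref{lem:res}.
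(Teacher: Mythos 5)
Your proposal is correct and follows essentially the same route as the paper: the Riesz contour representation of $P_0$ combined with Lemma~\ref{lem:res} for the commutator and continuity claims, and the trace formula $\lambda=k^{-1}\tr(H_FP_0)$ differentiated along $\Phi^*X_i$ for $\lambda\in\mathscr{C}^1_b(B)$. The only cosmetic difference is that the paper splits $H_F[\Phi^*X_i,P_0]$ further via $[\Phi^*X_i,P_0]=[\Phi^*X_i,P_0]P_0+P_0[\Phi^*X_i,P_0]$ so that every term inside the trace is manifestly finite rank, whereas your remark that \enquote{$H_F$ acts as $\lambda$ on the range of $P_0$} only covers one of the two off-diagonal blocks; the boundedness of the other block still follows from your commutator estimate, so this is a matter of presentation rather than a gap.
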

\begin{proof}
 Let $\delta>0$ be the gap constant of Condition~\ref{cond:gap}. Let $x_0 \in U\in \mathfrak{U}$ and $\gamma$ be the circle of radius $\delta$ around $\lambda(x_0)$ in $\C$. Now there is an open neighbourhood $U_\delta \subset U$ of $x_0$, such that $\dist\big(\gamma, \sigma(H_F(x))\big)>\delta/2$ for every $x\in U_\delta$, and $P_0$ is given by the Riesz formula
 \begin{equation}\label{eq:Riesz}
  P_0=\frac{\ui}{2\pi} \int_\gamma R_F(z)\ud z\,.
 \end{equation}
 Mapping this to $U_\delta\times F$ with the unitary $W$ from the proof of Lemma~\ref{lem:res}, we immediately see that $P_0$ is strongly continuous in $x$ because this holds for $W H_F(x) W^*$ and $R(x,z)$.
This implies continuity of the projected transition maps of the bundle $\mathscr{H}_F$ (cf.~\cite[appendix B]{Lam}), so $P_0 \mathscr{H}_F$ is a subbundle.

The statement on the commutator $[\Phi^* X_i, P_0 ]$ is a direct consequence of the Riesz formula and Lemma~\ref{lem:res}. To check that this implies $\lambda \in \mathscr{C}^1_b(B)$, let $k=\mathrm{rank}(\mathcal{E})$ and observe that $\lambda=k^{-1}\tr(H_FP_0)$, where the trace is taken in the fibre of $\mathscr{H}_F$. This is continuous for the same reasons as $P_0$. Now we may calculate $(X_i\lambda) (x_0)$ by lifting to $\pi^{-1}(U)$:
\begin{align}
 \pi^*(X_i\lambda)&=  [\Phi^*X_i, \pi^*\lambda]\notag\\
 &= k^{-1}\, \tr{([\Phi^* X_i,H_F P_0])}\notag\\
 &=k^{-1} \, \tr{([\Phi^*X_i,H_F]P_0 + H_F P_0 [\Phi^*X_i,P_0] + H_F [\Phi^*X_i,P_0]P_0 )}\label{eq:difflambda}\,.
 \end{align}
 All of these terms are trace-class since they have finite rank. They are also uniformly bounded since
 \begin{equation*}
  [\Phi^*X_i,H_F]\in L^\infty\big(\mathscr{L}(D(H_F),\mathscr{H}_F )\big\vert_U\big)
 \end{equation*}
is uniformly bounded by~\eqref{eq:L_X H_F}. The terms are continuous in $x$ by the same reasoning as for $P_0$, so since $x_0$ and $i$ were arbitrary this implies $\lambda \in \mathscr{C}^1_b(B)$. 
\end{proof}
In order to control $[\Delta_h,P_0]$ we also need to take care of commutators of $P_0$ with two horizontal vector fields. Then, in our iterative construction of $P_\eps$, commutators with an arbitrary number of such fields may appear. Additionally we will need to keep track of boundary values in order to be sure when we have an object compatible with the domain of $H$. For a systematic discussion of these issues we define special algebras of differential operators. These differential operators will have coefficients in $L^\infty(\mathscr{L}(\mathscr{H}_F))$, which are exactly the fibre-wise operators in $\mathscr{L}(\mathscr{H})$. We assume these coefficients to be smooth in the following sense:

Take $U_\nu \in \mathfrak{U}$ and let $\mathcal{C^\nu}\subset L^\infty\left(\mathscr{L}(\mathscr{H}_F)\vert_{U_\nu}\right)$\label{not:coeff} be those linear operators $A$ for which any commutator of the form
\begin{equation}\label{eq:commC}
\left[\Phi^*_\nu X^\nu_{i_1},\dots,[\Phi^*_\nu X^\nu_{i_k},A]\cdots\right]
\end{equation}
defines an element of $L^\infty\left(\mathscr{L}(\mathscr{H}_F)\vert_{U_\nu}\right)$, 
where $k \in \N$ and $i_1,\dots, i_k \in \lbrace 1,\dots,d\rbrace$.

Let $\mathcal{C}_H^\nu\subset \mathcal{C}^\nu$ be the subset 
for which~\eqref{eq:commC} belongs to $L^\infty\left(\mathscr{L}(\mathscr{H}_F,D(H_F))\vert_{U_\nu}\right)$. This is equivalent to saying that $A\in\mathcal{C}_H^\nu$ if and only if $H_F A\in \mathcal{C}^\nu$, as can be seen from the proof of Lemma~\ref{lem:res}.
\begin{definition}\label{def:algebra}
The algebras $\mathcal{A}$ and $\mathcal{A}_H$ consist of those linear operators in $\mathscr{L}(W^\infty(M), \mathscr{H})$ satisfying
\begin{equation*}
\forall f\in W^\infty(M):\pi(\supp Af)\subset \pi(\supp f)\,
\end{equation*}
and
\begin{equation*}
 A\vert_{\pi^{-1}(U_\nu)}= \sum_{\alpha\in\mathbb{N}^d} A_\alpha^\nu(\eps) \eps^{\abs{\alpha}} (\Phi^*X^\nu_1)^{\alpha_1}\cdots(\Phi^*X^\nu_d)^{\alpha_d}\,,
\end{equation*}
with $A_\alpha^\nu\in \mathcal{C}^\nu$, respectively $\mathcal{C}_H^\nu$, for which there exists $\ell\in \N$ such that
$A_\alpha^\nu=0$ for all $\abs{\alpha}>\ell$, $\nu\in \N$ and furthermore there exist constants $C(\alpha, k)$ such that
\begin{equation*}
 \left\Vert\left[\Phi^*_\nu X^\nu_{i_1},\dots,[\Phi^*_\nu X^\nu_{i_k},A^\nu_\alpha(\eps)]\cdots\right]\right\Vert_{\mathscr{L}(\mathscr{H}_F)} \leq C(\alpha, k)
\end{equation*}
for all $\nu, k\in \N$, $i_1,\dots, i_k \in \lbrace 1,\dots,d\rbrace$ and $\eps>0$.
\end{definition}
From now on we write $\mathcal{C}_\bullet^\nu$ and $\mathcal{A}_\bullet$ in statements that hold with or without the subscript $H$. $\mathcal{A}_\bullet$ is an algebra because of the commutator condition~\eqref{eq:commC} for $\mathcal{C}_\bullet^\nu$ and $[\Phi^*X_i, \Phi^*X_j]=\Phi^*[X_i, X_j]$, allowing us to arrange the vector fields in any order without producing vertical derivatives.
 $\mathcal{A}_H$ consists of those $A\in \mathcal{A}$  whose image consists of functions satisfying the Dirichlet condition and for which $H_F A \in \mathcal{A}$. Hence $\mathcal{A}_H\mathcal{A}\subset \mathcal{A}_H$ and $\mathcal{A}_H$ is a right ideal of $\mathcal{A}$.

 The algebra $\mathcal{A}_\bullet$ is filtered by setting 
\begin{equation*}
\mathcal{A}^k_\bullet:=\left\lbrace A\in \mathcal{A}_\bullet:\forall \nu\in\N\, \big(\abs{\alpha}>k \Rightarrow A_\alpha^\nu=0\big)\right\rbrace\,.
\end{equation*}
 Clearly $\mathcal{A}^k\subset \mathscr{L}(W^k_\eps,\mathscr{H})$ so it inherits this operator norm, which we denote by $\norm{\cdot}_k$. It is because of the $\eps$ dependence of this norm~\eqref{eq:normdef} that we explicitly introduced the factors of $\eps^{\abs{\alpha}}$ into the definition of $\mathcal{A}_\bullet$. An additional filtration is given by the order in $\eps$ by defining $\mathcal{A}_\bullet^{j,\ell}$ to be those $A\in \mathcal{A}^j_\bullet$ for which the constants $C(\alpha,k)$ of Definition~\ref{def:algebra} can be chosen of order $\eps^\ell$:
\begin{equation*}
 \mathcal{A}^{j,l}_\bullet:=\left\lbrace A\in \mathcal{A}^j_\bullet: \eps^{-\ell}A\in \mathcal{A}^j_\bullet \right\rbrace
\end{equation*}

 This of course implies that $\norm{A}_{j}=\mathcal{O}(\eps^\ell)$. 
 Note that a differential operator of order $k$ is also one of order $k+1$, so $\mathcal{A}^k_\bullet\subset \mathcal{A}^{k+1}_\bullet$, while a norm of order $\ell+1$ is also of order $\ell$, so $\mathcal{A}^{k,\ell+1}_\bullet\subset \mathcal{A}^{k,\ell}_\bullet$. We may also observe that due to the commutation properties of the coefficients and vector fields we have the composition property $\mathcal{A}^{j,k}_\bullet \mathcal{A}^{\ell,m}\subset \mathcal{A}^{j+\ell,k+m}_\bullet$. More precisely we have for $A\in \mathcal{A}^{j}$, $B \in \mathcal{A}^{\ell}$
 \begin{equation*}
 AB\vert_{\pi^{-1}(U_\nu)}-\sum_{\substack{\abs{\alpha}=k \\ \abs{\beta}=l}} A_\alpha^\nu B_\beta^\nu \eps^{k+\ell}
  (\Phi^*X^\nu_1)^{\alpha_1 + \beta_1}\cdots(\Phi^*X^\nu_d)^{\alpha_d + \beta_d}
  \in \mathcal{A}^{j+\ell -1}\vert_{\pi^{-1}(U_\nu)}\,,
 \end{equation*}
 and we may also note that terms containing commutators of $\Phi^*X_i$ with other vector fields or the coefficients $A_\alpha, B_\beta$ produce terms of lower order in $\eps$.   
 \begin{rem}\label{rem:normA^k}
 The condition $\pi (\supp Af) \subset \pi(\supp f)$ allows us to calculate the norms $\norm{\cdot}_k$ locally with respect to the base since (denoting by $N(\mathfrak{U})$ the multiplicity of $\mathfrak{U}$, see page~\pageref{not:U})
\begin{align*}
\lVert A\psi\rVert_{W^0_\eps(M)}^2&=\sum_\nu \Vert\chi_\nu A\psi\Vert^2_{\mathscr{H}}
=\sum_\nu \Vert{\chi_\nu A\sum_\mu \chi_\mu\psi}\Vert^2\\
 &\leq N(\mathfrak{U}) \sum_{\mu,\nu} \norm{\chi_\nu A \chi_\mu \psi}^2\\
 &\leq N(\mathfrak{U})^2 \sum _\mu \sup_{\nu} \norm{\chi_\nu A}_{\mathscr{L}(W^k_\eps(\pi^{-1}U_\mu), \mathscr{H})}^2 \norm{\chi_\mu \psi}^2_{W^k_\eps(\pi^{-1}U_\mu)}\\
 &\leq N(\mathfrak{U})^2  \sup_{\mu} \norm{A}_{\mathscr{L}(W^k_\eps(\pi^{-1}U_\mu), \mathscr{H})}^2
 \norm{\psi}^2_{W^k_\eps(M)}\,.
\end{align*}
Thus for any $A\in \mathcal{A}^k$ 
\begin{equation*}
\norm{A}_k \leq N(\mathfrak{U})^{3/2}  \sup_{\mu} \norm{A}_{\mathscr{L}(W^k_\eps(\pi^{-1}U_\mu), \mathscr{H})}\,,
\end{equation*}
where $W^k_\eps(\pi^{-1}U_\nu)$ is defined in the trivialisation by equation~\eqref{eq:normdef} without the sum over $\nu$.
\end{rem}
The starting point for our construction is to show that $R_F(z)$ and $P_0$ are elements of these algebras, following Lemma~\ref{lem:res} and~\ref{lem:P_0}.
\begin{prop}\label{prop:P alg}
Let $z\in \mathscr{C}^\infty_b(B, \C)$ with $\dist\big(z(x), \sigma(H_F(x))\big)\geq \delta >0$, then $R_F(z) \in \mathcal{A}^{0,0}_H$. Furthermore, $P_0 \in \mathcal{A}^{0,0}_H$.
\end{prop}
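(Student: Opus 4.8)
The plan is to derive both claims from an iterated version of the resolvent computation in the proof of Lemma~\ref{lem:res}. Since $R_F(z)$ and $P_0$ are fibre-wise operators, in the local form of Definition~\ref{def:algebra} only the $\alpha=0$ term is present, the support condition $\pi(\supp Af)\subset\pi(\supp f)$ holds trivially, and neither operator depends on $\eps$, so every bound obtained will be of order $\eps^0$. It therefore suffices to show that, over each $U_\nu\in\mathfrak{U}$, the coefficient $R_F(z)$ (resp.\ $P_0$), viewed as an element of $L^\infty(\mathscr{L}(\mathscr{H}_F)\vert_{U_\nu})$, lies in $\mathcal{C}^\nu_H$ with constants uniform in $\nu$ (and, for the first part, uniform in $z$ with $\dist(z(x),\sigma(H_F(x)))\geq\delta$). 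As the images of $R_F(z)$ and $P_0$ lie in $D(H_F)$, the only point to check is that each iterated commutator $[\Phi^*X^\nu_{i_1},\dots,[\Phi^*X^\nu_{i_k},R_F(z)]\cdots]$ defines a bounded operator $\mathscr{H}_F\to D(H_F)$, uniformly in all parameters.

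I would argue by induction on the number $k$ of commutators, working on $U_\nu\times F$ via the unitary $W$ of Lemma~\ref{lem:res} and writing $R:=(W^*H_FW-z)^{-1}$. Differentiating $(W^*H_FW-z)R=1$ along a base vector field $X$ gives $[X,R]=-R\big(\mathcal{L}_X(W^*H_FW)-(Xz)\big)R$, the scalar $Xz$ being bounded because $z\in\mathscr{C}^\infty_b(B,\C)$. The structural claim is that a $k$-fold commutator of $R$ with the fields $\Phi^*X^\nu_i$, conjugated by $W$, is a finite sum of terms
\begin{equation*}
\pm\,R\,M^{(1)}R\,M^{(2)}\cdots M^{(m)}R\,,\qquad 1\leq m\leq k\,,
\end{equation*}
where each $M^{(j)}$ is an iterated Lie derivative $\mathcal{L}_{X_{j_1}}\cdots\mathcal{L}_{X_{j_p}}(W^*H_FW-z)$. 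This follows from the Leibniz rule: moving $\Phi^*X_i$ past such a product either hits a factor $R$, replacing it by $-R\big(\mathcal{L}_{X_i}(W^*H_FW)-(X_iz)\big)R$, or hits some $M^{(j)}$, replacing it by $\mathcal{L}_{X_i}M^{(j)}$, and in both cases the alternating structure is preserved. To conclude I would extend the computation~\eqref{eq:L_X H_F}: one shows that every such $M^{(j)}$ is a second-order differential operator along the fibres whose coefficients are built from covariant derivatives of the second fundamental form of $F\hookrightarrow\{x\}\times F$ and of $V$, hence $\mathscr{C}^\infty$-bounded by the bounded geometry of the bundle (Condition~\ref{cond:geom}) and $V\in\mathscr{C}^\infty_b$ (Condition~\ref{cond:H}); thus $M^{(j)}\in\mathscr{L}(W^2(F),L^2(F))$ uniformly, while $R\in\mathscr{L}(L^2(F),W^2(F)\cap W^1_0(F))$ uniformly by the resolvent estimate in Lemma~\ref{lem:res}. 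Composing, the displayed terms are bounded $L^2(F)\to D(H_F)$ uniformly, and conjugating back by $W$ (whose $\mathscr{C}^\infty$-bounds are part of Definition~\ref{def:unitriv}) keeps this. Since also $H_FR_F(z)=1+zR_F(z)$ is clearly in $\mathcal{A}^{0,0}$, this gives $R_F(z)\in\mathcal{A}^{0,0}_H$.

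For $P_0$ I would localise the contour in the Riesz formula~\eqref{eq:Riesz}. As in the proof of Lemma~\ref{lem:P_0}, and refining $\mathfrak{U}$ if necessary (which is possible since $\lambda\in\mathscr{C}^1_b(B)$ and the geometry is bounded), we may assume that for each $\nu$ there is a \emph{fixed} circle $\gamma_\nu\subset\C$ that encloses $\lambda(x)$ and stays at distance $\geq\delta/2$ from $\sigma(H_F(x))$ for every $x\in U_\nu$; concretely one can take the circle of radius $\delta/2$ about $\lambda(x_\nu)$ once $\lambda$ oscillates by less than $\delta/4$ on $U_\nu$. For such a constant $z\in\gamma_\nu$ the first part applies (a constant is trivially in $\mathscr{C}^\infty_b$) with bounds uniform in $z\in\gamma_\nu$ and in $\nu$, and $P_0\vert_{\pi^{-1}(U_\nu)}=\frac{\ui}{2\pi}\int_{\gamma_\nu}R_F(z)\,\ud z$. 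Since $\gamma_\nu$ has finite length and the iterated commutators pass through the integral, $P_0\vert_{\pi^{-1}(U_\nu)}$ inherits the same uniform bounds, whence $P_0\in\mathcal{A}^{0,0}_H$.

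I expect the main obstacle to be the last step of the second paragraph, namely verifying that \emph{all} iterated Lie derivatives of $W^*H_FW$ remain second-order operators with uniformly bounded coefficients. This is where bounded geometry is genuinely used, and it amounts to generalising the single computation~\eqref{eq:L_X H_F} — in which the relevant coefficients were identified with the second fundamental form — to arbitrary order, keeping careful track of covariant derivatives of the metric and of the frames $(X^\nu_i)$ and $(Y_j)$. Everything else (the Leibniz bookkeeping, the resolvent bounds, and the contour argument) is routine once this input is available.
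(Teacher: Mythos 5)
Your proposal is correct and follows essentially the same route as the paper, which itself only sketches the argument as "iterate the proof of Lemma~\ref{lem:res} using the explicit commutator~\eqref{eq:L_X H_F}, then use the gap condition and the Riesz formula for $P_0$"; you have filled in precisely the intended details (the induction on iterated commutators via the resolvent identity, the uniform second-order bounds on the Lie derivatives of $W^*H_FW$ from bounded geometry, and the localised contour for $P_0$). No gaps.
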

\begin{proof}
The first statement is shown by iterating the proof of Lemma~\ref{lem:res}, which can be done by the explicit form of the commutator~\eqref{eq:L_X H_F}. The second statement then follows from the gap condition and the Riesz formula~\eqref{eq:Riesz}.
\end{proof}
As we see from the proof of Lemma~\ref{lem:P_0} this immediately gives us a simple corollary.
\begin{cor}\label{cor:lambda}
The eigenband $\lambda$ is smooth and bounded with all its derivatives.
\end{cor}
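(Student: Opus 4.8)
The plan is to deduce Corollary~\ref{cor:lambda} directly from Proposition~\ref{prop:P alg}, essentially by repeating and iterating the argument already carried out in the proof of Lemma~\ref{lem:P_0}, where smoothness of order one was obtained. The point is that Proposition~\ref{prop:P alg} upgrades the conclusion $[\Phi^*X_i,P_0]\in L^\infty(\mathscr{L}(\mathscr{H}_F,D(H_F)))$ of Lemma~\ref{lem:P_0} to membership of $P_0$ in the algebra $\mathcal{A}_H^{0,0}$, which by Definition~\ref{def:algebra} means precisely that \emph{all} iterated commutators $[\Phi^*X_{i_1},\dots,[\Phi^*X_{i_k},P_0]\cdots]$ are bounded in $\mathscr{L}(\mathscr{H}_F)$ (indeed in $\mathscr{L}(\mathscr{H}_F,D(H_F))$), uniformly in the chart and the indices. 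Likewise $H_F P_0\in\mathcal{A}^{0,0}$, so its iterated commutators are bounded too; alternatively one uses that $[\Phi^*X_i,H_F]$ is bounded $D(H_F)\to\mathscr{H}_F$ with all further commutators controlled by the explicit formula~\eqref{eq:L_X H_F} and its iterates.

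First I would recall, as in the proof of Lemma~\ref{lem:P_0}, that with $k=\operatorname{rank}\mathcal{E}$ one has $\pi^*\lambda=k^{-1}\tr(H_F P_0)$, the trace being the fibre-wise trace in $\mathscr{H}_F$, which is finite since $H_F P_0$ has finite rank. Then I would differentiate repeatedly along the bounded frame fields: for any multi-index of indices $i_1,\dots,i_m\in\{1,\dots,d\}$,
\begin{equation*}
\pi^*\big((X_{i_1}\cdots X_{i_m})\lambda\big)=k^{-1}\,\tr\Big(\big[\Phi^*X_{i_1},\dots,[\Phi^*X_{i_m},H_F P_0]\cdots\big]\Big),
\end{equation*}
using that the vector fields $\Phi^*X_i$ differentiate the $\pi$-pullback of a function on $B$ exactly as the $X_i$ act on $B$, and that they commute up to vertical fields which annihilate $\pi^*\lambda$. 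Expanding the iterated commutator by the Leibniz rule produces a finite sum of products of iterated commutators of the $\Phi^*X_i$ with $H_F$ (at most one factor, giving a bounded operator $D(H_F)\to\mathscr{H}_F$) and with $P_0$ (bounded $\mathscr{H}_F\to D(H_F)$), each term being finite rank and hence trace class, with trace norm bounded by a product of the operator-norm bounds furnished by $P_0,H_FP_0\in\mathcal{A}^{0,0}$ and the rank of $\mathcal{E}$. Since these bounds are uniform over the charts $U_\nu\in\mathfrak{U}$ and the partition of unity is subordinate to $\mathfrak{U}$, this shows every covariant derivative of $\lambda$ has bounded $g_B$-norm, i.e.\ $\lambda\in\mathscr{C}^\infty_b(B)$; continuity at each order follows exactly as for $P_0$ from strong continuity of $R_F(z)$ in $x$.

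The only genuinely non-routine point is bookkeeping: one must be sure that combining a bounded-below $D(H_F)\to\mathscr{H}_F$ factor (from $[\Phi^*X,H_F]$ and its iterates) with the $\mathscr{H}_F\to D(H_F)$ factors coming from derivatives of $P_0$ really closes up, so that the trace is finite and its bound uniform — but this is exactly the defining property of $\mathcal{A}_H$ and the ideal relation $\mathcal{A}_H\mathcal{A}\subset\mathcal{A}_H$, together with the observation (already used in the proof of Lemma~\ref{lem:res}) that $A\in\mathcal{C}_H^\nu$ iff $H_FA\in\mathcal{C}^\nu$. So I expect no real obstacle here; the corollary is essentially an immediate consequence of Proposition~\ref{prop:P alg}, and the proof is a one-paragraph upgrade of the computation~\eqref{eq:difflambda}.
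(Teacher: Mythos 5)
Your proposal is correct and follows exactly the route the paper intends: the paper's own justification is the one-line remark that Proposition~\ref{prop:P alg} combined with the computation~\eqref{eq:difflambda} from the proof of Lemma~\ref{lem:P_0} "immediately" yields the corollary, and your argument is precisely the iterated version of that computation, with the finite-rank/trace-class bookkeeping spelled out. Nothing further is needed.
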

From $P_0$ and $R_F$ we will be able to construct many other elements of $\mathcal{A}$. The first is the \textit{reduced resolvent}.
\begin{cor}\label{cor:R_F}
 $R_F(\lambda):=(H_F-\lambda)^{-1}(1-P_0)\in \mathcal{A}_H^{0,0}$.
\end{cor}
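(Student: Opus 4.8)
The plan is to realise the reduced resolvent as a Neumann series built from the \emph{ordinary} resolvent $R_F(z)$ at a point $z$ lying in the spectral gap, to which Proposition~\ref{prop:P alg} applies directly. First I would extract from Condition~\ref{cond:gap} the elementary fact that, fibrewise, $\sigma(H_F(x))\setminus\{\lambda(x)\}$ lies at distance at least $2\delta$ from $\lambda(x)$ (combine $\dist(f_\pm(x),\sigma(H_F(x)))\geq\delta$ with $f_-(x)<\lambda(x)<f_+(x)$). Hence the global function $z:=\lambda+\tfrac{\delta}{2}$ lies in $\mathscr{C}^\infty_b(B,\C)$ by Corollary~\ref{cor:lambda} and satisfies $\dist(z(x),\sigma(H_F(x)))\geq\tfrac{\delta}{2}$, so $R_F(z)\in\mathcal{A}_H^{0,0}$ by Proposition~\ref{prop:P alg}. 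Since $H_F$, and therefore $R_F(z)$, commutes with $P_0$, the operator $C:=R_F(z)(1-P_0)=(1-P_0)R_F(z)$ is fibrewise the resolvent of $H_F$ restricted to the invariant subspace $(1-P_0)\mathscr{H}_F$, whence $\norm{C}_{\mathscr{L}(\mathscr{H}_F)}\leq (3\delta/2)^{-1}$ and $\norm{\tfrac{\delta}{2}C}_{\mathscr{L}(\mathscr{H}_F)}\leq\tfrac13$ uniformly. Writing $H_F-\lambda=(H_F-z)+\tfrac{\delta}{2}$ and inverting on $(1-P_0)\mathscr{H}_F$ gives the norm‑convergent expansion
\begin{equation*}
 R_F(\lambda)=(H_F-\lambda)^{-1}(1-P_0)=\sum_{n=0}^{\infty}\big(-\tfrac{\delta}{2}\big)^{n}\,C^{\,n+1}\,,
\end{equation*}
which is immediate on an eigenbasis of $H_F(x)$.

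Next I would check that this series converges in $\mathcal{A}^{0,0}$, not merely in $\mathscr{L}(\mathscr{H}_F)$. Since $\mathcal{A}$ is an algebra and $\mathcal{A}_H$ a right ideal of it (see the remarks after Definition~\ref{def:algebra}), $C=R_F(z)\cdot(1-P_0)\in\mathcal{A}_H^{0,0}$ and every partial sum lies in $\mathcal{A}_H^{0,0}$. The key point is uniformity of the seminorm bounds: by the Leibniz rule, the $k$‑fold commutator of $C^{\,n+1}$ with the frame fields $\Phi^*_\nu X^\nu_i$ is a sum of $O(n^{k})$ terms, each bounded by $\norm{C}^{\,n+1-k}$ times a product of at most $k$ commutator seminorms of $C$ of order $\leq k$, all finite and uniform in $\nu$ by Proposition~\ref{prop:P alg}. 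As $\norm{C}\leq(3\delta/2)^{-1}$, the $k$‑th seminorm of the $n$‑th term is $O\big(n^{k}(\delta/2)^{n}(3\delta/2)^{-n}\big)=O(n^{k}3^{-n})$, which is summable; since $\mathcal{A}^{0,0}$ is closed under sums converging absolutely in each seminorm, $R_F(\lambda)\in\mathcal{A}^{0,0}$. For the subscript $H$, note that the range of $R_F(\lambda)$ lies in $D(H_F)$ fibrewise and that $H_FR_F(\lambda)=(1-P_0)+\lambda R_F(\lambda)$, which lies in $\mathcal{A}^{0,0}$ because $1-P_0$, multiplication by $\lambda$ (Corollary~\ref{cor:lambda}) and $R_F(\lambda)$ all do and $\mathcal{A}^{0,0}$ is an algebra; hence $R_F(\lambda)\in\mathcal{A}_H^{0,0}$.

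The one step that needs genuine care is the second paragraph, namely the uniform‑in‑$n$ control of the infinitely many seminorms of the series; but this is routine once one observes that Condition~\ref{cond:gap} forces the contraction $\tfrac{\delta}{2}C$ to have norm strictly below one, so the geometric decay of $\norm{C^{\,n+1}}$ beats the polynomial‑in‑$n$ proliferation of terms produced by commuting the frame fields through a product. An essentially equivalent route, closer to the proof of Lemma~\ref{lem:P_0}, would be to represent $R_F(\lambda)=\tfrac{1}{2\pi\ui}\oint_\gamma (z-\lambda)^{-1}R_F(z)\,\ud z$ over a small circle $\gamma$ around $\lambda(x)$ and estimate under the integral, using the bounds on $R_F(z)$ for $z\in\gamma$ from Proposition~\ref{prop:P alg} together with the boundedness, with all derivatives, of the scalar factor $(z-\lambda(x))^{-1}$ over the compact contour.
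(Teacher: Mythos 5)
Your argument is correct, but your primary route is genuinely different from the paper's. The paper disposes of the corollary in one line via the local contour formula
\begin{equation*}
R_F(\lambda)=(1-P_0)\,\frac{\ui}{2\pi}\int_\gamma\frac{1}{\lambda-z}\,R_F(z)\,\ud z\,(1-P_0)\,,
\end{equation*}
with $\gamma$ the circle of radius $\delta$ about $\lambda(x_0)$ already used in the Riesz formula~\eqref{eq:Riesz}: on $\gamma$ the scalar $(\lambda-z)^{-1}$ is bounded with all its derivatives (by corollary~\ref{cor:lambda}), $R_F(z)\in\mathcal{A}_H^{0,0}$ uniformly by lemma~\ref{lem:res} and proposition~\ref{prop:P alg}, and $1-P_0\in\mathcal{A}^{0,0}$, so integrating over the compact contour stays in $\mathcal{A}_H^{0,0}$ --- only finitely many uniformly bounded integrands appear and the commutators pass under the integral. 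This is precisely the alternative you sketch in your closing remark. Your Neumann series built from $R_F(\lambda+\tfrac{\delta}{2})$ is a legitimate substitute: the gap condition indeed yields $\dist\big(\lambda(x),\sigma(H_F(x))\setminus\{\lambda(x)\}\big)\geq 2\delta$, hence $\lVert\tfrac{\delta}{2}C\rVert\leq\tfrac13$, and your $O(n^k3^{-n})$ bound on the $k$-th seminorm of the $n$-th term is the right estimate. What this route costs is the additional (true, but not entirely free) completeness claim that $\mathcal{A}_H^{0,0}$ is closed under series converging absolutely in every seminorm, together with the Leibniz bookkeeping; what it buys is independence from the contour representation. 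Your verification of the subscript $H$ via $H_FR_F(\lambda)=(1-P_0)+\lambda R_F(\lambda)\in\mathcal{A}^{0,0}$ correctly matches the paper's characterisation of $\mathcal{C}_H^\nu$.
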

\begin{proof}
Follows directly from the assertions~\ref{lem:res},~\ref{lem:P_0} and~\ref{cor:lambda} together with the local formula (in the notation of~\eqref{eq:Riesz})
\begin{equation*}
 R_F(\lambda)=(1-P_0)\frac{\ui}{2\pi}\int_\gamma \frac{1}{\lambda-z} R_F(z)\ud z (1-P_0)\,.
\end{equation*}
\end{proof}
A systematic construction of objects in $\mathcal{A}$ is provided by the following lemma.
\begin{lem}\label{lem:commAlg}
 Let $A,B\in \mathcal{A}_H$ with $AB \in \mathcal{A}_H^{k,\ell}$, then 
\begin{equation*}
[\Delta_{g_\eps},A]B\in \mathcal{A}^{k+1,\ell}
\end{equation*} and 
\begin{equation*}
[\eps^2\Delta_h,A]B\in \mathcal{A}^{k+1,\ell+1}\,.
\end{equation*}
\end{lem}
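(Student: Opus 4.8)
The plan is to work locally over a single chart $U_\nu \in \mathfrak{U}$, using the decomposition of $\Delta_{g_\eps}$ into vertical and horizontal parts that was established in section~\ref{sect:proj}, namely $\Delta_{g_\eps} = \Delta_F + \eps^2 \Delta_h$ with the local formula $\Delta_h|_{\pi^{-1}(U_\nu)} = \sum_i (X_i^* X_i^* - \nabla_{X_i^*}X_i^* - g_B(\pi_*\eta,X_i)X_i^*)$. Since $A$ and $B$ are sums of the standard local form $\sum_\alpha A_\alpha^\nu \eps^{|\alpha|}(\Phi^*X_1^\nu)^{\alpha_1}\cdots$ with coefficients in $\mathcal{C}_H^\nu$, it suffices by the product and commutator structure of $\mathcal{A}_\bullet$ (and remark~\ref{rem:normA^k}) to treat a single term and to bound the relevant commutators in $\mathscr{L}(\mathscr{H}_F)$ after stripping off the vector fields. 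The key reduction is to replace $X_i^*$ by the boundary-adapted field $\Phi^*X_i^\nu = X_i^* - Y_i$ with $Y_i$ vertical: commutators with $\Phi^*X_i^\nu$ act nicely on the algebra by definition of $\mathcal{C}_\bullet^\nu$, while the vertical correction fields $Y_i$ and the zeroth-order term $-g_B(\pi_*\eta,X_i)$ contribute lower-order-in-$\eps$ pieces absorbed into $\mathcal{A}^{k,l}$ or leave the $\eps$-order unchanged but act as vertical operators bounded by $H_F$, hence in $\mathcal{C}^\nu$.

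First I would prove the $\eps^2\Delta_h$ statement. Writing $[\eps^2\Delta_h,A]B$, each of the $d$ summands of $\Delta_h$ carries at most two horizontal derivatives, each accompanied (after expressing $X_i^*$ through $\Phi^*X_i^\nu$ in the definition of $\mathcal{A}$) by an explicit $\eps$ from the way vector fields enter the local form of elements of $\mathcal{A}$; combined with the prefactor $\eps^2$ this yields the extra power of $\eps$. Concretely, commuting $A$ past the two horizontal derivatives produces, by the Leibniz rule, terms in which one derivative has hit a coefficient $A_\alpha^\nu$ — this is a commutator of the type~\eqref{eq:commC}, so it stays in $\mathcal{C}^\nu$ and, crucially, lowers the horizontal order by one while the $\eps$ count is preserved — together with a term where $A$ has moved through untouched, which contributes to $AB$ and is handled by the hypothesis $AB\in\mathcal{A}_H^{k,l}$. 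The point that the result lands in $\mathcal{A}$ rather than $\mathcal{A}_H$ is that $[\eps^2\Delta_h,A]$ need not preserve the Dirichlet condition, but the product with $B\in\mathcal{A}_H$ is still only required to be in $\mathcal{A}$; one checks the support condition $\pi(\supp(\cdot)f)\subset\pi(\supp f)$ directly since $\Delta_h$ is a fibre-wise-base-local operator. Finally the bounds $C(\alpha,k)$ of order $\eps^{l+1}$ on all iterated commutators~\eqref{eq:commC} of $[\eps^2\Delta_h,A]B$ follow from those for $AB$ of order $\eps^l$ together with the $\mathscr{C}^\infty$-boundedness of $X_i^*$, $\nabla_{X_i^*}X_i^*$, $\pi_*\eta$ and $Y_i$ guaranteed by the bounded geometry (condition~\ref{cond:geom}) and lemma~\ref{lem:P_0}-type estimates.

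For the $\Delta_{g_\eps}$ statement I would split $[\Delta_{g_\eps},A]B = [\Delta_F,A]B + [\eps^2\Delta_h,A]B$. The second term is in $\mathcal{A}^{k+1,l+1}\subset\mathcal{A}^{k+1,l}$ by what was just shown. For the first term, $\Delta_F$ is fibre-wise, so $[\Delta_F,A]$ introduces no new horizontal derivatives: writing $A|_{\pi^{-1}(U_\nu)} = \sum_\alpha A_\alpha^\nu\eps^{|\alpha|}(\Phi^*X_1^\nu)^{\alpha_1}\cdots$ and using that $\Delta_F$ commutes with the horizontal fields $\Phi^*X_i^\nu$ only up to the second fundamental form of the fibres — precisely the commutator~\eqref{eq:L_X H_F} computed in the proof of lemma~\ref{lem:res} — one gets $[\Delta_F,A]B = \sum_\alpha [\Delta_F,A_\alpha^\nu]\eps^{|\alpha|}(\Phi^*X^\nu)^\alpha B + (\text{terms from }[\Delta_F,(\Phi^*X^\nu)^\alpha])$. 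The subtle point, and what I expect to be the main obstacle, is that $[\Delta_F,A_\alpha^\nu]$ is a priori unbounded: $A_\alpha^\nu\in\mathcal{C}_H^\nu$ means $H_F A_\alpha^\nu\in\mathcal{C}^\nu$, equivalently $\Delta_F A_\alpha^\nu$ is a bounded operator into $\mathscr{H}_F$, but $A_\alpha^\nu\Delta_F$ is only bounded from $D(H_F)$. This is exactly why the hypothesis is on $AB$ rather than on $A$ alone: one writes $[\Delta_F,A_\alpha^\nu](\Phi^*X^\nu)^\alpha B = (\Delta_F A_\alpha^\nu)(\Phi^*X^\nu)^\alpha B - A_\alpha^\nu (\Phi^*X^\nu)^\alpha (\Delta_F B) + A_\alpha^\nu[(\Phi^*X^\nu)^\alpha,\Delta_F]B$, where $\Delta_F B$ makes sense and is controlled because $B\in\mathcal{A}_H$ (so $H_F B\in\mathcal{A}$), the first term uses $A_\alpha^\nu\in\mathcal{C}_H^\nu$, and the last commutator is again governed by~\eqref{eq:L_X H_F} and lowers the horizontal order. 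Collecting, $[\Delta_{g_\eps},A]B$ has local horizontal order at most $k+1$, all iterated horizontal commutators bounded of order $\eps^l$, the support condition holds since $\Delta_{g_\eps}$ is base-local, and hence $[\Delta_{g_\eps},A]B\in\mathcal{A}^{k+1,l}$.
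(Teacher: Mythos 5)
Your proposal is correct and follows essentially the same route as the paper: split $\Delta_{g_\eps}=\Delta_F+\eps^2\Delta_h$, dispose of the vertical part via $[\Delta_F,A]B=(\Delta_F A)B-A(\Delta_F B)$ using that $A,B\in\mathcal{A}_H$ (so $H_FA, H_FB\in\mathcal{A}$), and for the horizontal part replace $X_i^*$ by $\Phi^*X_i+Y_i$, gain the extra power of $\eps$ from the prefactor $\eps^2$ against the single remaining derivative in $[\eps\Phi^*X_i,A]$, and absorb the vertical corrections into lower-order terms. Your more explicit three-term decomposition of $[\Delta_F,A_\alpha^\nu](\Phi^*X^\nu)^\alpha B$ is just a spelled-out version of the paper's one-line argument.
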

\begin{proof}
We split $\Delta_{g_\eps}=\Delta_F+\eps^2 \Delta_h$ and first observe that 
\begin{equation*}
[\Delta_F,A]B=\underbrace{\Delta_FA}_{\in \mathcal{A}}B-A\underbrace{\Delta_FB}_{\in \mathcal{A}} \in \mathcal{A}^{k,\ell}\,,
\end{equation*}
 since $\Delta_F A_\alpha \in \mathcal{C}$ if $A_\alpha \in \mathcal{C}_H$. Hence the second claim implies the first one.
 
 Since the definition of $\mathcal{A}^k$ and its norm are local with respect to the base (cf. Remark~\ref{rem:normA^k}) it is sufficient to show the claim on $\pi^{-1}(U_\nu)$. We fix $\nu$ and split $X_i^*=\Phi^*X_i+Y_i$. In this frame we have $\eps^2\Delta_h=\eps^2\sum_{i\leq d} \Phi^*X_i\Phi^*X_i+ \eps^2D$, where $D$ contains first order differential operators and second order parts that contain at least 
one vertical derivative.
We have for every $j\in \lbrace 1,\dots, d\rbrace$
\begin{align*}
 \left[\Phi^* X_j, A\right]\big\vert_{\pi^{-1}(U)}= 
 \sum_{\alpha\in\mathbb{N}^d} \eps^{\abs{\alpha}}\Big(&\underbrace{[\Phi^*X_j, A_\alpha]}_{\in \mathcal{C}_H} (\Phi^*X_1)^{\alpha_1}\cdots(\Phi^*X_d)^{\alpha_d}\\
&+ A_\alpha [\Phi^*X_j,(\Phi^*X_1)^{\alpha_1}\cdots(\Phi^*X_d)^{\alpha_d}]\Big) \,.
\end{align*}
This is of the same order as $A$ in $\mathcal{A}_H$ because $[\Phi^* X_j, \Phi^*X_i]=\Phi^*[X_j,X_i]$.
Hence
\begin{equation*}
\chi \sum_{i\leq d} [\eps^2\Phi^*X_i\Phi^*X_i, A]B\in \mathcal{A}^{k+1,\ell+1}_H\,.
\end{equation*}
Now for a $\mathscr{C}^\infty$-bounded vertical field $Y$, the commutator $[\Phi^*X_i, Y]$ is also vertical and bounded, so we have $YAB$ and $AYB\in \mathcal{A}^{k,\ell}$. By commuting all the
 $\Phi^*X_i$ to the right we see that
\begin{equation*}
 \chi [\eps^2 D, A]B\in \mathcal{A}^{k+1,\ell+1}\,.
\end{equation*}
This proves the second claim and thus completes the proof.
\end{proof}
\begin{lem}\label{lem:H_1}
 For every $A\in \mathcal{A}^{k,\ell}_H$ we have $H_1A\in \mathcal{A}^{k+2,\ell}$.
\end{lem}
\begin{proof}
Locally we have
 \begin{align*}
H_1\vert_{\pi^{-1}(U_\nu)}= &\sum_{i\leq j \leq d} A_{ij}^\nu \eps^2\Phi^*X^\nu_i\Phi^*X^\nu_j
+ \sum_{i\leq d} B_i^\nu \eps \Phi^*X^\nu_i + C\,,
\end{align*}
with vertical differential operators $C, B_i^\nu$, of second, respectively, first order. The Lemma follows easily from this in the same way as Lemma~\ref{lem:commAlg}, using the bounds on $H_1{:}\, W^{m+2}_\eps \to W^m_\eps$ to obtain the required uniformity in $\nu$.
\end{proof}
%
%
%
As a consequence of Proposition~\ref{prop:P alg} and Lemmas~\ref{lem:commAlg},~\ref{lem:H_1} we thus have
\begin{equation*}
 [H,P_0]P_0 = 
 \underbrace{[-\eps^2 \Delta_h, P_0]P_0}_{\in \mathcal{A}^{1,1}}  
 +\eps [H_1, P_0]P_0 \in \mathcal{A}^{2,1}\,.
\end{equation*}
Since $P_0$ is a projection, it has the property that
\begin{equation*}
[A, P_0]=[A,P_0^2]= P_0[A,P_0] + [A,P_0]P_0\,, 
\end{equation*}
and thus
\begin{equation*}
P_0[A,P_0]P_0=2 P_0[A,P_0]P_0=0\,.
\end{equation*}
Hence the commutator is off-diagonal with respect to the splitting of $\mathscr{H}=P_0\mathscr{H}\oplus {(1-P_0)\mathscr{H}}$ induced by $P_0$. We will use this property of projections very frequently in the following construction of the super-adiabatic projections. These will be obtained from a sequence $P^N$ of almost-projections in $\mathcal{A}_H$ that have the same asymptotic expansion as $P_\eps$. We construct this sequence explicitly, similarly to~\cite[lemma 3.8]{TeAdiab} but replacing the symbol classes of pseudo-differential calculus by the algebras $\mathcal{A}_\bullet$.
\begin{lem}\label{lem:sadiabatic}
For every $k\in \N$ there exists $P_k \in \mathcal{A}_H^{2^k,0}$, such that
\begin{equation*}
 P^N=\sum_{k=0}^{N} \eps^k P_k
\end{equation*}
 satisfies
\begin{teile}
\item $(P^N)^2-P^N \in \mathcal{A}_H^{2^{N+1},N+1}$,
\item $\norm{\big[H, P^N\big]}_{2^N+2} = \mathcal{O}(\eps^{N+1})$ on $D(H)$.
\end{teile}
\end{lem}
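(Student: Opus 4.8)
The plan is to construct the operators $P_k$ by induction on $k$, following the standard order-by-order construction of super-adiabatic projections (as in \cite[lemma~3.8]{TeAdiab}) but carried out inside the algebras $\mathcal{A}_\bullet$ rather than in a pseudodifferential symbol calculus. The two filtrations of $\mathcal{A}_\bullet$ --- by differential order $k$ and by power $l$ of $\eps$ --- take over the two bookkeeping roles played in the usual scheme by the order of a symbol and the power of $\eps$, and the ingredients that make the induction close are: $P_0\in\mathcal{A}_H^{0,0}$ and the reduced resolvent $R_F(\lambda)\in\mathcal{A}_H^{0,0}$ (proposition~\ref{prop:P alg}, corollary~\ref{cor:R_F}); the relation $H_FP_0=P_0H_F=\lambda P_0$, so that $H_F-\lambda$ is boundedly invertible on the range of $1-P_0$ with inverse $R_F(\lambda)$; the fact (lemma~\ref{lem:commAlg}, condition~\ref{cond:H}, remark~\ref{rem:H_1}) that commuting with $\eps^2\Delta_h$ or $\eps H_1$ raises the differential order by at most one but the $\eps$-order by at least one, and that, more generally, commuting a coefficient or a vector field with the horizontal fields $\Phi^*X_i$ strictly raises the $\eps$-order; and that $\mathcal{A}_H$ is a subalgebra and a right ideal of $\mathcal{A}$.

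For $N=0$ one takes $P_0$; the idempotency assertion is immediate, and for the commutator assertion one uses that $[H,P_0]=[-\eps^2\Delta_h+\eps H_1,P_0]$ is off-diagonal for the decomposition $\mathscr{H}=P_0\mathscr{H}\oplus(1-P_0)\mathscr{H}$, so it equals $[H,P_0]P_0$ plus its adjoint up to sign, the first of which lies in $\mathcal{A}^{2,1}$ by lemma~\ref{lem:commAlg} and condition~\ref{cond:H}, the second being controlled by the same bound via a duality argument on $D(H)$ (valid because the range of $P_0$ lies in $D(H)$). For the inductive step one assumes $P_0,\dots,P_{N-1}$ found with $P^{N-1}$ satisfying both assertions at level $N-1$, writes $P_N=P_N^{\mathrm{d}}+P_N^{\mathrm{od}}$ according to the off-diagonal splitting, and chooses the two parts in turn: $P_N^{\mathrm{d}}$ is the unique choice making $(P^{N-1}+\eps^NP_N)^2-(P^{N-1}+\eps^NP_N)$ drop to $\eps$-order $N+1$, and then $P_N^{\mathrm{od}}$ is chosen so that $\eps^N[H_F,P_N^{\mathrm{od}}]$ cancels the $\eps^N$-order part of $[H,P^{N-1}]$; since $[H_F,P_0Y(1-P_0)]=-P_0Y(H_F-\lambda)(1-P_0)$, the latter amounts to applying $R_F(\lambda)$, so the building blocks of $P_N$ are compositions of $P_0$, $1-P_0$, $R_F(\lambda)$ and the already constructed, $\eps^{-N}$-rescaled defects $(P^{N-1})^2-P^{N-1}$ and $[H,P^{N-1}]$. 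The two structural facts that legitimise this split are the standard ones, which must be re-proved inside the algebras: the order-$\eps^N$ obstruction to idempotency is \emph{diagonal} for $P_0\oplus(1-P_0)$ --- because $(P^{N-1})^2-P^{N-1}$ commutes with $P^{N-1}$ and $P^{N-1}=P_0+\mathcal{O}(\eps)$ --- while the order-$\eps^N$ obstruction to commuting with $H$ is \emph{off-diagonal}, which one extracts from the almost-idempotency of $P^{N-1}$ together with the fact that $[H_F,\cdot]$ gains a power of $\eps$ on $P_0$-diagonal operators (the latter because $H_F$ commutes with $P_0$ and acts as $\lambda$ on its range, while $[\Phi^*X_i,\lambda]$ carries a factor $\eps$). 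After symmetrising $P_N$, which is harmless at order $\eps^N$, one checks $P_N\in\mathcal{A}_H^{2^N,0}$; the assertions at level $N$ then follow because $P_N^{\mathrm{d}}$ and $P_N^{\mathrm{od}}$ were built to remove the respective $\eps^N$-order obstructions, the leftover commutators $\eps^N[-\eps^2\Delta_h+\eps H_1,P_N]$ and $\eps^N[H_F,P_N^{\mathrm{d}}]$ being of order $\eps^{N+1}$ by the third ingredient, the differential orders $2^{N+1}$ and $2^N+2$ being exactly those produced by multiplying, respectively commuting $H$ past, elements of $\mathcal{A}^{2^N}$.

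The step I expect to be the main obstacle is the two-filtration bookkeeping: verifying that each building block of $P_N$ genuinely lands in $\mathcal{A}_H$ with the stated differential order $2^N$ and $\eps$-order $0$, and that the leftover commutators are genuinely of $\eps$-order $N+1$ in the norm $\norm{\cdot}_{2^N+2}$. This rests entirely on lemma~\ref{lem:commAlg}, on the subalgebra and right-ideal properties of $\mathcal{A}_H$, on the $\eps$-gaining effect of commutators with the $\Phi^*X_i$, and --- for the terms that carry $H_F$ on the right --- on the uniform elliptic estimate of theorem~\ref{thm:ellipt}. A recurring nuisance is that adjoints of elements of $\mathcal{A}_H$ need not themselves lie in $\mathcal{A}_H$, since adjunction does not respect the Dirichlet condition, so one must keep careful track of domains and argue via the off-diagonal block structure rather than by naively passing to adjoints.
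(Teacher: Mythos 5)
Your proposal follows essentially the same route as the paper's proof: an inductive, order-by-order construction in the algebras $\mathcal{A}_\bullet$, with the diagonal part of each $P_{N+1}$ (with respect to $P_0$) chosen to cancel the idempotency defect of $P^N$ and the off-diagonal part built from $R_F(\lambda)[H,P^N]$ to cancel the commutator defect, the closing of the induction resting on lemma~\ref{lem:commAlg}, the right-ideal property of $\mathcal{A}_H$, and the two structural facts you identify (off-diagonal idempotency defects and diagonal commutator defects are automatically one order better). The plan, including the bookkeeping of the differential orders $2^{N+1}$ and $2^N+2$ and the caution about adjoints and the Dirichlet condition, is correct and matches the paper.
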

\begin{proof}
Take $P_0$ to be the projection on the eigenband $\lambda$ as above. By Proposition~\ref{prop:P alg} we have $P_0\in \mathcal{A}^{0,0}_H\subset \mathcal{A}^{1,0}_H$ and \emph{1)} is trivially satisfied because it is a projection. For \emph{2)} first observe that by Condition~\ref{cond:H} we have $[H_1,P_0]=H_1P_0-P_0H_1=\mathcal{O}(1)$. To see that 
\begin{equation}\label{eq:commP_0}
\norm{[-\eps^2\Delta_h,P_0]}_2=\mathcal{O}(\eps)\,,
\end{equation}
one just commutes all derivatives of the form $\Phi^*X_i$ to the right as in the proof of Lemma~\ref{lem:commAlg}, so \emph{2)} holds.

We define $P_{N+1}$ recursively by splitting it into diagonal and off-diagonal parts with respect to $P_0$ and prove \emph{1)} and \emph{2)} by induction.
To shorten the notation we write $P_0^\perp:=1-P_0$. Define
\begin{equation*}
\begin{split}
\eps^{N+1}P_{N+1} :=&\underbrace{-P_0 \big((P^N)^2-P^N\big) P_0 + P_0^\perp\big((P^N)^2-P^N\big)P_0^\perp}_{=:\eps^{N+1}P_{N+1}^D}\\
&\quad\underbrace{-P_0^\perp R_F(\lambda)\big[H, P^N\big]P_0 + P_0\big[H, P^N\big]R_F(\lambda) P_0^\perp}_{=:\eps^{N+1}P_{N+1}^O}\,.
\end{split}
\end{equation*}
This is an element of $\mathcal{A}_H^{2^{N+1}}$ because of Lemma~\ref{lem:commAlg} and the fact that $\mathcal{A}_H$ is a right ideal, since $2^{N+1} \geq 2^N + 2$ for $N\geq 1$ and $P_1\in \mathcal{A}_H^{2,0}$ because ${P_0, R_F(\lambda) \in \mathcal{A}_H^{0,0}}$ by~\ref{prop:P alg},~\ref{cor:R_F}.
$P_{N+1}$ is of clearly order $\eps^0$ by application of \emph{1)} and \emph{2)} to $P^N$, which is the induction hypothesis.

\medskip
\textit{Proof of 1)}
We prove this for diagonal and off-diagonal parts separately. In both cases it is just a simple calculation using $P^N={P_0 + \mathcal{A}_H^{2^N,1}}={P_0+\mathcal{O}(\eps)}$.
\begin{itemize}
 \item Diagonal:
\begin{align}
P_0&\big((P^{N+1})^2-P^{N+1}\big) P_0\notag\\
&= P_0\big((P^{N}+\eps^{N+1}P_{N+1})^2-P^{N}-\eps^{N+1}P_{N+1}\big)P_0\notag\\
&=\begin{aligned}[t]&P_0\big((P^N)^2-P^N +\eps^{N+1}\big(P^N P_{N+1} + P_{N+1}P^N - P_{N+1}\big)\big)P_0\\
&+\mathcal{A}_H^{2^{N+2},2N+2}
\end{aligned}\notag\\
&=\overbrace{P_0\big((P^N)^2-P^N\big) P_0 + \eps^{N+1}P_0P_{N+1}^DP_0}^{=0} +\mathcal{A}_H^{2^{N+2},N+2}\notag\\
&\in \mathcal{A}_H^{2^{N+2},N+2}\,.\notag
\end{align}
\item Off-diagonal:
\begin{align}
P_0^\perp&\big((P^{N+1})^2 - P^{N+1}\big) P_0\notag\\
&=\begin{aligned}[t]& P_0^\perp\big((P^N)^2-P^N\big) P_0 +
\eps^{N+1}\overbrace{P_0^\perp\big(P_{N+1}P^N-P_{N+1}\big)P_0}^{\in\mathcal{A}_H^{2^{N+1} + 2^N,1}}\notag\\
&+ 
\eps^{N+1}\underbrace{P_0^\perp P^N P_{N+1} P_0}_{\in\mathcal{A}_H^{2^{N+1} + 2^N,1}}  +\mathcal{A}_H^{2^{N+2},2N+2}
\end{aligned}\\
&= P_0^\perp\big((P^N)^2-P^N\big)\big(P^N + P_0 - P^N \big) P_0 + \mathcal{A}_H^{2^{N+2},N+2}\notag\\
&= P_0^\perp\big((P^N)^2-P^N\big) P^N P_0 + \mathcal{A}_H^{2^{N+2},N+2}\notag\\
&= P_0^\perp P^N\big((P^N)^2-P^N\big) P_0 + \mathcal{A}_H^{2^{N+2},N+2}\notag\\
&\in \mathcal{A}_H^{2^{N+2},N+2}\,.\notag
\end{align}
\end{itemize}
The calculations for the $P_0^\perp$-$P_0^\perp$ and $P_0$-$P_0^\perp$ blocks are basically the same, so \emph{1)} is verified.

\medskip
\textit{Proof of 2)}
\begin{itemize}
 \item Diagonal:
We will only do the calculation for the $P_0$-block. The one for $P_0^\perp$ is similar, one merely needs to commute derivatives to the right as for~\eqref{eq:commP_0}, since Lemma~\ref{lem:commAlg} is not directly applicable.
First we show $P_0[H, \eps^{N+1}P_{N+1}^O]P_0= \mathcal{O}(\eps^{N+2})$:
\begin{align*}
 P_0&\big[H, \eps^{N+1}P_{N+1}^O\big]P_0\notag\\
&=\eps^{N+1}P_0 \big(H P_0^\perp P_{N+1}^O - P_{N+1}^O P_0^\perp H \big)P_0\notag\\
&=\eps^{N+1}
\bigl(-P_0\underbrace{[H,P_0] P_{N+1}^O}_{\in\mathcal{A}^{2^{N+1}+2,1}}P_0  -
P_0P_{N+1}^O P_0^\perp \underbrace{[H,P_0]P_0}_{\in\mathcal{A}^{2,1}}\bigr)\notag\\
&=\mathcal{O}(\eps^{N+2})\,.
\end{align*}
Now by definition $P^{N+1}-\eps^{N+1}P^O_{N+1}=P^N+\eps^{N+1}P_{N+1}^D$, so we still have to calculate
\begin{align*}
P_0&\big[H, P^N + \eps^{N+1}P_{N+1}^D\big]P_0\notag\\
&=P_0\big[H, P^N - P_0((P^N)^2-P^N)P_0\big]P_0\notag\\
&=\begin{aligned}[t]&2 P_0 \big[H, P^N\big]P_0 - P_0\big[H,(P^N)^2\big]P_0\\
&+\underbrace{\big(P_0[H, P_0]\big((P^N)^2-P^N\big)P_0 + P_0\big((P^N)^2-P^N\big)[H, P_0]P_0 \big) }_
{\in \mathcal{A}_H^{2^{N+1}+2,N+2} \text{ by induction hypothesis and~\ref{lem:commAlg},~\ref{lem:H_1}}}
\end{aligned}\\
&=P_0\big(2\big[H, P^N\big]-P^N \big[H, P^N\big] -\big[H, P^N\big] P^N \big) P_0 + \mathcal{O}(\eps^{N+2})\notag\\
&=\underbrace{P_0\big((P_0 - P^N) \big[H, P^N\big]+\big[H, P^N\big](P_0-P^N)\big)P_0}_{\in \mathcal{A}_H^{2^{N+1}+2,N+2}}
+ \mathcal{O}(\eps^{N+2})\\
&=\mathcal{O}(\eps^{N+2})\,.\notag
\end{align*}
\item Off-diagonal:\\
Here we use the statements of~\ref{cor:lambda} and~\ref{lem:commAlg} to get
\begin{equation}
 [-\eps^2\Delta_h+\lambda, P_{N+1}]P_0 \in \mathcal{A}^{2^{N+1}+1,1}\,.\notag
\end{equation}
This gives us
\begin{equation}
[H,P_{N+1}]P_0=[H_F-\lambda,P_{N+1}]P_0+ \underbrace{[-\eps^2\Delta_h+\lambda + \eps H_1, P_{N+1}]P_0}_{\in \mathcal{A}^{2^{N+1}+2,1}}\,.\notag
\end{equation}
We insert this into
\begin{align}
P_0^\perp&\big[H, P^N + \eps^{N+1}P_{N+1}\big]P_0\notag\\
&=P_0^\perp\big(\big[H, P^N\big] + \eps^{N+1}\big[H_F-\lambda,P_{N+1}\big]
\big)P_0+ \mathcal{O}(\eps^{N+2})\notag\\
&=P_0^\perp\big(\big[H, P^N\big] + \eps^{N+1}[H_F-\lambda, P_0^\perp P_{N+1} P_0]\big) P_0 + \mathcal{O}(\eps^{N+2})\notag\\
&=P_0^\perp(\big[H, P^N\big] - \underbrace{(H_F-\lambda)R_F(\lambda)}_{=1}\big[H,P^N\big])P_0 + \mathcal{O}(\eps^{N+2})\notag\\
&= \mathcal{O}(\eps^{N+2})\,,\notag
\end{align}
which completes the proof for the $P_0^\perp$-$P_0$-block. The argument for the other off-diagonal block is the same.
\end{itemize}
\end{proof}
\subsubsection{Proof of Theorem~\ref{thm:proj}}
The proof will use auxiliary energy cut-offs. We require these to satisfy:
\begin{definition}\label{def:cutoff}
 A function  $f\in \mathscr{C}^\infty_0(\R, [0,1])$ is a \textit{regular cut-off} if for every $s\in (0, \infty)$, the power $f^s\in \mathscr{C}^\infty_0(\R, [0,1])$. 
\end{definition}
In particular this prevents these functions from having zeros of finite order. The following lemma on the functional calculus for such functions can be derived from the Helffer-Sjöstrand formula (see~\cite[appendix C]{Lam} for a proof).
\begin{lem}\label{lem:chi}
Let $H$ be self-adjoint on $D(H)\subset \mathscr{H}$. Let $T\in \mathscr{L}(\mathscr{H})\cap\mathscr{L}(D(H))$ be self-adjoint on $\mathscr{H}$.
If $\chi$ is a regular cut-off and
\begin{align*}                                                                                                        
&\norm{[T,H]}_{\mathscr{L}(D(H),\mathscr{H})}=\mathcal{O}(\eps)\\
&\norm{[T,H]\chi^s(H)}_{\mathscr{L}(\mathscr{H})}=\mathcal{O}(\eps^k)\,,
\end{align*}
for some $k\in \mathbb{N}$ and all $s\in (0, \infty)$, then
\begin{teile}
\item $\norm{[T,\chi(H)]}_{\mathscr{L}(\mathscr{H},D(H))}=\mathcal{O}(\eps^k)$;
\item If additionally $T$ is a projection 
\begin{equation*}
\norm{T\chi(THT)-T\chi(H)T}_{\mathscr{L}(\mathscr{H},D(H))}=\mathcal{O}(\eps^k)\,.
\end{equation*}
\end{teile}
\end{lem}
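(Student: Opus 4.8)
The plan is to derive everything from the Helffer--Sjöstrand formula
\begin{equation*}
f(H)=-\tfrac1\pi\int_\C\bar\partial\tilde f(z)\,(H-z)^{-1}\,\ud L(z)
\end{equation*}
for an almost‑analytic extension $\tilde f$ of $f\in\mathscr{C}^\infty_0(\R)$, combined with the two elementary facts that $\lvert\bar\partial\tilde\chi(z)\rvert\leq C_N\lvert\mathrm{Im}\,z\rvert^{N}$ for every $N$, and that, $H$ being self‑adjoint, $\norm{(H-z)^{-1}}_{\mathscr{L}(\mathscr{H})}\leq\lvert\mathrm{Im}\,z\rvert^{-1}$ while $\norm{(H-z)^{-1}}_{\mathscr{L}(\mathscr{H},D(H))}\lesssim\langle z\rangle\lvert\mathrm{Im}\,z\rvert^{-1}$ (using that the graph norm of $H$ controls the $D(H)$‑norm). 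A preliminary remark: since $[T,H]$ is bounded and $H$ is bounded below, $THT$ restricted to $T\mathscr{H}$ is self‑adjoint on $TD(H)$ by Kato--Rellich, and its resolvent obeys the same bounds; all the commutator manipulations below are carried out on $D(H)$, where they make sense.

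For part 1 the resolvent identity gives $[T,\chi(H)]=-\tfrac1\pi\int_\C\bar\partial\tilde\chi(z)\,(H-z)^{-1}[T,H](H-z)^{-1}\,\ud L(z)$, which with the first hypothesis alone only yields $\mathcal{O}(\eps)$. To reach $\mathcal{O}(\eps^k)$ I would exploit that $\chi$ is a \emph{regular} cut‑off, so $\chi=\chi^{1/2}\chi^{1/2}$ with $\chi^{1/2}\in\mathscr{C}^\infty_0$, and
\begin{equation*}
[T,\chi(H)]=[T,\chi^{1/2}(H)]\,\chi^{1/2}(H)+\chi^{1/2}(H)\,[T,\chi^{1/2}(H)]\,.
\end{equation*}
In the Helffer--Sjöstrand representation of each $[T,\chi^{1/2}(H)]$ one commutes the free factor $\chi^{1/2}(H)$ past the adjacent resolvent (functions of $H$ commute) so that it stands next to $[T,H]$, producing $[T,H]\chi^{1/2}(H)$, respectively its adjoint $-\chi^{1/2}(H)[T,H]$. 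By the second hypothesis with $s=\tfrac12$ these are $\mathcal{O}(\eps^k)$ in $\mathscr{L}(\mathscr{H})$, and the remaining $z$‑integral converges absolutely by the two estimates above, giving the bound in $\mathscr{L}(\mathscr{H})$ and, with the $D(H)$‑resolvent bound on the outer factor, in $\mathscr{L}(\mathscr{H},D(H))$.

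For part 2 I would first reduce to a one‑sided statement: because $T$ is an orthogonal projection, $T\chi(H)T=\chi(H)T+[T,\chi(H)]T$ and $\chi(THT)=\chi(THT)T$, so by part 1 it suffices to bound $\chi(THT)T-\chi(H)T$. Applying the Helffer--Sjöstrand formula to $\chi(THT)$ and to $\chi(H)$ and using $[T,(H-z)^{-1}]=-(H-z)^{-1}[T,H](H-z)^{-1}$ together with $T[T,H]T=0$, one obtains
\begin{equation*}
(THT-z)^{-1}T-(H-z)^{-1}T=(THT-z)^{-1}\,T[T,H]\,(H-z)^{-1}T-(1-T)(H-z)^{-1}[T,H](H-z)^{-1}\,.
\end{equation*}
Integrated against $-\tfrac1\pi\bar\partial\tilde\chi(z)$, the last term reassembles, by the very formula from part 1, into $(1-T)[T,\chi(H)]$, hence is $\mathcal{O}(\eps^k)$ by part 1. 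The first term still carries a bare $[T,H]$ flanked by resolvents; here I would iterate the same identity, each pass splitting off another ``complete‑derivative'' piece (controlled as just described, with $\chi^s$ in place of $\chi$ where needed) and — via $T[T,H]=TH(1-T)$, $(1-T)(H-z)^{-1}T=(H-z)^{-1}[T,H](H-z)^{-1}T$, and the analogous relations — raising by one the power of $\eps$ multiplying the remainder, until after finitely many steps the remainder carries $k$ factors of $[T,H]$ (or one ends by inserting $\chi^s(H)$ next to a $[T,H]$ to invoke the second hypothesis directly). The resulting finite sum of Cauchy‑type integrals is then estimated exactly as in part 1, in $\mathscr{L}(\mathscr{H})$ and $\mathscr{L}(\mathscr{H},D(H))$.

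The main obstacle is precisely this last step of part 2: organising the iteration so that \emph{every} surviving occurrence of $[T,H]$ is ultimately either adjacent to a power $\chi^s(H)$ (small by hypothesis) or accompanied by sufficiently many additional small factors, while keeping all resolvent identities rigorous on $D(H)$ and uniform in $z$. Part 1 and the reassembly argument are routine once the Helffer--Sjöstrand machinery and the regular‑cut‑off device $\chi=\chi^{1/2}\chi^{1/2}$ are in place; the genuine bookkeeping effort is the combinatorics of the $[T,H]$‑counting in the iteration.
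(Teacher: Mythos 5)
Your overall route is the one the paper points to (it defers the proof to \cite{Lam}, appendix C, saying only that the lemma ``can be derived from the Helffer--Sj\"ostrand formula''), and part \emph{1)} is correct and complete: the splitting $[T,\chi(H)]=[T,\chi^{1/2}(H)]\chi^{1/2}(H)+\chi^{1/2}(H)[T,\chi^{1/2}(H)]$, commuting $\chi^{1/2}(H)$ through the resolvents onto $[T,H]$, and invoking the hypothesis with $s=\tfrac12$ is exactly the intended use of the regular-cut-off property. Your reduction of part \emph{2)} to $\chi(THT)T-\chi(H)T$ and the resolvent identity you write down are also correct (one checks $R_T(z)T-TR(z)T=R_T(z)\,T[T,H]\,T^\perp R(z)T$ and $T^\perp R(z)T$ reassembles into $T^\perp[\,\chi(H),T]$).

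The genuine gap is the closing iteration in part \emph{2)}, which you yourself flag. The mechanism you describe --- repeatedly applying $T^\perp R(z)T=T^\perp R(z)[T,H]R(z)$ so that the remainder accumulates factors of $[T,H]=\mathcal{O}(\eps)$ until $k$ of them are present --- does not close: after one application the trailing expression is $[T,H]R(z)$ without the flanking $T\cdots T^\perp$ structure, so no further factor can be extracted and the remainder stalls at $\mathcal{O}(\eps^{2})$; moreover, since $k$ is arbitrary (it is $N+1$ in the applications) one cannot reach $\eps^{k}$ by counting $\mathcal{O}(\eps)$ factors at all. The step that actually closes the argument is a \emph{self-improving bootstrap} built again on the regular-cut-off property. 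Set $E_s:=\chi^{s}(THT)-T\chi^{s}(H)T$ and write, using $\chi=\chi^{1/2}\chi^{1/2}$ and part \emph{1)},
\begin{equation*}
E_1=\chi^{1/2}(THT)\,E_{1/2}+E_{1/2}\,T\chi^{1/2}(H)T+\mathcal{O}(\eps^{2k})\,.
\end{equation*}
For the middle term, insert your resolvent identity for $E_{1/2}$ and commute $\chi^{1/2}(H)$ through $R(z)$ to produce $T^\perp R(z)\chi^{1/2}(H)T=R(z)\,T^\perp[\chi^{1/2}(H),T]=\mathcal{O}(\eps^{k})$ by part \emph{1)}, so $E_{1/2}\,T\chi^{1/2}(H)T=\mathcal{O}(\eps^{k+1})$; for the first term take adjoints and use $E_{1/2}\chi^{1/2}(THT)=E_{1/2}T\chi^{1/2}(H)T+E_{1/2}^{2}$. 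This shows that if $E_{1/2}=\mathcal{O}(\eps^{\min(j,k)})$ then $E_1=\mathcal{O}(\eps^{\min(2j,k)})$; starting from the trivial order-$\eps$ bound and iterating $\lceil\log_2 k\rceil$ times over the (regular) cut-offs $\chi^{2^{-m}}$ yields $E_1=\mathcal{O}(\eps^{k})$. So the missing idea is not combinatorial bookkeeping of $[T,H]$ factors but the doubling induction in which the error is multiplied either by itself or by a term already controlled at order $\eps^{k}$ via part \emph{1)}. (A minor point to record: one must fix the convention $\chi(THT)=\chi(THT)T$, e.g.\ by defining the functional calculus for $THT+T^\perp HT^\perp$ restricted to $T\mathscr{H}$, since otherwise $\chi(0)T^\perp$ spoils the estimate on $T^\perp\mathscr{H}$.)
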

We restate the main point of the theorem for convenience.
\begin{thm*}[\ref{thm:proj}]
 For every $\Lambda>0$ and $N\in \N$ there exists an orthogonal projection $P_\eps \in \mathscr{L}(\mathscr{H})\cap \mathscr{L}(D(H))$ that satisfies
\begin{equation*}
 \norm{\left[H, P_\eps\right] \varrho(H)}_{\mathscr{L}\left(\mathscr{H}\right)} = \mathcal{O}(\eps^{N+1})
\end{equation*}
for every Borel function $\varrho{:}\,\mathbb{R}\to [0,1]$ with support in $(-\infty, \Lambda]$.
\end{thm*}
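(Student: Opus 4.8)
The plan is to upgrade the formal almost-projection $P^N$ from Lemma~\ref{lem:sadiabatic} into a genuine orthogonal projection $P_\eps$ that still almost-commutes with $H$ on low-energy states. First I would use the energy cut-off: fix $\Lambda>0$ and pick a regular cut-off $\chi\in\mathscr{C}^\infty_0\big((-\infty,\Lambda'],[0,1]\big)$, with $\Lambda'$ slightly larger than $\Lambda$, that is identically $1$ on a neighbourhood of $(-\infty,\Lambda]$. By Lemma~\ref{lem:sadiabatic} we have $\norm{[H,P^N]}_{2^N+2}=\mathcal O(\eps^{N+1})$ on $D(H)$, and since $P^N\in\mathcal A_H^{2^N,0}\subset\mathscr L(W^{2^N}_\eps,\mathscr H)\cap\mathscr L(D(H))$, while Theorem~\ref{thm:ellipt} controls high Sobolev norms by powers of $H$, I can apply Lemma~\ref{lem:chi} (with $T$ replaced by $P^N$, after first symmetrising it — see below) to deduce $\norm{[P^N,\chi(H)]}=\mathcal O(\eps^{N+1})$. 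Crucially, because $\chi\equiv1$ near $\mathrm{supp}\,\varrho$, one has $\chi(H)\varrho(H)=\varrho(H)$ for every admissible Borel function $\varrho$, so any error estimate established with the weight $\chi(H)$ automatically transfers to the weight $\varrho(H)$.

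Next I would deal with the fact that $P^N$ is neither self-adjoint nor exactly a projection. It is symmetric only up to $\mathcal O(\eps^{N+1})$ and squares to itself only modulo $\mathcal A_H^{2^{N+1},N+1}$. The standard remedy is: replace $P^N$ by its symmetric part $\tilde P:=\tfrac12(P^N+(P^N)^*)$, which is self-adjoint, lies in $\mathscr L(\mathscr H)\cap\mathscr L(D(H))$, still satisfies $\tilde P-P_0=\mathcal O(\eps)$ in $\mathscr L(D(H))$, still obeys $\tilde P^2-\tilde P=\mathcal O(\eps^{N+1})$ in the relevant norm, and still has $\norm{[H,\tilde P]\chi(H)}=\mathcal O(\eps^{N+1})$. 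Since $\tilde P^2-\tilde P$ is small and self-adjoint, the spectrum of $\tilde P$ is contained in a small neighbourhood of $\{0,1\}$ and stays away from $\tfrac12$ for $\eps$ small. I then define
\begin{equation*}
 P_\eps:=\frac{\ui}{2\pi}\oint_{\lvert w-1\rvert=1/2}(w-\tilde P)^{-1}\,\ud w\,,
\end{equation*}
the Riesz projection of $\tilde P$ onto the part of its spectrum near $1$. This $P_\eps$ is an orthogonal projection (self-adjointness of $\tilde P$), it maps $D(H)$ to $D(H)$ because $\tilde P$ does and the resolvent $(w-\tilde P)^{-1}$ is a norm limit of polynomials in $\tilde P$ on the circle, and by the resolvent identity $P_\eps-\tilde P=\mathcal O(\tilde P^2-\tilde P)=\mathcal O(\eps^{N+1})$ in $\mathscr L(D(H))$, hence in particular $P_\eps-P_0=\mathcal O(\eps)$ in $\mathscr L(D(H))$.

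It remains to transport the commutator bound from $\tilde P$ to $P_\eps$. Writing $P_\eps-\tilde P$ as the contour integral of $(w-\tilde P)^{-1}(\tilde P^2-\tilde P)(w-\tilde P)^{-1}$ and commuting $H$ through, one estimates $[H,P_\eps]\chi(H)$ by a sum of terms each containing either a factor $[H,\tilde P]\chi(H)=\mathcal O(\eps^{N+1})$ or a factor $[H,\tilde P^2-\tilde P]\chi(H)$, the latter being $\mathcal O(\eps^{N+1})$ as well since $\tilde P^2-\tilde P=\mathcal O(\eps^{N+1})$ in $\mathcal A_H$ and its commutator with $H$ picks up at most the bounded factor coming from Lemma~\ref{lem:commAlg}; the resolvents $(w-\tilde P)^{-1}$ are uniformly bounded on the circle and preserve $D(H)$, so they cost nothing. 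This yields $\norm{[H,P_\eps]\chi(H)}=\mathcal O(\eps^{N+1})$, and multiplying on the right by $\varrho(H)=\chi(H)\varrho(H)$ gives the claimed estimate for every admissible $\varrho$. Finally, $U_\eps$ is defined by the Sz.-Nagy formula~\eqref{eq:Udef}, which makes sense since $\norm{P_0-P_\eps}=\mathcal O(\eps)<1$, and a direct computation shows $U_\eps P_0=P_\eps U_\eps$, so $U_\eps$ is the desired unitary intertwining $L^2(\mathcal E)$ with the image of $P_\eps$. The main obstacle I anticipate is the careful bookkeeping in applying Lemma~\ref{lem:chi}: one must check that $\tilde P$ genuinely satisfies both hypotheses of that lemma — the $\mathcal O(\eps)$ bound on $[\tilde P,H]$ in $\mathscr L(D(H),\mathscr H)$ and the $\mathcal O(\eps^{N+1})$ bound on $[\tilde P,H]\chi^s(H)$ for \emph{all} $s>0$ — which requires knowing that the estimate of Lemma~\ref{lem:sadiabatic}, a priori in the $\eps$-dependent norm $\norm{\cdot}_{2^N+2}$, translates via Theorem~\ref{thm:ellipt} into a bound against $\chi^s(H)$ uniformly in $\eps$, and that taking the symmetric part does not spoil membership in the algebras $\mathcal A_H$.
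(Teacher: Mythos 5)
There is a genuine gap at the heart of your construction: you treat $P^N$ (and hence its symmetric part $\tilde P=\tfrac12(P^N+(P^N)^*)$) as an element of $\mathscr L(\mathscr H)\cap\mathscr L(D(H))$. It is not. By Lemma~\ref{lem:sadiabatic}, $P_k\in\mathcal A_H^{2^k,0}$, so $P^N$ is a horizontal differential operator of order up to $2^N$; it is bounded only as a map $W^{2^N}_\eps\to\mathscr H$ (uniformly in $\eps$ thanks to the built-in factors $\eps^{|\alpha|}$), but for each fixed $\eps>0$ it is an unbounded operator on $\mathscr H$ as soon as $N\geq1$. Symmetrising does not cure unboundedness, so the statement that $\sigma(\tilde P)$ is contained in a small neighbourhood of $\{0,1\}$ is false, the contour integral $\oint(w-\tilde P)^{-1}\,\ud w$ is not defined, and the Riesz-projection step — the core of your argument — collapses. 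The same problem already undermines your earlier application of Lemma~\ref{lem:chi} with $T=P^N$, since that lemma requires $T\in\mathscr L(\mathscr H)\cap\mathscr L(D(H))$ self-adjoint.

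The missing idea is that the energy cut-off must be inserted \emph{into the approximate projection itself}, not merely used as a weight on the right of the commutator. The paper sets $\tilde P:=P^N-P_0\in\mathcal A_H^{2^N,1}$ and defines
\begin{equation*}
P^\chi:=P_0+\tilde P\chi_1(H)+\chi_1(H)\tilde P\big(1-\chi_1(H)\big)\,,
\end{equation*}
where $\chi_1$ is a regular cut-off equal to one on $[\inf\sigma(H)-1,\Lambda+1]$. Elliptic regularity (Theorem~\ref{thm:ellipt}) gives $\chi_1(H)\in\mathscr L\big(\mathscr H,D(H^{2^{N-1}})\big)$ with $D(H^{2^{N-1}})\subset W^{2^N}_\eps$, so $\tilde P\chi_1(H)$ is bounded, its adjoint equals $\chi_1(H)\tilde P$ on a dense set, and $P^\chi$ is a bounded self-adjoint operator on $\mathscr H$ and on $D(H)$ with $P^\chi=P_0+\mathcal O(\eps)$ and $(P^\chi)^2-P^\chi=\mathcal O(\eps)$. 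Only then does the Riesz integral around $z=1$ produce the orthogonal projection $P_\eps$, and the commutator bound is transported through the contour integral using $[H,P^\chi]\chi_2(H)=[H,P^N]\chi_2(H)=\mathcal O(\eps^{N+1})$ for a second cut-off $\chi_2$ supported where $\chi_1\equiv1$, together with the estimate $[(P^\chi-z)^{-1},\chi_2(H)]=\mathcal O(\eps^{N+1})$ from Lemma~\ref{lem:chi}. Your final paragraph correctly flags the bookkeeping around Lemma~\ref{lem:chi} as delicate, but the obstruction is not bookkeeping: without the $\chi_1$-regularisation there is no bounded self-adjoint almost-projection to feed into the Riesz formula in the first place.
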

\begin{proof}
To prove the statement for $N\in \mathbb{N}$ and $\Lambda>0$, take $P^{N}$ from Lemma~\ref{lem:sadiabatic} and let $\chi_1\in \mathscr{C}^\infty_0\left(\mathbb{R},[0,1]\right)$ 
be a regular cut-off, equal to one if $x\in [\inf \sigma (H)-1, \Lambda+1]$ and equal to zero if $x \notin (\inf \sigma(H) -2, \Lambda +2)$. Put $\tilde P:= P^{N}-P_0\in\mathcal{A}^{2^N,1}_H$ and define
\begin{equation*}
P^\chi:= P_0 + \tilde P\chi_1(H) + \chi_1(H)\tilde P\left(1-\chi_1(H)\right)\,.
\end{equation*}
The first step is to justify that $P^\chi=P_0 + \mathcal{O}(\eps)$ in $\mathscr{L}(\mathscr{H})$ and $\mathscr{L}\left(D(H)\right)$.
We have $\chi_1\in \mathscr{L}\big(\mathscr{H}, D(H^{2^{N-1}})\big)$ and by 
elliptic regularity ${D(H^{2^{N-1}})\subset W^{2^N}_\eps}$ 
(cf.~Theorem~\ref{thm:ellipt}), so $\tilde P\chi_1\in 
\mathscr{L}(\mathscr{H})\cap\mathscr{L}\left(D(H)\right)$. Therefore its adjoint 
is also a bounded operator and from the construction of $P_N$ we can see that 
$\chi_1 \tilde P=(\tilde P\chi_1)^*$ on $W^{2^N}_\eps$, so they are equal in 
$\mathscr{L}(\mathscr{H})$ because $W^{2^N}_\eps$ is a dense subspace of 
$\mathscr{H}$. 
Hence $P^\chi\in\mathscr{L}(\mathscr{H})$ is self-adjoint by construction.

We want to prove that also $P^\chi\in \mathscr{L}(D(H))$. To show $\chi_1\tilde P \in \mathscr{L}(D(H))$ we need to show $[H,\chi_1 \tilde P]=\chi_1[H,\tilde P] \in \mathscr{L}(D(H),\mathscr{H})$. But actually, by the same argument as before, we have $\chi_1[H,\tilde P]=([\tilde P,H]\chi_1)^*$ on ${W^{2^N+2}_\eps\cap D(H)}$, and thus $\chi_1[H,\tilde P]\in \mathscr{L}(\mathscr{H})$. These norms are of order $\eps$ because $\tilde P\in \mathcal{A}^{2^N,1}_H$. Consequently $P^\chi-P_0=\mathcal{O}(\eps)$ in $\mathscr{L}(\mathscr{H})$ as well as $\mathscr{L}\left(D(H)\right)$.
We conclude that
\begin{equation}\label{P chi comm1}
\norm{\left[H,P^\chi\right]}_{\mathscr{L}\left(D(H),\mathscr{H}\right)}
= \norm{\left[H,P_0\right]}_{\mathscr{L}\left(D(H),\mathscr{H}\right)} + \mathcal{O}(\eps)
= \mathcal{O}(\eps)\,.
\end{equation}
Now let $\chi_2\in \mathscr{C}^\infty_0\left(\mathbb{R},[0,1]\right)$ be another 
regular cut-off, equal to 
zero where $\chi_1\neq 1$ and
equal to one on $[\inf\sigma(H),\Lambda]$. Then we have 
$\chi_1\chi_2=\chi_2$, $(1-\chi_1)\chi_2=0$ and from Lemma~\ref{lem:sadiabatic} 
we get
\begin{equation}
\label{P chi comm2}
\norm{\left[H,P^\chi\right]\chi_2(H)}_{\mathscr{L}\left(\mathscr{H}\right)}
= \lVert[H,P^{N}]\chi_2(H)\rVert_{\mathscr{L}\left(\mathscr{H}\right)}
= \mathcal{O}(\eps^{N+1})\,.
\end{equation}
Since $P^\chi$ is close to the projection $P_0$ we have for $m\in\lbrace0,1\rbrace$:
\begin{equation*}
\lVert(P^\chi)^2-P^\chi\rVert_{\mathscr{L}\left(D(H^m)\right)}= \mathcal{O}(\eps)\,.
\end{equation*}
Thus there is a constant $C>0$ such that the spectrum of $P^\chi$ (as an operator in $\mathscr{L}(\mathscr{H})$ as well as $\mathscr{L}\left(D(H)\right)$) satisfies
\begin{equation*}
\sigma(P^\chi)\subset [-C\eps, C\eps] \cup [1-C\eps, 1+ C\eps]\,.
\end{equation*}
Take $\gamma$ to be the circle of radius $1/2$ around $z=1$. Then for $\eps < (4C)^{-1}$ the integral
\begin{equation*}
 P_\eps:=\frac{\ui}{2\pi}\int_\gamma \left(P^\chi-z\right)^{-1}\ud z
\end{equation*}
defines an element of $\mathscr{L}(\mathscr{H})$ and $\mathscr{L}\left(D(H)\right)$ with norm less than two. It is an orthogonal projection on $\mathscr{H}$ by the functional calculus and satisfies
\begin{align*}
 P_\eps - P_0 &= \frac{\ui}{2\pi}\int_\gamma \big((P^\chi-z)^{-1} - (P_0-z)^{-1}\big)\ud z\\
 &=\frac{\ui}{2\pi}\int_\gamma (P_0-z)^{-1} \big(P_0 - P^\chi \big)(P^\chi-z)^{-1} \ud z= \mathcal{O}(\eps)\,.
\end{align*}
To complete the proof, we will need to control the commutator of $\chi_2(H)$ 
with ${R^\chi(z):= \left(P^\chi-z\right)^{-1}}$.
First of all ${(1-\chi_1)\chi_2^s=0}$ for every $s>0$, so~\eqref{P chi comm2} holds for every positive power of $\chi_2$ and we can apply Lemma~\ref{lem:chi} with $T=P^\chi$ to get
\begin{equation}\label{comm Pchi}
\norm{[P^\chi,\chi_2]}_{\mathscr{L}\left(\mathscr{H},D(H)\right)}=\mathcal{O}(\eps^{N+1})\,. 
\end{equation}
Then
\begin{align}
\lVert[R&^\chi(z),\chi_2]\rVert_{\mathscr{L}\left(\mathscr{H},D(H)\right)}\notag\\
&=\norm{R^\chi(z)[P^\chi,\chi_2]R^\chi(z)}_{\mathscr{L}\left(\mathscr{H},D(H)\right)}
=\mathcal{O}(\eps^{N+1})\,.\label{comm Rchi}
\end{align}
Now since $\varrho(H)=\chi_2 (H) \varrho(H)$ we have
\begin{align*}
\lVert[H&,P_\eps]\varrho(H)\rVert_{\mathscr{L}\left(\mathscr{H}\right)}\\
&=\norm{\frac{\ui}{2\pi}\int_\gamma R^\chi(z)\left[H,P^\chi\right]R^\chi(z)\chi_2(H)\varrho(H)\ud z}\\
&=\begin{aligned}[t]
\bigg\lVert\frac{\ui}{2\pi}\int_\gamma &
R^\chi(z)\left[H,P^\chi\right]\chi_2(H)R^\chi(z)\varrho(H)\\
&+ R^\chi(z)
\underbrace{\left[H,P^\chi\right]}_{\stackrel{\eqref{P chi comm1}}{=}\mathcal{O}(\eps)}
\underbrace{\left[R^\chi(z),\chi_2(H)\right]}_{\stackrel{\eqref{comm Rchi}}{=} \mathcal{O}(\eps^{N+1})}
\varrho(H)\ud z\bigg\rVert\\                                        
\end{aligned}\\
&\leq\bigg\lVert\frac{\ui}{2\pi}\int_\gamma R^\chi(z)\underbrace{\left[H,P^\chi\right]\chi_2(H)}_{\stackrel{\eqref{P chi comm2}}{=}\mathcal{O}(\eps^{N+1})}R^\chi(z)\varrho(H)\ud z\bigg\rVert + \mathcal{O}(\eps^{N+2})\\
&= \mathcal{O}(\eps^{N+1})\,.
\end{align*}
\end{proof}
The projection $P_\eps$ has the same asymptotic expansion as $P^N$, as proved by Nenciu~\cite{Nen}.
\begin{lem}\label{lem:expansion}
 Let $P_k$ be the operators of Lemma~\ref{lem:sadiabatic}. Then for every regular cut-off $\chi\in \mathscr{C}^\infty_0\big((-\infty, \Lambda], [0,1]\big)$
\begin{equation*}
\Big\lVert \big( P_\eps - \sum_{k=0}^N \eps^k P_k\big)\chi(H) \Big\rVert_{\mathscr{L}(\mathscr{H},D(H))}=\mathcal{O}(\eps^{N+1})\,.
\end{equation*}
\end{lem}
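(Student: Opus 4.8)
Proof proposal.

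The plan is to compare $P_\eps$ first with the auxiliary operator $P^\chi := P_0 + \tilde P\chi_1(H) + \chi_1(H)\tilde P(1-\chi_1(H))$ constructed in the proof of theorem~\ref{thm:proj} (with $\tilde P := P^N - P_0 \in \mathcal{A}^{2^N,1}_H$ and $P^N = \sum_{k=0}^N \eps^k P_k$), and then $P^\chi$ with $P^N$. Since $\chi_1\equiv 1$ on $[\inf\sigma(H),\Lambda]\supset\supp\chi\cap\sigma(H)$ we have $(1-\chi_1(H))\chi(H)=0$, and from the identity $P^\chi - P^N = -(1-\chi_1(H))\tilde P(1-\chi_1(H))$ (checked in that proof) we get $(P^\chi - P^N)\chi(H)=0$. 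Hence it suffices to estimate $(P_\eps - P^\chi)\chi(H)$ in $\mathscr{L}(\mathscr{H},D(H))$.

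For this I would start from the Riesz formula $P_\eps = \tfrac{\ui}{2\pi}\int_\gamma(P^\chi - z)^{-1}\ud z$ of theorem~\ref{thm:proj} together with the algebraic identity
\[
(P^\chi - z)^{-1} = \frac{P^\chi - 1}{z} + \frac{P^\chi}{1-z} + (P^\chi-z)^{-1}\,\frac{(P^\chi)^2-P^\chi}{z(z-1)}\,,
\]
which follows by multiplying through by $P^\chi - z$. Integrating over the contour $\gamma$ (the circle of radius $1/2$ about $1$, so $0$ lies outside and $1$ inside) kills the first term and turns the second into $P^\chi$, giving
\[
(P_\eps - P^\chi)\chi(H) = \frac{\ui}{2\pi}\int_\gamma (P^\chi-z)^{-1}\,\frac{\big((P^\chi)^2-P^\chi\big)\chi(H)}{z(z-1)}\,\ud z\,.
\]
As $(P^\chi - z)^{-1}$ is uniformly bounded on $\mathscr{H}$ and on $D(H)$ for $z\in\gamma$ and small $\eps$ (from $\sigma(P^\chi)\subset[-C\eps,C\eps]\cup[1-C\eps,1+C\eps]$ and $P^\chi\in\mathscr{L}(D(H))$), the claim reduces to showing $\big((P^\chi)^2-P^\chi\big)\chi(H)=\mathcal{O}(\eps^{N+1})$ in $\mathscr{L}(\mathscr{H},D(H))$.

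To obtain this, set $E := P^N - P^\chi = (1-\chi_1(H))\tilde P(1-\chi_1(H))$, so $E\chi(H)=0$, and expand $(P^\chi)^2 - P^\chi = \big((P^N)^2-P^N\big) - P^NE - EP^N + E + E^2$. Every term in which $E$ stands immediately to the left of $\chi(H)$ (namely $P^NE\chi(H)$, $E^2\chi(H)$ and $E\chi(H)$) vanishes, leaving $\big((P^\chi)^2-P^\chi\big)\chi(H) = \big((P^N)^2-P^N\big)\chi(H) - EP^N\chi(H)$. The first summand is $\mathcal{O}(\eps^{N+1})$: by lemma~\ref{lem:sadiabatic} we have $(P^N)^2-P^N\in\mathcal{A}^{2^{N+1},N+1}_H$, which maps $W^{2^{N+1}+2}_\eps$ into $D(H)$ with norm $\mathcal{O}(\eps^{N+1})$ (elliptic regularity, theorem~\ref{thm:ellipt}, together with the Dirichlet condition built into $\mathcal{A}_H$), while $\chi(H):\mathscr{H}\to W^{2^{N+1}+2}_\eps$ is bounded. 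For the second summand I would write $EP^N\chi(H) = (1-\chi_1(H))\tilde P\big[(1-\chi_1(H))P^N\chi(H)\big]$ and note that, since $P^N\chi(H)=P^\chi\chi(H)=P^\chi\chi_2(H)\chi(H)$ and $(1-\chi_1(H))\chi_2(H)=0$,
\[
(1-\chi_1(H))P^N\chi(H) = (1-\chi_1(H))\,[P^\chi,\chi_2(H)]\,\chi(H)\,,
\]
which by~\eqref{comm Pchi} is $\mathcal{O}(\eps^{N+1})$ in $\mathscr{L}(\mathscr{H},D(H))$.

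The main obstacle is that $\tilde P$ is a differential operator of base-order $2^N$, hence not bounded on $D(H)=W^2_\eps$ once $N\ge 2$; I need the bound on $(1-\chi_1(H))P^N\chi(H)$ in the stronger space $W^{2^N+2}_\eps$, i.e.\ I must upgrade~\eqref{comm Pchi} to $\mathscr{L}(\mathscr{H},D(H^m))$ for $m$ large. This I would do by a bootstrap: using the Jacobi identity $[H,[P^\chi,\chi_2(H)]] = [[H,P^\chi],\chi_2(H)]$, the estimates $[H,P^\chi]\chi_2^s(H)=\mathcal{O}(\eps^{N+1})$ of~\eqref{P chi comm2} (valid for all powers of $\chi_2$, and via the adjoint also with $\chi_2^s(H)$ on the left), and the fact that $H\chi(H)$ is again of the form $\varphi(H)$ with $\varphi\in\mathscr{C}^\infty_0$, one shows inductively that $[P^\chi,\chi_2(H)]\chi(H)=\mathcal{O}(\eps^{N+1})$ in $\mathscr{L}\big(\mathscr{H},D(H^m)\big)$, hence in $\mathscr{L}\big(\mathscr{H},W^{2m}_\eps\big)$, for every $m$. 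Choosing $2m\ge 2^N+2$ and combining $\tilde P\in\mathscr{L}(W^{2^N+2}_\eps,D(H))$ with $1-\chi_1(H)\in\mathscr{L}(D(H))$ then yields $EP^N\chi(H)=\mathcal{O}(\eps^{N+1})$ in $\mathscr{L}(\mathscr{H},D(H))$, which finishes the proof. (If $P_\eps$ is built with a parameter larger than $N$, the additional terms $\eps^k P_k\chi(H)$ with $k>N$ are manifestly $\mathcal{O}(\eps^{N+1})$, so one may assume it is the projection for exactly $N$.)
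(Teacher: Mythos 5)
The paper does not actually prove this lemma --- it defers to \cite[lemma 2.25]{Lam} --- so your argument has to stand on its own. Its architecture is the right one and is the standard Nenciu-type argument the paper alludes to: the identity $P^\chi-P^N=-(1-\chi_1(H))\tilde P(1-\chi_1(H))$ and hence $(P^\chi-P^N)\chi(H)=0$, the resolvent identity isolating $(P^\chi)^2-P^\chi$ in the Riesz integral, and the reduction of $\big((P^\chi)^2-P^\chi\big)\chi(H)$ to $\big((P^N)^2-P^N\big)\chi(H)$ plus $EP^N\chi(H)$ are all correct, and you have honestly located the real difficulty in the term $EP^N\chi(H)$.

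The bootstrap you propose to close that difficulty does not, however, go through as stated. To get $[P^\chi,\chi_2(H)]\chi(H)=\mathcal{O}(\eps^{N+1})$ in $\mathscr{L}(\mathscr{H},D(H^m))$ you must bound $H^j[P^\chi,\chi_2(H)]\chi(H)$ for all $j\le m$, and already at $j=2$ the Jacobi identity leaves you with $H[H,P^\chi]\chi(H)=H[H,P^N]\chi(H)$. This is not controlled by~\eqref{P chi comm2} (an estimate in $\mathscr{L}(\mathscr{H})$, not in $\mathscr{L}(\mathscr{H},D(H))$) nor by its adjoint, and rewriting it as $[H,[H,P^N]]\chi(H)+[H,P^N]H\chi(H)$ only shifts the problem to the double commutator. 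The paper's commutator calculus (lemma~\ref{lem:commAlg}) cannot be applied there, because $[H,P^N]$ is \emph{not} an element of $\mathcal{A}_H$: its image fails the Dirichlet condition, which is exactly the boundary issue that membership in $\mathcal{A}_H$ is designed to quarantine. A second, related gap: to compose $\tilde P\in\mathcal{A}_H^{2^N,1}$ with $(1-\chi_1(H))[P^\chi,\chi_2(H)]\chi(H)$ you need the latter to land in $W^{2^N+2}_\eps$ with a quantitative bound, but the framework as defined only yields that elements of $\mathcal{A}_H$ map $W^{k+2}_\eps$ into $D(H)$ (only one power of $H_F$ is controlled, via $H_FA\in\mathcal{A}$); even interpreting $EP^N\chi(H)$ as a composition requires mapping properties of $P^N\chi(H)$ into higher Sobolev spaces that are not established. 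What is missing is a higher-order extension of the calculus --- control of $H_F^mP_k$ and of iterated commutators of $H$ with $P^N$, together with the corresponding strengthening of lemma~\ref{lem:chi} --- for the specific operators $P_k$ built from $P_0$, $R_F(\lambda)$ and controlled commutators. That is precisely the content of the deferred proof; without it your final paragraph is an assertion rather than an argument.
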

A proof adapted to our notation can be found in~\cite[Lemma 2.25]{Lam}.

\section{The ground state band}
In this section we apply the general theory just developed to the ground state band $\lambda_0$. We begin by showing that the gap condition holds if $F$ is connected. Estimates on the size of the spectral gap in terms of geometric quantities have been derived for many special cases, mostly with $V=0$, see Schoen and Yau~\cite{SY} for a discussion of such results. 
\begin{prop}\label{prop:gap}
 Let $\lambda_0:=\min \sigma(H_F)$ be the ground state band. If $F$ is connected and $M$ satisfies Condition~\ref{cond:geom}, then $\lambda_0$ has a spectral gap in the sense of Condition~\ref{cond:gap}.
\end{prop}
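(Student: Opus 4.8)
The plan is to establish two things: first, that $\lambda_0(x) = \min\sigma(H_F(x))$ is a simple eigenvalue for every $x$, with a uniform lower bound on the distance to the rest of the spectrum; and second, that $\lambda_0$ together with suitable barrier functions $f_\pm$ satisfies Condition~\ref{cond:gap}. The simplicity of the ground state is the classical consequence of the Perron--Frobenius / Courant-type argument: since $F_x$ is connected and $H_F(x) = -\Delta_{g_{F_x}} + V$ is a Schrödinger operator with Dirichlet conditions on $\partial F$, the ground state eigenfunction can be chosen strictly positive in the interior, and any two ground states would then have to be proportional; hence $\dim\ker(H_F(x)-\lambda_0(x)) = 1$ and in particular $\lambda_1(x) > \lambda_0(x)$ strictly, for each fixed $x$. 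This gives a gap pointwise, but not yet uniformly in $x$.

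The core of the proof is therefore the uniform lower bound $\inf_{x\in B}\big(\lambda_1(x)-\lambda_0(x)\big) =: 2\delta_0 > 0$. I would obtain this from the bounded geometry of the fibre bundle (Condition~\ref{cond:geom}). Using the uniform trivialisations $\Phi_\nu$ of Definition~\ref{def:unitriv}, one transports $H_F(x)$ for $x\in U_\nu$ to a family of operators on the fixed model fibre $(F,g_0)$ of the form $-\Delta_{\tilde g(x)} + \tilde V(x)$, where the metrics $\tilde g(x)$ and potentials $\tilde V(x)$, together with all their derivatives, are bounded uniformly in $x$ and $\nu$, and $\tilde g(x)$ is uniformly elliptic (bounded above and below). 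For such a uniformly controlled family on a \emph{fixed} compact manifold with boundary, standard elliptic estimates give a two-sided bound on $\lambda_0(x)$ and a uniform upper bound on $\lambda_1(x)$, and one can control the ground state eigenfunction $\phi_x$ in $\mathscr{C}^1$ (e.g.\ via elliptic regularity and Sobolev embedding), hence a uniform Harnack inequality yields a uniform lower bound $\phi_x \geq c > 0$ on any fixed compact subset of the interior, and a uniform lower bound on $\int_F \phi_x$. Feeding this into the standard variational estimate for the spectral gap of a one-dimensional ground state eigenspace (writing $\psi \perp \phi_x$ and estimating $\langle \psi, (H_F(x)-\lambda_0(x))\psi\rangle$ from below, using positivity of $\phi_x$) produces the uniform gap $2\delta_0$. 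Equivalently, one can argue by contradiction and compactness: if the gap degenerated along a sequence $x_n$, the transported operators would, after passing to a subsequence, converge (in norm resolvent sense, by the uniform bounds and elliptic compactness on the fixed $F$) to a limiting Schrödinger operator on $(F,g_\infty)$ with connected $F$, whose ground state is again simple, contradicting the collapse of the gap.

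Once the uniform gap $2\delta_0 > 0$ and the two-sided bounds $-C \leq \lambda_0(x) \leq C$ are in hand, Condition~\ref{cond:gap} follows by taking the \emph{constant} functions $f_-(x) := \inf_B \lambda_0 - \delta_0$ and $f_+(x) := \sup_B\lambda_0 + \delta_0$ (or, if more room is wanted, $f_\pm$ can even be chosen to track $\lambda_0$ while staying at distance $\geq \delta := \tfrac{1}{2}\min(\delta_0, \operatorname{dist}(\inf_B\lambda_0, \sigma)_{\text{from below}})$); these are in $\mathscr{C}_b(B)$, lie at distance $\geq \delta$ from $\sigma(H_F(x))$ for all $x$, and $[f_-(x),f_+(x)]\cap\sigma(H_F(x)) = \{\lambda_0(x)\}$ by construction, since the next eigenvalue $\lambda_1(x) \geq \lambda_0(x)+2\delta_0 > f_+(x)$ is, after adjusting constants, pushed above $f_+$. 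I would note that below $\lambda_0(x)$ there is no spectrum at all, so only the upper barrier needs the gap estimate.

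The main obstacle is the \emph{uniformity} of the spectral gap: the pointwise simplicity is classical, but controlling $\lambda_1(x)-\lambda_0(x)$ away from zero as $x$ ranges over a (possibly non-compact) base requires genuinely using that the bundle has bounded geometry, so that the fibre-wise operators form a precompact (or uniformly controlled) family after trivialisation. The cleanest route is probably the quantitative one — uniform Harnack for the positive ground state plus the variational gap formula — since it avoids subtleties about in which topology the transported operators converge; but the contradiction-and-compactness argument is a valid alternative and may be shorter to write. Everything else (positivity of the ground state, the variational characterisation, choosing $f_\pm$) is routine.
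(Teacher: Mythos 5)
You correctly identify the crux -- uniformity of the gap $\inf_x(\lambda_1(x)-\lambda_0(x))>0$ over a possibly non-compact base -- and your ``contradiction-and-compactness'' alternative is in fact exactly the paper's proof: the authors pull the fibre operators back to the fixed model fibre $(F,g_0)$ via the uniform trivialisations, extract a $\mathscr{C}^\infty$-convergent subsequence of metrics and potentials by Arzel\`a--Ascoli and a diagonal argument, and invoke the continuous dependence of Dirichlet eigenvalues on the metric and potential (Bando--Urakawa) to contradict simplicity of the ground state of the limit operator. So one of your two routes matches the paper. Your preferred quantitative route, however, has a real gap at the step you call ``the standard variational estimate for the spectral gap'': after the ground state transformation $\psi=\phi_x u$ one needs a weighted Poincar\'e inequality for the measure $\phi_x^2\,\mathrm{vol}$, and with Dirichlet conditions this weight degenerates at $\partial F$. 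An interior Harnack bound $\phi_x\geq c$ on compact subsets is not enough; one needs uniform two-sided boundary behaviour (a uniform Hopf lemma, $\phi_x\asymp \dist(\cdot,\partial F)$) to control the Poincar\'e constant. This is precisely the kind of quantitative gap estimate the paper points to as known only ``for many special cases, mostly with $V=0$,'' and it is the reason the authors take the soft compactness route instead. As written, your primary argument is not complete.

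There is also a concrete error in your construction of the barrier functions: the \emph{constant} choices $f_-=\inf_B\lambda_0-\delta_0$, $f_+=\sup_B\lambda_0+\delta_0$ do not satisfy condition~\ref{cond:gap} unless the oscillation of $\lambda_0$ over $B$ is smaller than the gap. If $\lambda_0$ varies by more than $\delta_0$, then at a point where $\lambda_0(x)$ is near its infimum one can have $\lambda_1(x)\leq \sup_B\lambda_0+\delta_0=f_+$, so $[f_-,f_+]$ captures a second eigenvalue; the inequality $\lambda_0(x)+2\delta_0>f_+$ in your text simply fails there. The barriers must track the band, $f_\pm(x)=\lambda_0(x)\pm\delta_0$, which is why condition~\ref{cond:gap} allows functions rather than constants -- and this in turn requires continuity of $\lambda_0$, which you should not take for granted but which falls out of the same continuous-dependence argument (and is also established independently in lemma~\ref{lem:P_0}). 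Your parenthetical remark shows you are aware of the tracking option, but it is not optional.
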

\begin{proof}
We argue that the absence of a spectral gap leads to a contradiction to the fact that the ground state of a real Schrödinger operator on a connected, compact manifold is a simple eigenvalue. Let $\lambda_1:=\min \big(\sigma(H_F)\setminus \lambda_0 \big)$. If $\inf_{x\in B}\lambda_1-\lambda_0$ is not larger than zero, then clearly there exists a sequence $(x_k)_{k\in \N}$ in $B$ with $\lim_{k\to \infty} \lambda_1(x_k) -\lambda_0(x_k)=0$. Now for every $k$ take an open set $U_{\nu(k)}\in \mathfrak{U}$ containing $x_k$ and let $g_k:= (\Phi_{\nu(k)}^{-1})^*g_{F_{x_k}}$. Because of the bounds on $(\Phi_{\nu(k)}^{-1})^*$ that are required by Condition~\ref{cond:geom}, for any $m\in\N$ the sequence $(g_k)_{k\in \N}$ is bounded in the $\mathscr{C}^{m+1}$-norm on $\Gamma(T^*F\otimes T^*F)$ with respect to $g_0$. Thus by the Arzelà-Ascoli theorem there is a subsequence converging to a symmetric bilinear form $g_\infty$ of $\mathscr{C}^m$-regularity and by repeated extraction of subsequences and a diagonal argument 
$g_\infty$ is a smooth tensor. Because of the bounds on the inverse $\Phi_{\nu(k)}^*$, the sequence of metrics is also 
positive definite in a uniform way and $g_\infty$ is a Riemannian metric. Further extraction of subsequences gives convergence of $V(x_k)$ to a potential $V_\infty$. Now it was shown by Bando and Urakawa that the eigenvalues depend continuously on the metric and the potential (see~\cite{BaUr}, the proof is stated for manifolds without boundary but carries over to the Dirichlet Laplacian and Schrödinger operators because the eigenvalues are determined by a max-min principle in a similar way). This means that the sequences $\lambda_1(x_k)$ and $\lambda_0(x_k)$ converge to the two smallest eigenvalues of the operator $H_\infty:=-\Delta_{g_\infty} + V_\infty$ on $F$. But this is impossible because the smallest eigenvalue of $H_\infty$ is simple since $F$ is connected. Thus a positive lower bound for $\lambda_1-\lambda_0$ must exist.
 
 The continuous dependence of the eigenvalues on the metric and potential now shows that these eigenvalues may be separated by continuous functions, so Condition~\ref{cond:gap} is satisfied.
\end{proof}
From now on we always assume $F$ to be connected. 
In general it is convenient to express the effective and adiabatic operators using the induced connection on~$\mathcal{E}$
\begin{equation*}
 \nabla^B_X \psi := P_0X^* \psi\,,
\end{equation*}
which is usually called the Berry connection. Due to the special properties of the ground state eigenfunction, this connection and the adiabatic operator for the band $\lambda_0$ can be calculated rather explicitly. In fact it is always possible to chose an eigenfunction $\phi_0(x,\cdot) \in \ker(H_F(x)-\lambda_0(x))$ that is real valued, positive and normalised. Since this choice is unique, it provides a trivialisation of $\mathcal{E}$ and an isomorphism $L^2(\mathcal{E})\cong L^2(B)$. Expressing $\nabla^B$ in this trivialisation defines a complex-valued one-form $\omega^B$ by
\begin{equation*}
 \nabla^B_X\phi_0\psi=: \phi_0\big(X + \omega^B(X)\big)\psi\,.
\end{equation*}
Since $\phi_0$ is real, the imaginary part of $\omega^B$ vanishes. The real part can be calculated by
\begin{align*}
 2\omega^B(X)&=\omega^B(X) + \overline{\omega^B(X)}\\
&=  \int_{F_x} \big(\phi_0X^*\phi_0\big) +  \big((X^*\phi_0)\phi_0\big)\,\mathrm{vol}_{g_{F_x}}\\
&= - \int_{F_x} \abs{\phi_0}^2\mathcal{L}_{X^*}\mathrm{vol}_{F_x}\\
&= - \int_{F_x} \abs{\phi_0}^2g_B(X, \pi_*\eta)\mathrm{vol}_{F_x}
\end{align*}
and equals the mean curvature vector $\eta$ of the fibres, averaged by the eigenfunctions. Note that a non-zero real part means that $\nabla^B$ is not a metric connection and that for $\partial M =\varnothing $ this is explicitly given by
\begin{equation*}
 \omega^B\stackrel{\partial M=\varnothing}{=}-\tfrac12 \ud \big(\log\mathrm{Vol}(F_x)\big)\,.
\end{equation*}
Using these formulas an elementary calculation yields (see~\cite[Chapter 3]{Lam})
\begin{equation*}
 H_\mathrm{a}= - \eps^2 \Delta_{g_B} + \lambda_0 + \eps P_0 H_1 P_0 + \eps^2 V_\mathrm{a}\,,
\end{equation*}
with the adiabatic potential (in some contexts referred to as Born-Huang potential)
\begin{equation*}
 V_\mathrm{a}:= - \tfrac 12 \tr_{g_B}\big((\nabla_{\cdot}\omega^B)(\cdot)\big) + \int_{F_x} \pi^*g_B(\grad \phi_0, \grad \phi_0)\, \mathrm{vol}_{F_x}\,.
\end{equation*}
If the boundary is empty this evaluates to
\begin{equation}\label{eq:Va}
 V_\mathrm{a}\stackrel{\partial M=\varnothing}{=} \tfrac{1}{2} \Delta (\log \mathrm{Vol}(F_x)) + \tfrac{1}{4} \abs{\ud\log \mathrm{Vol}(F_x)}^2_{g_B} \,.
\end{equation}
Explicit formulas for this operator are derived in~\cite{HLT} for different generalisations of the waveguides and layers studied in~\cite{BMT, BGRSweakly, CEKtop, CDFK,deO, deOVe, DEbound, FrSo, Ga, GoJa, Gru, KoVu, LiLu2, Wi}. In these situations, the operator $H_1$ arises naturally from the induced metric of the tube, which is a Riemannian submersion to leading order but not exactly. The corrections to the metric concern only the horizontal directions, so $H_1$ is a horizontal differential operator of second order and the condition $-\eps^2\Delta_h + \eps H_1 \geq -C\eps$ of Proposition~\ref{prop:ground} is satisfied.
\subsection{Proof of Proposition~\ref{prop:ground}}
We now prove Proposition~\ref{prop:ground}, which states that the effective operator for the ground state band $\lambda_0$ (constructed for $N\in \N$ and $\Lambda>0$), is almost unitarily equivalent to $H$ at energies below $\Lambda_1:= \inf_{x_\in B} \lambda_1(x)$. That is, for a regular cut-off $\chi\in \mathscr{C}^\infty_0\big((-\infty, \Lambda_1), [0,1]\big)$ we need to show that
\begin{equation*}
\norm{H \chi(H) - U_\eps H_\mathrm{eff}  \chi(H_\mathrm{eff})U_\eps^*}=\mathcal{O}(\eps^{N+1})\,.
\end{equation*}
\begin{proof}
 Let $P_\eps$ be the projection of Theorem~\ref{thm:proj} for $N\in \N$ and $\Lambda>0$. We then have
 \begin{align}
  H\chi(H)&=
  \left(P_\eps H P_\eps + P_\eps^\perp H P_\eps^\perp + (1-2P_\eps)[H,P_\eps]\right) \chi(H)\notag\\
  &= U_\eps H_\eff U^*_\eps P_\eps \chi(H) + P_\eps^\perp H P_\eps^\perp \chi(H) + \mathcal{O}(\eps^{N+1})\label{eq:H_chi_ground}\,.
 \end{align}
Now, we use Lemma~\ref{lem:chi} with $T=P_\eps^\perp$ to get
\begin{equation*}
\norm{P_\eps^\perp\chi(H)-P_\eps^\perp\chi(P_\eps^\perp H P_\eps^\perp)}_{\mathscr{L}(\mathscr{H},D(H))}=\mathcal{O}(\eps^{N+1})\,.
\end{equation*}
The lower bound on $-\eps^2\Delta_h + \eps H_1$ then implies that $\supp \chi \cap \sigma(P_\eps^\perp H P_\eps^\perp)=\varnothing$, as observed in~\eqref{eq:lowerbound}. To make that observation rigorous, first note that the graph norms of $H$ and $H_{\rm diag}=P_\eps H P_\eps + P_\eps^\perp H P_\eps^\perp$ are equivalent, with constants independent of $\eps$. Then
\begin{equation*}
 \norm{P_\eps^\perp (P_0^\perp HP_0^\perp - H)P_\eps^\perp}_{\mathscr{L}(D(H_{\rm diag}), \mathscr{H})}=\mathcal{O}(\eps)\,,
\end{equation*}
so indeed (cf.~\cite[Theorem x.12]{ReSi2})
\begin{equation*}
 P_\eps^\perp H P_\eps^\perp \geq P_0^\perp HP_0^\perp - \mathcal{O}(\eps)\geq \Lambda_1 -\mathcal{O}(\eps)\,,
\end{equation*}
as an operator on $P_\eps^\perp D(H)\subset P_\eps^\perp \mathscr{H}$.
Hence $P_\eps^\perp \chi(P_\eps^\perp H P_\eps^\perp)=0$, for $\eps$ small enough, and $H P_\eps^\perp \chi(H)=\mathcal{O}(\eps^{N+1})$ for the second term in~\eqref{eq:H_chi_ground}.

It remains to prove that the first term is close to the desired one. This follows easily by another use of Lemma~\ref{lem:chi}, with $T=P_\eps$, giving
\begin{equation*}
 \norm{U^*_\eps P_\eps \chi(H)- P_0 \chi(H_\mathrm{eff}) U_\eps^*}_{\mathscr{L}(\mathscr{H},D(H))}=\mathcal{O}(\eps^{N+1})\,,
\end{equation*}
because functional calculus commutes with unitaries.
\end{proof}

%
%
%
%
\subsection{Refined asymptotics for small energies}\label{sect:low}
In this section we take a closer look at the asymptotics for small energies. We already know that $H$ may be represented by $H_\mathrm{eff}$ using the unitary transformation $U_\eps$, with arbitrary (polynomial) precision. 
We will now show that the adiabatic operator determines the spectrum of $H$ with higher precision than usual in the low energy regime. In particular, the approximation is good enough to make the adiabatic potential, which is of order $\eps^2$, relevant.
Although $H_\mathrm{a}$ depends only on $P_0$ and not the refined projections $P_\eps$, knowledge of the precise form on these projections is crucial for our proof of this approximation. 
There are two reasons for this, the first being that we need to know the terms of $H_\mathrm{a} - H_\mathrm{eff}$ rather explicitly in order to see how their size depends on the energy scale.
The second reason is that $H_\mathrm{eff}$ is not close to $H$ but only (almost) unitarily-equivalent. For the spectral problem this can be understood by taking the eigenfunctions of $H_\mathrm{eff}$ unitarily transformed with $U_\eps$ as trial functions for $H$, which does not change the eigenvalue but requires the existence of $U_\eps$.

Here we will only consider connected fibres and an operator $H_1$ of a special form, that is relevant to the applications in~\cite{HLT} and~\cite{LaNod}.
By small energies we mean energies whose distance to
\begin{equation*}                                                                                                                                                                                                                                                                                                                                                                                                                                                               \Lambda_0:=\inf_{x\in B} \min \sigma(H_F)                                                                                                                                                                                                                                                                                                                                                                                                                                                        \end{equation*}
is of order $\eps^\alpha$, with $0<\alpha \leq 2$.
It is then convenient to set
\begin{equation}\label{eq:Hground}
 H:=-\Delta_{g_\eps} + V +  \eps H_1 - \Lambda_0\,,
\end{equation}
and
\begin{equation*}
 H_F:=-\Delta_{F} +V - \Lambda_0\,.
\end{equation*}\label{not:H_F0}
For the following we fix the projection $P_\eps$ and the unitary $U_\eps$ constructed for $\lambda_0$, given $\Lambda$ and $N\geq 3$. Analysing energies of order $\eps^\alpha$ amounts to studying $H$ only on the image of 
\begin{equation}\label{eq:rho alpha}
 \varrho_\alpha(H):= 1_{(-\infty, \eps^\alpha C]}(H)\,,
\end{equation}
for some constant $C>0$.
Equivalently one may rescale the original problem by $\eps^{-\alpha}$ and consider bounded energies. The most relevant energy scales are
\begin{itemize}
 \item $\alpha=1$: If $\lambda_0$ has a unique non-degenerate minimum on $B$, the smallest eigenvalues of $-\eps^2\Delta_B + \lambda_0$ behave like those of a $d$-dimensional harmonic oscillator. 
 In particular their difference is of order $\eps$. We will show that this implies existence of eigenvalues of $H$ with the same behaviour, that are approximated by those of $H_\mathrm{a}$ up to order $\eps^{3}$.
 \item $\alpha=2$: Assume $\lambda_0 \equiv 0$, for example because $\partial M=\varnothing$ or the fibres are isometric. Then (if $H_1=0$) $H_\mathrm{a}=\eps^2(-\Delta_{g_B} + V_\mathrm{a})$ in the trivialisation by $\phi_0$, so the typical energy scale of this operator is $\eps^2$. We will show that, also for $H_1\neq 0$, small eigenvalues of $H_\mathrm{a}$ approximate those of $H$ up to $\eps^4$ and vice versa (Proposition~\ref{prop:low eigen}). More generally, the spectra coincide up to order $\eps^3$ (Proposition~\ref{prop:low spec}).
\end{itemize}
The reason why one should expect the adiabatic approximation to be better on these $\eps$-dependent energy scales is that the corrections derived in Section~\ref{sect:proj} are given by differential operators. More precisely, $P_\eps - P_0 \approx \eps P_1$ (see Lemma~\ref{lem:expansion}) with 
\begin{equation*}
 P_1P_0=-R_F(\lambda_0)[H,P_0]P_0=R_F(\lambda_0)\big(\eps[\Delta_h, P_0] - [H_1, P_0]\big)P_0\,.
\end{equation*}
By commuting derivatives to the right, the first term here can be written as a the sum of a potential of order $\eps$ and an operator with an horizontal derivative acting to the right. Now such a derivative is of order one when $\eps^2\Delta_h=\mathcal{O}(1)$, but we expect it to be of order $\eps^{\alpha/2}$ when $\eps\Delta_h=\mathcal{O}(\eps^\alpha)$, which is the case on the image of $\varrho_\alpha$. Hence we expect $\eps P_1 \varrho_\alpha(H)$ to be of order $\eps^{1+\alpha/2}$, at least if $H_1$ also consists of horizontal differential operators of non-zero order. In this case the adiabatic approximation should be better by at least a factor of $\eps^{\alpha/2}$ compared to the general case. Precisely the assumptions we make on $H_1$ are:
\begin{cond}\label{cond:low}
The operator $H_1$ has the form
\begin{equation*}
H_1\psi =-\eps^2 \divg_g S_\eps(\ud \psi, \cdot)  + \eps V_\eps\psi\,,
\end{equation*}
with $S_\eps\in \Gamma_b(\pi^*TB \otimes \pi^*TB)$ and $V_\eps \in \mathscr{C}_b^\infty$ bounded uniformly in $\eps$. 
\end{cond}
Note that such an operator always satisfies the conditions of Proposition~\ref{prop:ground}. 
With this definition we can make the heuristic discussion above precise.
\begin{lem}\label{lem:low}
 Let $0<\alpha \leq 2$, $A\in \lbrace H, H_\mathrm{a}, H_\mathrm{eff} \rbrace$, $k\in \N$ and denote by $D^k_\alpha(A)$ the domain of $\eps^{-k\alpha}A^k$ with the graph-norm. If $H_1$ satisfies condition \ref{cond:low} and $\varrho_\alpha$ is given by~\eqref{eq:rho alpha} the following hold true:
\begin{enumerate}
\item $\norm{H_1 P_0}_{\mathscr{L}(D^2_\alpha(A), D(H))}=\mathcal{O}(\eps^{\alpha/2})$,
\item $\norm{[-\eps^2\Delta_h,P_0]P_0 \varrho_\alpha(A)}_{\mathscr{L}(\mathscr{H})}
=\mathcal{O}(\eps^{1+\alpha/2})$,
\item $\norm{(P_\eps-P_0)P_0\varrho_\alpha(A)}_{\mathscr{L}(\mathscr{H},D(H))}=\mathcal{O}(\eps^{1+\alpha/2})$.
\end{enumerate}
\end{lem}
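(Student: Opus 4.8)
The plan is to isolate one \emph{a priori estimate} that drives all three parts, and then to read off $1)$--$3)$ by algebra. The estimate says that on the relevant spectral subspace horizontal derivatives of $P_0$-sections are small: for $A\in\{H,H_\mathrm a,H_\mathrm{eff}\}$ and $j\in\{1,2\}$,
\begin{equation*}
 \norm{(\eps\nabla^B)^{j}P_0\,\varrho_\alpha(A)}_{\mathscr L(\mathscr H)}+\norm{H_F^{1/2}P_0\,\varrho_\alpha(A)}_{\mathscr L(\mathscr H)}=\mathcal O(\eps^{\alpha/2}),
\end{equation*}
and likewise with $\varrho_\alpha(A)$ replaced by the inclusion of $D^2_\alpha(A)$. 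I would prove this first for $A=H$. On $\mathrm{ran}\,\varrho_\alpha(H)$ one has $\langle\psi,H\psi\rangle\leq C\eps^\alpha\norm\psi^2$; on the other hand, \emph{because $H_1$ has the special form of condition~\ref{cond:low}} its zeroth order part carries a factor $\eps^2$, so $-\eps^2\Delta_h+\eps H_1\geq c\,\norm{\eps\nabla_h\,\cdot\,}^2-C\eps^2$ (the integration by parts being done against the boundary-adapted fields $\Phi^*X_i=X_i^*-Y_i$ of the discussion preceding example~\ref{ex:vfields}, with harmless vertical errors), hence $H\geq c\,\norm{\eps\nabla_h\,\cdot\,}^2+H_F-C\eps^2$. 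Since $H_F\geq0$ and $\alpha\leq2$ this gives $\norm{\eps\nabla_h\psi}^2\leq C\eps^\alpha\norm\psi^2$ and $\langle\psi,H_F\psi\rangle\leq C\eps^\alpha\norm\psi^2$; applying $P_0$ and using that $[\Phi^*X_i,P_0]$ (lemma~\ref{lem:P_0}) and $[Y_i,P_0]$ are bounded yields the bound on $\eps\nabla^BP_0\psi$ and $H_F^{1/2}P_0\psi$. For the second-order derivative I would write $\eps^2\Delta_hP_0\psi=-HP_0\psi+\eps H_1P_0\psi+(\lambda_0-\Lambda_0)P_0\psi$, where $\norm{HP_0\psi}=\mathcal O(\eps^{\alpha/2})$ (using $\norm{P_0H\psi}\leq\norm{H\psi}$, $[H,P_0]\in\mathcal A^{2,1}$ and $\psi\in W^2_\eps$ with controlled norm by theorem~\ref{thm:ellipt}), $\norm{\eps H_1P_0\psi}=\mathcal O(\eps)$ by relative boundedness, and $\norm{(\lambda_0-\Lambda_0)P_0\psi}=\mathcal O(\eps^{\alpha/2})$ from the form bound together with $(\lambda_0-\Lambda_0)^2\leq\norm{\lambda_0-\Lambda_0}_\infty(\lambda_0-\Lambda_0)$. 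For $A=H_\mathrm a=P_0HP_0$ the quadratic form agrees with that of $H$ on $\mathrm{ran}\,P_0$, so the same argument applies verbatim; for $A=H_\mathrm{eff}=U_\eps^*P_\eps HP_\eps U_\eps$ one transfers through $U_\eps$, using $P_\eps=P_0+\mathcal O(\eps)$, $U_\eps=\mathbf 1+\mathcal O(\eps)$ in $\mathscr L(D(H))$ and $\norm{[H,P_\eps]\varrho(H)}=\mathcal O(\eps^{N+1})$ (theorem~\ref{thm:proj}), the errors of size $\eps$ being harmless since $\eps\leq\eps^{\alpha/2}$.

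Granting this, $2)$ follows from the identity $[-\eps^2\Delta_h,P_0]P_0=-\eps^2P_0^\perp\Delta_hP_0$ (immediate from $P_0^2=P_0$). Expanding $\Delta_h$ in a frame over $U_\nu\in\mathfrak U$ and writing, for $u\in\mathrm{ran}\,P_0$, $X_i^*u=\nabla^B_iu+\Gamma_iu$ with $\nabla^B_i=P_0X_i^*P_0$ and $\Gamma_i=P_0^\perp X_i^*P_0$ bounded, one finds that the only second-order piece of $\Delta_hu$ lying in $\mathrm{ran}\,P_0$ is $\sum_i\nabla^B_i\nabla^B_iu$, which the left $P_0^\perp$ annihilates. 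Every surviving term is then of one of two shapes: $\eps^2\times(\text{bounded operator})\times P_0$ (from the first-order part of $\Delta_h$, from the double commutators $P_0^\perp[\Phi^*X_i,\Gamma_i]$ and $[\Phi^*X_i,Y_j]$, and using $\Gamma_i^2=0$), or $\eps\times(\text{bounded operator})\times(\eps\nabla^B_i)P_0$. Applying the a priori estimate term by term gives $\mathcal O(\eps^2)+\mathcal O(\eps\cdot\eps^{\alpha/2})=\mathcal O(\eps^{1+\alpha/2})$ since $\eps^2\leq\eps^{1+\alpha/2}$ for $\alpha\leq2$, and summing over the bounded cover as in remark~\ref{rem:normA^k} completes $2)$.

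For $1)$ I would use condition~\ref{cond:low} to split $H_1P_0\psi=-\eps^2\divg_gS_\eps(\ud(P_0\psi),\cdot)+\eps V_\eps P_0\psi$: the potential term is $\mathcal O(\eps)=\mathcal O(\eps^{\alpha/2})$ in $\mathscr H$ and, after one elliptic bootstrap (using $\psi\in D^2_\alpha(A)\subset W^4_\eps$ with controlled norm), also in the graph norm of $H$; since $S_\eps$ is horizontal, the divergence term involves only (at most two) horizontal derivatives of the $\mathcal E$-section $P_0\psi$ and is hence $\mathcal O(\eps^{\alpha/2})$ by the a priori estimate on $D^2_\alpha(A)$, again upgraded to the $D(H)$-norm by theorem~\ref{thm:ellipt}. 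Part $3)$ then follows by combining the previous two with the expansion: by lemma~\ref{lem:expansion} and lemma~\ref{lem:chi}, $(P_\eps-P_0)P_0\varrho_\alpha(A)=\eps P_1P_0\varrho_\alpha(A)+\mathcal O(\eps^2)$, where for $A=H$ one inserts a fixed regular cut-off $\chi$ equal to $1$ on the (eventually tiny) support of $\varrho_\alpha$, and for $A=H_\mathrm a,H_\mathrm{eff}$ one first transfers $\mathrm{ran}\,\varrho_\alpha(A)$ to $\mathrm{ran}\,\varrho_\alpha(H)$ as above. Now $\eps P_1P_0=\eps R_F(\lambda_0)\big(\eps[\Delta_h,P_0]-[H_1,P_0]\big)P_0$ with $R_F(\lambda_0)\in\mathcal A_H^{0,0}$ bounded (corollary~\ref{cor:R_F}): the first piece is $\eps R_F(\lambda_0)\big(\eps[\Delta_h,P_0]P_0\,\varrho_\alpha(A)\big)=\mathcal O(\eps^{1+\alpha/2})$ by $2)$, and the second is $\eps R_F(\lambda_0)\big([H_1,P_0]P_0\,\varrho_\alpha(A)\big)$ with $[H_1,P_0]P_0=H_1P_0-P_0H_1P_0$, each term being $\mathcal O(\eps^{\alpha/2})$ on $\mathrm{ran}\,\varrho_\alpha(A)\subset D^2_\alpha(A)$ by $1)$ (for $P_0H_1P_0$ one also uses $P_0\in\mathscr L(D(H))$), so $\eps\times\mathcal O(\eps^{\alpha/2})=\mathcal O(\eps^{1+\alpha/2})$.

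The main obstacle is the a priori estimate, and within it two points where the hypotheses do real work. First, condition~\ref{cond:low} — rather than merely the lower bound $-C\eps$ of proposition~\ref{prop:ground} — is exactly what makes $H$ and $H_\mathrm a$ coercive in the horizontal directions up to an $\mathcal O(\eps^2)$ error, so that the horizontal Dirichlet form of a state of energy $\mathcal O(\eps^\alpha)$ is of size $\eps^\alpha$ for every $\alpha\leq2$; without this one loses a factor $\eps^{\alpha/2}$ and the improvement collapses. Second, passing from these form bounds to operator-norm bounds in $W^2_\eps$ (and, for the $D(H)$-valued statements, $W^4_\eps$) requires the uniform elliptic regularity of theorem~\ref{thm:ellipt} and careful use of the boundary-adapted fields $\Phi^*X_i$. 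By comparison the manipulations in $1)$--$3)$ are routine, the one genuinely structural input being the cancellation $P_0^\perp(\nabla^B\nabla^B)P_0=0$ that removes the leading second-order term in $2)$.
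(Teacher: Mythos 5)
Your proposal is correct and follows essentially the same route as the paper: everything is reduced to the single a priori estimate that $\eps\nabla^B P_0$ (equivalently $\eps P_0 X^*$) is $\mathcal{O}(\eps^{\alpha/2})$ on the low-energy subspace, part \emph{2)} comes from splitting $P_0^\perp\Delta_h P_0$ into a bounded coefficient times $\nabla^B$ plus $\mathcal{O}(\eps^2)$ remainders, and part \emph{3)} from lemma~\ref{lem:expansion} together with \emph{1)} and \emph{2)}. The only difference is that you supply the form-coercivity argument (using that condition~\ref{cond:low} makes the second-order part of $\eps H_1$ carry a factor $\eps^3$) for the key estimate, which the paper merely summarises as \enquote{elliptic estimates while keeping track of $\eps$} and defers to~\cite{Lam}; your version is consistent with that, including the acknowledged need for theorem~\ref{thm:ellipt} to upgrade the form bounds to the $D(H)$-valued statement.
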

\begin{proof}
 We only sketch the proof here since it uses only standard techniques, a detailed derivation for $A\in \lbrace H, H_\mathrm{a}\rbrace$ can be found in~\cite{Lam}. The statements for $A=H_\mathrm{eff}$ follow from those for $H_\mathrm{a}$ and the fact that $H_\mathrm{eff}=H_\mathrm{a} + \mathcal{O}(\eps^2)$ (c.f.~\eqref{eq:Hsa}).
 
 The basis is to prove that for every $X \in \Gamma_b(TB)$ we have
 \begin{equation*}
\norm{\eps 
P_0 X^*}_{\mathscr{L}(D^2_\alpha(A), D(H))}=\mathcal{O}(\eps^{\alpha/2})\,,  
 \end{equation*}
which follows by showing elliptic estimates while keeping track of $\eps$ 
(see~\cite[Appendix C]{Lam}). The first statement then follows immediately from 
Condition~\ref{cond:low}. The second claim follows by writing
 \begin{align*}
  [\eps^2\Delta_h, P_0]P_0&=\eps^2\tr_{NF}[\nabla^2, P_0]P_0 - \eps^2[\eta, P_0]P_0\\
  &=\begin{aligned}[t]
  &2\eps^2\tr_{g_B}\Big([(\cdot)^* - g_B(\pi_*\eta, \cdot),P_0 ] \nabla^B_{\displaystyle\cdot}\Big)\\
  &+\underbrace{P_0^\perp\Big(\eps^2\tr_{g_B}\Big( \big[(\cdot)^*, [(\cdot)^*, P_0]\big] - [(\nabla_{\displaystyle\cdot} \cdot)^*, P_0]\Big) - [\eps^2\eta, P_0]\Big)P_0}_{\in \mathcal{A}^{0,2}}
    \end{aligned}
    \end{align*}
and applying elliptic estimates for $\nabla^B$ to the first term. The last claim follows from the second one and Lemma~\ref{lem:expansion}.
\end{proof}
We now prove several propositions that give the precise results on the low energy regime (note that we have subtracted $\Lambda_0$ from $H$). In particular they imply the statements made in Section~\ref{sect:thm}. Proposition~\ref{prop:low spec} contains general information on the spectrum of $H$ in relation to that of $H_\mathrm{a}$, Proposition~\ref{prop:low eigen} strengthens this statement for small eigenvalues of these operators and Proposition~\ref{prop:eigenfct} shows approximation for the eigenfunctions of simple eigenvalues.
This last proposition is the starting point for the investigation of nodal sets, conducted in~\cite{LaNod} and~\cite[Chapter 3]{Lam}.

\begin{prop}\label{prop:low spec}
Let $0<\alpha \leq 2$ and assume $H_1$ satisfies Condition \ref{cond:low}. Then for every $C>0$
\begin{equation*}
\dist\big(\sigma(H)\cap (-\infty, C\eps^\alpha],\sigma(H_\mathrm{a})\cap (-\infty,  C\eps^\alpha]\big)=\mathcal{O}(\eps^{2+\alpha/2})\,.
\end{equation*}
\end{prop}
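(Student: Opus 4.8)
The plan is to interpolate between $H$ and $H_\mathrm{a}$ through the effective operator $H_\mathrm{eff}$: compare $\sigma(H)$ with $\sigma(H_\mathrm{eff})$ on the window $(-\infty,C\eps^\alpha]$ by the quasimode arguments already used for Corollary~\ref{cor:spectrum} and Proposition~\ref{prop:ground}, then compare $\sigma(H_\mathrm{eff})$ with $\sigma(H_\mathrm{a})$ on the same window using that $H_\mathrm{eff}-H_\mathrm{a}$ is genuinely small there, and finally combine the two estimates by the triangle inequality for the Hausdorff distance. Throughout I would fix $P_\eps$ and $U_\eps$ for $\lambda_0$ with $N\ge3$ and some $\Lambda>0$; since $0<\alpha\le2$ one has $\eps^{N+1}\le\eps^4\le\eps^3\le\eps^{2+\alpha/2}$, so errors of order $\eps^{N+1}$ or $\eps^3$ are negligible against the claimed bound.

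For the first comparison: as $C\eps^\alpha<\Lambda_1$ for $\eps$ small, and $H_1$ of the form of Condition~\ref{cond:low} satisfies the hypotheses of Proposition~\ref{prop:ground}, Lemma~\ref{lem:ground} gives $\norm{P_\eps^\perp\chi(H)}_{\mathscr{L}(\mathscr{H},D(H))}=\mathcal{O}(\eps^{N+1})$ for a regular cut-off $\chi$ with $\supp\chi\subset(-\infty,\Lambda_1)$. Given $\mu\in\sigma(H_\mathrm{eff})$ with $\mu\le C\eps^\alpha$, take a Weyl sequence $(\psi_k)\subset L^2(\mathcal{E})$ in the range of $1_{(-\infty,C'\eps^\alpha]}(H_\mathrm{eff})$ with $C'>C$; then $U_\eps\psi_k\in P_\eps\mathscr{H}$ and $\norm{(H-\mu)U_\eps\psi_k}\le\norm{(H_\mathrm{eff}-\mu)\psi_k}+\norm{P_\eps^\perp HP_\eps U_\eps\psi_k}$, whose second term is $\mathcal{O}(\eps^{N+1})$ exactly as in the proof of Corollary~\ref{cor:spectrum}, so $\dist(\mu,\sigma(H))=\mathcal{O}(\eps^{N+1})$. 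Conversely, given $\mu\in\sigma(H)$ with $\mu\le C\eps^\alpha$ and a Weyl sequence $(\varphi_k)$ in the range of $1_{(-\infty,C'\eps^\alpha]}(H)$, the estimate on $P_\eps^\perp\chi(H)$ shows that $\tilde\psi_k:=U_\eps^*P_\eps\varphi_k$ has norm $1-\mathcal{O}(\eps^{N+1})$ and is a quasimode for $H_\mathrm{eff}$, since $U_\eps^*P_\eps(H-\mu)\varphi_k=(H_\mathrm{eff}-\mu)\tilde\psi_k+\mathcal{O}(\eps^{N+1})$; hence $\dist(\mu,\sigma(H_\mathrm{eff}))=\mathcal{O}(\eps^{N+1})$. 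Thus the Hausdorff distance of $\sigma(H)\cap(-\infty,C\eps^\alpha]$ and $\sigma(H_\mathrm{eff})\cap(-\infty,C\eps^\alpha]$ is $\mathcal{O}(\eps^{N+1})$.

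For the second comparison, $H_\mathrm{eff}$ and $H_\mathrm{a}$ both act on $L^2(\mathcal{E})$, and by the expansion of Section~\ref{sect:thm}, for a regular cut-off $\chi$ with $\supp\chi\subset(-\infty,C'\eps^\alpha]$,
\[ H_\mathrm{eff}\chi(H_\mathrm{eff})-\chi(H_\mathrm{eff})(H_\mathrm{a}+\mathcal{M})\chi(H_\mathrm{eff})=\mathcal{O}(\eps^3)=\mathcal{O}(\eps^{2+\alpha/2}). \]
It then remains to bound the fourth-order correction $\mathcal{M}=P_0[H,P_0]R_F(\lambda_0)\,[H,P_0]P_0$ on the low-energy window. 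Writing $[H,P_0]P_0=[-\eps^2\Delta_h,P_0]P_0+\eps[H_1,P_0]P_0$, Lemma~\ref{lem:low}(1)--(2) (which uses Condition~\ref{cond:low}) gives $\norm{[H,P_0]P_0\,\varrho_\alpha(A)}_{\mathscr{L}(\mathscr{H})}=\mathcal{O}(\eps^{1+\alpha/2})$ for $A\in\{H,H_\mathrm{a},H_\mathrm{eff}\}$, i.e.\ the copy of $[H,P_0]P_0$ adjacent to the cut-off gains one factor $\eps^{\alpha/2}$ over its generic size $\eps$. Since $R_F(\lambda_0)\in\mathcal{A}^{0,0}_H$ and the remaining $P_0[H,P_0]$ contributes the generic order $\eps$ (after commuting its single horizontal derivative to act fibre-wise, as in Lemma~\ref{lem:commAlg}), one obtains $\norm{\mathcal{M}\chi(H_\mathrm{eff})}=\norm{\mathcal{M}\chi(H_\mathrm{a})}=\mathcal{O}(\eps^{2+\alpha/2})$, so $H_\mathrm{eff}-H_\mathrm{a}$ is of order $\eps^{2+\alpha/2}$ on the respective spectral subspaces below $C\eps^\alpha$. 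Repeating the Weyl-sequence argument with $(H_\mathrm{eff},H_\mathrm{a})$ in place of $(H,H_\mathrm{eff})$ — now the two operators being genuinely close on the relevant subspace, localisation being handled by Lemma~\ref{lem:chi} — gives $\dist\big(\sigma(H_\mathrm{eff})\cap(-\infty,C\eps^\alpha],\sigma(H_\mathrm{a})\cap(-\infty,C\eps^\alpha]\big)=\mathcal{O}(\eps^{2+\alpha/2})$, and the triangle inequality completes the proof.

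The main obstacle is the estimate $\norm{\mathcal{M}\chi(\cdot)}=\mathcal{O}(\eps^{2+\alpha/2})$, and more generally the uniform control of $H_\mathrm{eff}-H_\mathrm{a}$ over an $\eps$-dependent energy window: because $\mathcal{M}$ is a fourth-order differential operator this is available only after the energy cut-offs (equivalently, after rescaling $H\mapsto\eps^{-\alpha}H$ and applying Lemma~\ref{lem:chi}), and the precise gain $\eps^{\alpha/2}$ rather than $\eps^0$ is exactly the content of Lemma~\ref{lem:low}, which rests on bookkeeping of the horizontal derivatives acting to the right together with the special form of $H_1$. A secondary point is compatibility of the low-energy localisation with conjugation by $U_\eps$ and with passing $\chi$ through $P_\eps$, again via Lemma~\ref{lem:chi}. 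The fully detailed versions of these estimates for $A\in\{H,H_\mathrm{a}\}$ are carried out in~\cite{Lam}.
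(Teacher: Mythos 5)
Your architecture is the paper's: pass through $H_\mathrm{eff}$, control the $H$--$H_\mathrm{eff}$ leg by Proposition~\ref{prop:ground} (which, with $N\geq 3$, already gives $\mathcal{O}(\eps^4)$ for the Hausdorff distance below $\Lambda_1$, so your re-run of the Weyl-sequence argument is not needed), and reduce everything to the one-sided estimate $\norm{(H_\mathrm{eff}-H_\mathrm{a})\varrho_\alpha(A)}=\mathcal{O}(\eps^{2+\alpha/2})$ for $A\in\lbrace H_\mathrm{a},H_\mathrm{eff}\rbrace$, which then feeds a Weyl-sequence comparison as in Corollary~\ref{cor:spectrum}. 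Where you diverge is in how that estimate is produced. The paper expands $P_\eps U_\eps$ directly from the Sz.-Nagy formula~\eqref{eq:Udef}, writing $P_\eps U_\eps = P_0 + \eps U_1 + \eps^2 U_2 + \mathcal{O}(\eps^3)$ with $\eps U_1=P_0^\perp(P_\eps-P_0)P_0$ and $\eps^2 U_2=-\tfrac12 P_0(P_\eps-P_0)^2P_0$, and then reads off $H_\mathrm{eff}-H_\mathrm{a}$ as the short explicit sum~\eqref{eq:Hsa}, every term of which ends (or, after taking adjoints, begins) in $[H,P_0]P_0$ or $(P_\eps-P_0)P_0$ adjacent to the cut-off, so that Lemma~\ref{lem:low} applies term by term. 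You instead route through the $H_\mathrm{a}+\mathcal{M}$ expansion of Section~\ref{sect:thm} and bound $\mathcal{M}$ on the low-energy window; your counting for $\mathcal{M}$ ($\eps$ from the left factor $P_0[H,P_0]$, $\eps^{1+\alpha/2}$ from $[H,P_0]P_0\varrho_\alpha$ via Lemma~\ref{lem:low}) is the same bookkeeping the paper uses for the terms of~\eqref{eq:Hsa}.

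Two caveats about your route. First, the $\mathcal{O}(\eps^3)$ accuracy of the $H_\mathrm{a}+\mathcal{M}$ expansion is stated only for a fixed regular cut-off $\chi\in\mathscr{C}^\infty_0\big((-\infty,\Lambda]\big)$, with proof deferred to~\cite{Lam}; you apply it with $\supp\chi\subset(-\infty,C'\eps^\alpha]$, an $\eps$-dependent cut-off whose derivatives grow like $\eps^{-\alpha}$, so the uniformity of the error constant is not automatic. Second, that expansion is sandwiched between cut-offs in $H_\mathrm{eff}$ on both sides, whereas the quasimode argument needs a one-sided bound against $\varrho_\alpha(H_\mathrm{a})$ as well; converting one into the other costs commutators such as $[\chi(H_\mathrm{eff}),H_\mathrm{a}]$, which must themselves be shown to carry the extra $\eps^{\alpha/2}$. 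Both points are repairable with the machinery of Lemma~\ref{lem:chi} and Lemma~\ref{lem:low}, but they are precisely the frictions the paper avoids by expanding $H_\mathrm{eff}-H_\mathrm{a}$ directly rather than going through the cut-off expansion of $H_\mathrm{eff}$.
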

\begin{proof}
Since we have chosen $N\geq 3$, Proposition~\ref{prop:ground} tells us that 
\begin{equation*}
  \dist\big(\sigma(H)\cap (-\infty, C\eps^\alpha],\sigma(H_\mathrm{eff})\cap (-\infty,  C\eps^\alpha]\big)=\mathcal{O}(\eps^{4})
\end{equation*}
 if $C\eps^\alpha<\Lambda_1$. Thus we only need to show closeness of the spectra of $H_\mathrm{a}$ and $H_\mathrm{eff}$. We can do this by expanding $H_\mathrm{eff} - H_\mathrm{a}$ on the image of $\varrho_\alpha(A)$ with $A\in \lbrace H_\mathrm{a}, H_\mathrm{eff} \rbrace$, which amounts to expanding $P_\eps U_\eps -P_0$. First we may note that $(P_0-P_\eps)^2$ commutes with both $P_0$ and $P_\eps$, since e.g. $P_0(P_\eps -P_0)^2= - P_0(P_\eps- P_0)P_0$. Thus by Definition~\eqref{eq:Udef} we have
\begin{equation*}
P_\eps U_\eps=P_\eps P_0 + \tfrac12 P_\eps P_0(P_\eps-P_0)^2 + \mathcal{O}(\eps^4)\,, 
\end{equation*}
 where the error is estimated in $\mathscr{L}(D(H))$. Inserting $P_\eps=P_0 + (P_\eps-P_0)$ gives
 \begin{equation*}
 P_\eps U_\eps = P_0 + \underbrace{P_0^\perp(P_\eps - P_0)P_0}_{=:\eps U_1} 
\underbrace{-\tfrac12 P_0 (P_\eps-P_0)^2P_0}_{=:\eps^2 U_2}
+\mathcal{O}(\eps^3)\,.
\end{equation*}
Thus we have
\begin{align}
\hspace*{-.3cm}H_\mathrm{eff} - H_\mathrm{a}
&= \eps\left(U^*_1 H P_0 + P_0 H U_1 \right)
+ \eps^2\left( U_1^* H U_1 + P_0 H U_2 + U_2 H P_0\right)+ \mathcal{O}(\eps^3)
\notag\\ 
&=\begin{aligned}[t]
   &\eps P_0(P_\eps -P_0)[H, P_0]P_0 + \eps^2 U_2 P_0 H P_0\\
   &+\eps P_0[P_0, H] (P_\eps -P_0)P_0 + \eps^2\left( U_1^* H U_1 + P_0 H U_2 \right)
   + \mathcal{O}(\eps^3)\,,
\end{aligned}\label{eq:Hsa}
\end{align}
 with an error in $\mathscr{L}(D(H_\mathrm{eff}), \mathscr{H})$. Now on the image of $\varrho_\alpha(A)$, the terms of the first line are of order $\eps^{2+\alpha/2}$ by part \textit{1.} and \textit{2.} of Lemma~\ref{lem:low} and the fact that $H_\mathrm{a} \varrho_\alpha(A)=\mathcal{O}(\eps^\alpha)$. The terms of the second line are of order $\eps^{2+\alpha/2}$ by part \textit{3.} of Lemma~\ref{lem:low}.
 This shows 
 \begin{equation*}
 \dist\big(\sigma(H_\mathrm{eff})\cap (-\infty, C\eps^\alpha],\sigma(H_\mathrm{a})\cap (-\infty,  C\eps^\alpha]\big)=\mathcal{O}(\eps^{2+\alpha/2})
\end{equation*}
by the Weyl sequence argument of Corollary~\ref{cor:spectrum}, and concludes the proof.
\end{proof}
\begin{prop}\label{prop:low eigen}
 Let $0<\alpha \leq 2$ and assume $H_1$ satisfies condition~\ref{cond:low}. If there are positive constants $C$, $\delta$ and $\eps_0$ such that
$\sigma(H_\mathrm{a}) \cap \big(-\infty, C\eps^\alpha\big)$
consists of $K+1$ eigenvalues $\mu_0\leq \dots \leq\mu_{K}$ (repeated according to multiplicity) and $\mathrm{rank}\big(1_{(-\infty, (C+\delta)\eps^\alpha)}(H_\mathrm{a})\big)<\infty$ for all $\eps<\eps_0$, 
then $H$ has $K+1$ eigenvalues $\lambda_0\leq  \dots \leq \lambda_{K}$ below the essential spectrum and
\begin{equation*}
\abs{\lambda_j -\mu_j}=\mathcal{O}(\eps^{2+\alpha}) 
\end{equation*}
for all $j\leq K$.
\end{prop}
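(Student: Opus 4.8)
The plan is to pass from $H$ to the effective operator $H_\mathrm{eff}$ and then compare $H_\mathrm{eff}$ with $H_\mathrm{a}$ directly on $L^2(\mathcal{E})$ by a min--max argument. Since $N\geq 3$, proposition~\ref{prop:ground} applies with a regular cut-off $\chi$ that equals one on a neighbourhood of $\big[\inf\sigma(H),(C+\delta)\eps^\alpha\big]$ and is supported in $(-\infty,\Lambda_1)$ --- admissible for $\eps$ small, since $(C+\delta)\eps^\alpha\to 0$ while $\Lambda_1-\Lambda_0$ is a fixed positive number --- and its proof gives $\norm{\chi(H)-U_\eps\chi(H_\mathrm{eff})U_\eps^*}=\mathcal{O}(\eps^{N+1})=\mathcal{O}(\eps^4)$. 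Hence $H$ and $H_\mathrm{eff}$ have spectral subspaces of equal finite rank below $(C+\delta)\eps^\alpha$, with corresponding eigenvalues $\mathcal{O}(\eps^4)=\mathcal{O}(\eps^{2+\alpha})$ apart (here $\alpha\leq 2$ enters); together with proposition~\ref{prop:low spec}, the finite-rank hypothesis on $H_\mathrm{a}$, and the bound~\eqref{eq:lowerbound} (which keeps the essential spectra above the window), this reduces the statement to showing that $H_\mathrm{eff}$ has eigenvalues $\nu_0\leq\dots\leq\nu_K$ below its essential spectrum with $\abs{\nu_j-\mu_j}=\mathcal{O}(\eps^{2+\alpha})$.

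The key ingredient is the sharpened estimate
\begin{equation*}
 \norm{\varrho(H_\bullet)\,(H_\mathrm{eff}-H_\mathrm{a})\,\varrho(H_\bullet)}_{\mathscr{L}(\mathscr{H})}=\mathcal{O}(\eps^{2+\alpha})\,,\qquad H_\bullet\in\{H_\mathrm{a},H_\mathrm{eff}\}\,,
\end{equation*}
for the cut-off $\varrho:=1_{(-\infty,(C+\delta/2)\eps^\alpha]}$ --- a gain of $\eps^{\alpha/2}$ over the bound $\norm{(H_\mathrm{eff}-H_\mathrm{a})\varrho(H_\bullet)}=\mathcal{O}(\eps^{2+\alpha/2})$ underlying proposition~\ref{prop:low spec}. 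Set $V_\eps:=P_\eps U_\eps-P_0$; using $H_\mathrm{eff}=(P_\eps U_\eps)^*H(P_\eps U_\eps)$ (as $P_\eps^*=P_\eps$) one gets the exact identity $H_\mathrm{eff}-H_\mathrm{a}=P_0HV_\eps+V_\eps^*HP_0+V_\eps^*HV_\eps$. For $\phi=\varrho(H_\bullet)\psi\in P_0\mathscr{H}$ one has $V_\eps\phi=(U_\eps-1)\phi$ (since $U_\eps$ maps $P_0\mathscr{H}$ into $P_\eps\mathscr{H}$), and the Sz.-Nagy formula~\eqref{eq:Udef} expands this as $(P_\eps-P_0)P_0\phi$ plus a remainder of order $\eps^2$ (of order $\eps^{2+\alpha/2}$ on $\varrho(H_\bullet)$); thus lemma~\ref{lem:low}(3) --- which holds also with $A=H_\mathrm{eff}$ --- gives $\norm{V_\eps\phi}_{D(H)}=\mathcal{O}(\eps^{1+\alpha/2})\norm{\psi}$, whereas the $P_0$-diagonal block is quadratically smaller, since $P_0(P_\eps-P_0)P_0=-P_0(P_\eps-P_0)^2P_0$ forces $\norm{P_0V_\eps\phi}_{D(H)}=\mathcal{O}(\eps^{2+\alpha/2})\norm{\psi}$. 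As $H$ is self-adjoint and $P_0\phi=\phi$, the form $\langle\phi,(H_\mathrm{eff}-H_\mathrm{a})\phi\rangle$ is a sum of pairings $\langle H\phi,V_\eps\phi\rangle$, $\langle V_\eps\phi,H\phi\rangle$, $\langle V_\eps\phi,HV_\eps\phi\rangle$; writing $H\phi=H_\mathrm{a}\phi+[H,P_0]P_0\phi$ and using $\norm{[H,P_0]P_0\varrho(H_\bullet)}=\mathcal{O}(\eps^{1+\alpha/2})$ (lemma~\ref{lem:low}(1),(2)), $\norm{H_\mathrm{a}\varrho(H_\bullet)}=\mathcal{O}(\eps^\alpha)$, the orthogonality $P_0\mathscr{H}\perp P_0^\perp\mathscr{H}$ (so that $H_\mathrm{a}\phi$ meets only the diagonal part $P_0V_\eps\phi$), and $\norm{HV_\eps\phi}\leq\norm{V_\eps\phi}_{D(H)}$, each term is $\mathcal{O}(\eps^{2+\alpha})\norm{\psi}^2$.

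Granting this, the comparison of $H_\mathrm{eff}$ and $H_\mathrm{a}$ on $L^2(\mathcal{E})$ is a plain min--max argument: the eigenfunctions of either operator below its essential spectrum are smooth sections of $\mathcal{E}$, hence lie in $D(H_\mathrm{a})\cap D(H_\mathrm{eff})$ and, being low-energy, are fixed by $\varrho(H_\mathrm{a})$ (respectively $\varrho(H_\mathrm{eff})$). Using the span of the first $j+1$ eigenfunctions of $H_\mathrm{a}$ as a trial space for $H_\mathrm{eff}$, we get for $\psi$ in that span
\begin{equation*}
 \langle\psi,H_\mathrm{eff}\psi\rangle=\langle\psi,H_\mathrm{a}\psi\rangle+\langle\varrho(H_\mathrm{a})\psi,(H_\mathrm{eff}-H_\mathrm{a})\varrho(H_\mathrm{a})\psi\rangle\leq\mu_j\norm{\psi}^2+\mathcal{O}(\eps^{2+\alpha})\norm{\psi}^2\,,
\end{equation*}
so $\nu_j\leq\mu_j+\mathcal{O}(\eps^{2+\alpha})$, and the symmetric choice gives $\mu_j\leq\nu_j+\mathcal{O}(\eps^{2+\alpha})$. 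Combining $\abs{\nu_j-\mu_j}=\mathcal{O}(\eps^{2+\alpha})$ with $\abs{\lambda_j-\nu_j}=\mathcal{O}(\eps^4)$ from the first step yields $\abs{\lambda_j-\mu_j}=\mathcal{O}(\eps^{2+\alpha})$.

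The hard part is the extra half-power $\eps^{\alpha/2}$ in the key estimate: proposition~\ref{prop:low spec} only needs $H_\mathrm{eff}-H_\mathrm{a}$ controlled on the low-energy subspace from one side, whereas here a second half-power has to be extracted from the cut-off on the other side. This is why one cannot merely invoke the truncated expansion~\eqref{eq:Hsa} (whose remainder $\mathcal{O}(\eps^3)$ exceeds $\eps^{2+\alpha}$ once $\alpha>1$) but works with the exact formula for $H_\mathrm{eff}-H_\mathrm{a}$, uses that the $P_0$-diagonal block of $V_\eps$ is of order $\eps^{2+\alpha/2}$, and balances the $(P_\eps-P_0)$-, $[H,P_0]$- and $H_\mathrm{a}$-factors between the two cut-offs. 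The remaining bookkeeping --- that the trial spaces stay $(j+1)$-dimensional, that $D(H_\mathrm{a})$ and $D(H_\mathrm{eff})$ agree up to $\mathcal{O}(\eps)$, and the localisation of $\sigma_\mathrm{ess}(H)$ and $\sigma_\mathrm{ess}(H_\mathrm{eff})$ above $(C+\delta/2)\eps^\alpha$ --- is routine given theorem~\ref{thm:proj}, proposition~\ref{prop:ground} and~\eqref{eq:lowerbound}.
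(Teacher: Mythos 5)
Your proposal is correct and follows essentially the same route as the paper: reduce the claim to $H_\mathrm{eff}$ via proposition~\ref{prop:ground}, establish the two-sided bound $\abs{\langle\psi,(H_\mathrm{a}-H_\mathrm{eff})\psi\rangle}=\mathcal{O}(\eps^{2+\alpha})$ on the range of the low-energy cut-off by letting it contribute a factor $\eps^{\alpha/2}$ on each side, and conclude by the max--min principle applied in both directions. The only point where you diverge is in how this quadratic-form estimate is obtained: the paper cites the truncated expansion~\eqref{eq:Hsa} together with lemma~\ref{lem:low}, while you work from the exact identity $H_\mathrm{eff}-H_\mathrm{a}=P_0HV_\eps+V_\eps^*HP_0+V_\eps^*HV_\eps$ and track the remainder explicitly; your remark that the bare $\mathcal{O}(\eps^3)$ remainder of~\eqref{eq:Hsa} does not suffice once $\alpha>1$ is accurate, and your version supplies the bookkeeping (the extra $\eps^{\alpha/2}$ gained by the remainder terms under the second cut-off) that the paper leaves implicit.
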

\begin{proof}
Because of the unitary equivalence up to $\mathcal{O}(\eps^4)$ shown in Proposition~\ref{prop:ground} it is enough to prove the claim for $H_\mathrm{eff}$ instead of $H$. It follows from~\eqref{eq:Hsa} and Lemma~\ref{lem:low} that for normalised $\psi$ in the image of $\varrho_\alpha(A)$ with $A\in \lbrace H_\mathrm{a}, H_\mathrm{eff} \rbrace$
\begin{equation}\label{eq:Hsa quad}
 \abs{\langle \psi, (H_\mathrm{a} - H_\mathrm{eff}) \psi \rangle}=\mathcal{O}(\eps^{2+\alpha})\,,
\end{equation}
because in the quadratic form $\varrho_\alpha(A)$ acts both from left and right onto ${H_\mathrm{a}-H_\mathrm{eff}}$.
Now if $\mathrm{rank}\big(1_{(-\infty, (C+\delta/2)\eps^\alpha)}(H_\mathrm{eff})\big)$ were infinite, then~\eqref{eq:Hsa quad} would imply that the subspace where $\langle \psi, H_\mathrm{a} \psi\rangle\leq (C+\delta)\eps^\alpha \norm{\psi}^2$ had infinite dimension, in contradiction to the hypothesis. Consequently
$\sigma(H_\mathrm{eff})\cap (-\infty, (C+\delta/2)\eps^\alpha)$ consists of finitely degenerate eigenvalues $\tilde \lambda_0\leq  \cdots$.
The eigenvalues of $H_\mathrm{a}$, and also those of $H_\mathrm{eff}$, are characterised by the min--max principle
\begin{equation*}
\mu_j= \min_{W_j} \max \lbrace \langle \psi, H_\mathrm{a} \psi\rangle : \psi\in W_j, \norm{\psi}_{L^2(\mathcal{E})}=1 \rbrace\,,
\end{equation*}
where $W_j$ runs over the $(j+1)$-dimensional subspaces of $D(H_\mathrm{a})=D(H_\mathrm{eff})$.
Choosing $W_j \subset \bigoplus_{k=0}^{j} \ker(H_\mathrm{a} - \mu_k)$ gives
\begin{align*}
\tilde \lambda_j
&\stackrel{\eqref{eq:Hsa quad}}{\leq}
\max \lbrace \langle \psi, H_\mathrm{a} \psi\rangle : \psi\in W_j, \norm{\psi}_{L^2(\mathcal{E})}=1 \rbrace + \mathcal{O}(\eps^{2+\alpha})\\
&\stackrel{\hphantom{(19)}}{\leq}\mu_j  + \mathcal{O}(\eps^{2+\alpha})\,.
\end{align*}
This shows that $H_\mathrm{eff}$ has $K+1$ eigenvalues below $(C + \delta/2)\eps^\alpha$. We can thus repeat the argument with reversed roles of $H_\mathrm{a}$ and $H_\mathrm{eff}$ to obtain
\begin{equation*}
 \mu_j\leq \tilde \lambda_j + \mathcal{O}(\eps^{2+\alpha})\,,
\end{equation*}
which proves the claim.
\end{proof}
\begin{prop}\label{prop:eigenfct}
Assume the conditions of Proposition~\ref{prop:low eigen} are satisfied and additionally that $\mu \in \lbrace \mu_0, \dots , \mu_K \rbrace$ is a simple eigenvalue for which there exists $C_\mu>0$ such that $\dist(\mu,\sigma(H_\mathrm{a})\setminus \lbrace \mu \rbrace) \geq C_\mu \eps^\alpha$. Then the eigenvalue $\lambda \in \sigma(H)$ corresponding to $\mu$ is simple and $\dist(\lambda,\sigma(H)\setminus \lbrace \lambda \rbrace) \geq C_\lambda \eps^\alpha>0$.
Moreover if $\psi \in \ker(H_\mathrm{a}-\mu)$ is normalised and $P_\lambda$ denotes the orthogonal projection to $\ker(H-\lambda)$ then 
\begin{equation*}
 \norm{(1-P_\lambda)\psi}_{D(H)}=\mathcal{O}(\eps^{\beta})
\end{equation*}
with $\beta=\min\lbrace 1+ \tfrac12 \alpha, 2- \tfrac12 \alpha \rbrace$ and
\begin{equation*}
\norm{(1-P_\lambda)\psi}_{W^1_{\eps=1}}=\mathcal{O}(\eps^{\alpha/2})\,.
\end{equation*}
\end{prop}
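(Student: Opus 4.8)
The plan is to deduce the spectral statements almost directly from proposition~\ref{prop:low eigen}, and then to prove the two norm estimates on $(1-P_\lambda)\psi$ by routing through the effective operator.

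For the statements on $\lambda$, proposition~\ref{prop:low eigen} already gives $K+1$ eigenvalues $\lambda_0\leq\dots\leq\lambda_K$ of $H$ below the essential spectrum with $|\lambda_j-\mu_j|=\mathcal{O}(\eps^{2+\alpha})$; writing $\mu=\mu_j$ and $\lambda=\lambda_j$, the hypothesis $\dist(\mu,\sigma(H_\mathrm{a})\setminus\{\mu\})\geq C_\mu\eps^\alpha$ together with $2+\alpha>\alpha$ forces, for $\eps$ small, $\lambda_j$ to be simple and separated from $\lambda_{j\pm1}$ by at least $\tfrac12 C_\mu\eps^\alpha$; the remainder of $\sigma(H)$ — further discrete eigenvalues and the essential spectrum, which by the rank hypothesis on $1_{(-\infty,(C+\delta)\eps^\alpha)}(H_\mathrm{a})$ and the unitary equivalence of proposition~\ref{prop:ground} lies above $(C+\delta)\eps^\alpha-\mathcal{O}(\eps^{2+\alpha})$ — is also at distance $\gtrsim\eps^\alpha$. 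This yields $C_\lambda>0$.

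For the first eigenfunction estimate, fix $\varrho_\alpha=1_{(-\infty,C'\eps^\alpha]}$ with $C'$ large enough that $\mu$ and the corresponding eigenvalue $\tilde\lambda$ of $H_\mathrm{eff}$ (which, via proposition~\ref{prop:ground}, is simple with gap $\gtrsim\eps^\alpha$ by proposition~\ref{prop:low eigen}) both lie in $\supp\varrho_\alpha$, and interpose a normalised $\tilde\psi\in\ker(H_\mathrm{eff}-\tilde\lambda)$ and $U_\eps\tilde\psi\in\mathrm{ran}\,P_\eps\cap\ker(P_\eps HP_\eps-\tilde\lambda)$, estimating the three differences in
\begin{equation*}
(1-P_\lambda)\psi=(1-P_\lambda)(\psi-\tilde\psi)+(1-P_\lambda)(\tilde\psi-U_\eps\tilde\psi)+(1-P_\lambda)U_\eps\tilde\psi
\end{equation*}
separately. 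Since $\psi=\varrho_\alpha(H_\mathrm{a})\psi$, \eqref{eq:Hsa} and lemma~\ref{lem:low} give $(H_\mathrm{eff}-\mu)\psi=(H_\mathrm{eff}-H_\mathrm{a})\varrho_\alpha(H_\mathrm{a})\psi=\mathcal{O}(\eps^{2+\alpha/2})$; dividing by the gap of $H_\mathrm{eff}$ near $\tilde\lambda$, using $H_\mathrm{eff}(1-\tilde P)\psi=(H_\mathrm{eff}-\tilde\lambda)\psi+\tilde\lambda(1-\tilde P)\psi$ and $|\mu-\tilde\lambda|=\mathcal{O}(\eps^{2+\alpha})$, and passing to a normalised eigenfunction, this yields $\|\psi-\tilde\psi\|_{D(H)}=\mathcal{O}(\eps^{2-\alpha/2})$ (the graph norms of $H$ and $H_\mathrm{eff}$ being equivalent on $L^2(\mathcal{E})\cap D(H)$ up to uniform constants). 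Next, $\tilde\psi-U_\eps\tilde\psi=-(P_\eps U_\eps-P_0)\tilde\psi$, and the expansion $P_\eps U_\eps-P_0=P_0^\perp(P_\eps-P_0)P_0-\tfrac12 P_0(P_\eps-P_0)^2P_0+\mathcal{O}(\eps^3)$ from the proof of proposition~\ref{prop:low spec}, together with $\tilde\psi=\varrho_\alpha(H_\mathrm{eff})\tilde\psi$ and part~3 of lemma~\ref{lem:low}, bounds this by $\mathcal{O}(\eps^{1+\alpha/2})$ in $D(H)$. Finally $(H-\tilde\lambda)U_\eps\tilde\psi=P_\eps^\perp[H,P_\eps]U_\eps\tilde\psi$, and since $U_\eps\tilde\psi=\chi(H)U_\eps\tilde\psi+\mathcal{O}(\eps^{N+1})$ for a fixed regular cut-off $\chi$ equal to one near $0$ (because $U_\eps\chi(H_\mathrm{eff})\tilde\psi=U_\eps\tilde\psi$, by proposition~\ref{prop:ground} and lemma~\ref{lem:chi}), theorem~\ref{thm:proj} gives $(H-\tilde\lambda)U_\eps\tilde\psi=\mathcal{O}(\eps^{N+1})$, hence $\|(1-P_\lambda)U_\eps\tilde\psi\|_{D(H)}=\mathcal{O}(\eps^{N+1-\alpha})$. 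Adding the three contributions and using $N\geq3$ gives $\|(1-P_\lambda)\psi\|_{D(H)}=\mathcal{O}(\eps^{\min\{2-\alpha/2,\,1+\alpha/2\}})=\mathcal{O}(\eps^\beta)$.

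For the $W^1_{\eps=1}$-bound I would argue by the quadratic form. Set $\phi=(1-P_\lambda)\psi$. Since $[H,P_\lambda]=0$ and $(H-\lambda)P_\lambda=0$ one has the identity $\langle\phi,(H-\lambda)\phi\rangle=\langle\psi,(H-\lambda)\psi\rangle=\mu-\lambda=\mathcal{O}(\eps^{2+\alpha})$; using $|\lambda|=\mathcal{O}(\eps^\alpha)$ (the lower bound $\mu\geq-C\eps^\alpha$ follows from condition~\ref{cond:low} and the elliptic estimate $\eps^2\|\nabla^B\psi\|^2=\mathcal{O}(\eps^\alpha)$ underlying lemma~\ref{lem:low}) together with $\|\phi\|^2=\mathcal{O}(\eps^{2\beta})$ from the first estimate and $\alpha+2\beta\geq2+\alpha$, one gets $\langle\phi,H\phi\rangle=(\mu-\lambda)+\lambda\|\phi\|^2=\mathcal{O}(\eps^{2+\alpha})$. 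Because $\phi$ satisfies the Dirichlet condition (both $\psi$ and $P_\lambda\psi$ do), integration by parts in $H=H_F-\eps^2\Delta_h+\eps H_1$, combined with $H_F\geq0$, condition~\ref{cond:low} (which makes $-\eps^2\Delta_h+\eps H_1$ bounded below by $-C\eps$ and contributes the $\eps H_1$-term only at order $\eps^2$), and the fact that the mean-curvature part of $\Delta_h$ is first order, gives $\eps^2\|d_h\phi\|^2\lesssim\langle\phi,H\phi\rangle+\eps^2\|\phi\|^2=\mathcal{O}(\eps^{2+\alpha})$ and $\|d_v\phi\|^2\lesssim\langle\phi,H\phi\rangle+\|\phi\|^2=\mathcal{O}(\eps^{2\beta})$. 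Since $2\beta\geq\alpha$, summing yields $\|\phi\|_{W^1_{\eps=1}}^2\sim\|\phi\|^2+\|d_h\phi\|^2+\|d_v\phi\|^2=\mathcal{O}(\eps^\alpha)$, which is the second claim.

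The main obstacle lies in the comparison of $H_\mathrm{a}$ and $H_\mathrm{eff}$ in the second paragraph: since $H_\mathrm{eff}-H_\mathrm{a}$ is an unbounded fourth-order operator one cannot relate the spectral projections by a Riesz contour integral and must instead use the quasi-mode characterisation restricted to the spectral subspace $\mathrm{ran}\,\varrho_\alpha(H_\mathrm{a})$, on which lemma~\ref{lem:low} and \eqref{eq:Hsa} make every term controllable. Tracking the two competing orders $\eps^{2-\alpha/2}$ and $\eps^{1+\alpha/2}$ (hence the minimum in $\beta$), checking the uniform equivalence of the various $D(H)$-, $D(H_\mathrm{eff})$- and graph norms on $L^2(\mathcal{E})$, and ensuring the cut-offs used when invoking theorem~\ref{thm:proj} and proposition~\ref{prop:ground} are $\eps$-independent account for most of the remaining bookkeeping.
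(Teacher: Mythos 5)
Your argument is correct and follows essentially the same route as the paper: the first bound is obtained by comparing $\psi$ with the corresponding eigenvector of $H_\mathrm{eff}$ via \eqref{eq:Hsa}, lemma~\ref{lem:low} and the $\eps^\alpha$-gap, then transporting with $U_\eps$ using the expansion of $P_\eps U_\eps-P_0$ (your three-term split is just a rearrangement of the paper's two steps $\norm{(1-P_{\tilde\lambda})\psi}$ and $\norm{(U_\eps-1)\psi}$, and yields the same two competing orders $\eps^{2-\alpha/2}$ and $\eps^{1+\alpha/2}$); the second bound is the paper's quadratic-form argument built on the identity $\langle\phi,(H-\lambda)\phi\rangle=\mu-\lambda$, positivity of $H_F$, condition~\ref{cond:low} and the elliptic estimate of theorem~\ref{thm:ellipt}. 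No gaps.
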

\begin{proof}
 The statement on the eigenvalue $\lambda$ follows directly from Proposition~\ref{prop:low eigen}. Let $P_{\tilde \lambda}$ denote the projection to the eigenspace of $H_\mathrm{eff}$ corresponding to $\mu$. Then since $\psi=P_0 \varrho_\alpha(H_\mathrm{a}) \psi$, Equation~\eqref{eq:Hsa} and Proposition~\ref{prop:low eigen} imply
 \begin{equation*}
  \norm{(1-P_{\tilde \lambda})\psi}_{D(H)} \leq
  \underbrace{\norm{(H_\mathrm{eff} - \tilde \lambda)^{-1}(1-P_{\tilde \lambda})\psi}_{\mathscr{L}(\mathscr{H},D(H))}}_{=\mathcal{O}(\eps^{-\alpha})}
 \underbrace{\norm{(H_\mathrm{eff}- \tilde \lambda) \psi}_\mathscr{H}}_{=\mathcal{O}(\eps^{2+\alpha/2})}\,.
 \end{equation*}
Now by Proposition~\ref{prop:ground}, $U_\eps^* P_{\lambda}U_\eps=P_{\tilde \lambda} + \mathcal{O}(\eps^4)$, so
\begin{equation*}
\norm{(1-P_\lambda)U_\eps \psi}_{D(H)}=\mathcal{O}(\eps^{2-\alpha/2})\,.
\end{equation*}
We also have
\begin{equation*}
 \norm{(U_\eps -1 )\psi}_{D(H)}=\norm{P_0^\perp(P_\eps - P_0)\varrho_\alpha(H_\mathrm{a})\psi}_{D(H)} + \mathcal{O}(\eps^2)\,,
\end{equation*}
which implies $\norm{(1-P_\lambda)\psi}_{D(H)}=\mathcal{O}(\eps^\beta)$ by Lemma~\ref{lem:low}.

For the second estimate, note that $\phi:=(1-P_\lambda)\psi$ satisfies Dirichlet conditions, so we can use the elliptic estimate Theorem~\ref{thm:ellipt} to obtain
\begin{align*}
&C^{-1}\norm{\phi}^2_{W^1_{\eps=1}}
\leq \norm{\phi}^2_\mathscr{H} + \langle \phi, (-\Delta_F - \Delta_h) \phi\rangle\\
&\leq  \left(1+\Lambda_0 + \norm{V}_\infty\right)\norm{\phi}^2 +
\eps^{-2}\langle \phi,\underbrace{(-\Delta_F + V - \Lambda_0)}_{\geq 0} \phi\rangle
+ \langle \phi, - \Delta_h \phi\rangle\\
&\leq \left(1+\Lambda_0+ \norm{V}_\infty + \eps^{-2}\abs{\lambda}\right)\norm{\phi}^2 + \eps^{-2}\abs{\left\langle \phi, (H - \lambda)\phi\right\rangle}
+ \eps^{-1}\abs{\left\langle \phi, H_1\phi\right\rangle}\,.
\end{align*}
Now since $P_\lambda (H-\lambda)=0$ and $\psi=P_0\psi$:
\begin{align*}
 \eps^{-2}\abs{\langle \phi, (H- \lambda)\phi \rangle}=
 \eps^{-2}\abs{\langle \psi, (H- \lambda)\psi \rangle}
 =\eps^{-2}\abs{\langle \psi, (\mu- \lambda)\psi \rangle}=\mathcal{O}(\eps^\alpha)\,,
\end{align*}
while $\eps^{-2}\abs{\lambda}\norm{\phi}_\mathscr{H}=\mathcal{O}(\eps^{2\beta+\alpha-2})=\mathcal{O}(\eps^\alpha)$. By Condition~\ref{cond:low}, the term containing $H_1$ can be bounded by 
\begin{equation*}
\eps^{-1}\abs{\left\langle \phi, H_1\phi\right\rangle} \leq \norm{V_\eps}_\infty \norm{\phi}^2_\mathscr{H} + \eps \norm{S_\eps}_\infty \norm{\phi}^2_{W^1_{\eps=1}}
\end{equation*}
and we conclude that
\begin{equation*}
 (C^{-1} -\eps \norm{S_\eps}_\infty)\norm{\phi}^2_{W^1_{\eps=1}} =\mathcal{O}(\eps^\alpha)\,,
\end{equation*}
which proves the claim.
\end{proof}

\end{document}